\numberwithin{equation}{section}
\patchcmd{\l@section}{1.0em}{0.8em}{}{}
\tikzstyle arrowstyle=[scale=1]
\tikzstyle directed=[postaction={decorate,decoration={markings,
    mark=at position .65 with {\arrow[arrowstyle]{stealth}}}}]
  \tikzset{declare function={atanXY(\x,\y)=atan2(\y,\x);atanYX(\y,\x)=atan2(\y,\x);}}
  \tikzset{declare function={atanXY(\x,\y)=atan2(\x,\y);atanYX(\y,\x)=atan2(\x,\y);}}
\renewcommand*\env@matrix[1][\arraystretch]{%
  \edef\arraystretch{#1}%
  \hskip -\arraycolsep
  \let\@ifnextchar\new@ifnextchar
  \array{*\c@MaxMatrixCols c}}
\renewcommand{\thefootnote}{\fnsymbol{footnote}}
\renewcommand{\thanks}[1]{\footnote{#1}}
\newcommand{\starttext}{
\setcounter{footnote}{0}
\renewcommand{\thefootnote}{\arabic{footnote}}}
\newcommand{\bea}{\begin{eqnarray}}
\newcommand{\eea}{\end{eqnarray}}
\newcommand{\be}{\begin{eqnarray}}
\newcommand{\ee}{\end{eqnarray}}
\newcommand{\bma}{\begin{matrix}}
\newcommand{\ema}{\cr\end{matrix}}
\def\cA{{\cal A}}
\def\cH{{\cal H}}
\def\cN{{\cal N}}
\def\cO{{\cal O}}
\def\cV{{\cal V}}
\def\bZ{{\bf Z}}
\def\ZZ{{\mathbb Z}}
\def\RR{{\mathbb R}}
\def\NN{{\mathbb N}}
\def\Tr{{\rm Tr}}
\def\det{{\rm det \,}}
\def\half{{\frac12}}
\def\p{\partial}
\def\a{\alpha}
\def\g{\gamma}
\def\eps{\epsilon}
\def\f{\varphi}
\def\m{\mu}
\def\n{\nu}
\def\l{\lambda}
\def\({\left( }
\def\){\right)}
\def\[{\left[ }
\def\]{\right] }
\def\no{\nonumber}
\def\Arf{{\mathrm{Arf}}}
\def\ABK{{\mathrm{ABK}}}
\def\Tt{{\sf T}}
\def\Rr{{\sf R}}
\def\CPT{{\sf CPT }}
\def\f{{\sf f}}
\def\fl{{\sf f_L}}
\def\fr{{\sf f_R}}
\def\FL{{\sf F_L}}
\def\FR{{\sf F_R}}
\def\F{{\sf F}}
\def\Om{ {\Omega} }
\def\NS{{\rm NS }}
\def\R{{\rm R }}
\def\NSNS{{\rm NSNS }}
\def\RR{{\rm RR }}
\def\RNS{{\rm RNS }}
\def\NSR{{\rm NSR }}
\def\II{${\rm II}$\,\,}
\def\IIA{${\rm IIA}$\,\,}
\def\IIB{${\rm IIB}$\,\,}
\def\IIAB{${\rm IIA/B}$\,\,}
\def\o{${\rm 0}$\,\,}
\def\oA{${\rm 0A}$\,\,}
\def\oB{${\rm 0B}$\,\,}
\def\oAB{${\rm 0A/B}$\,\,}
\def\bZ{\mathbb{Z}}
\def\Hom{\mathop{\mathrm{Hom}}}
\def\Arf{\mathop{\mathrm{Arf}}}
\def\Sq{\mathop{\mathrm{Sq}}\nolimits}
\def\Spin{\mathrm{Spin}}
\def\spin{\text{spin}}
\def\Pin{\mathrm{Pin}}
\def\DPin{\mathrm{DPin}}
\def\SO{\mathrm{SO}}
\def\Cl{\mathrm{Cl}}
\def\OO{\mathrm{O}}
\def\RP{\mathbb{RP}}
\def\diag{\mathop{\mathrm{diag}}}
\def\vev#1{\langle#1\rangle}
\def\l@subsubsection#1#2{}%
\newcommand{\Z}{\mathbb Z}
\newcommand{\Dpin}{\mathrm{DPin}}
\newcommand{\pt}{\mathrm{pt}}
\newcommand{\U}{\mathrm U}
\newcommand{\MTSpin}{\mathit{MTSpin}}
\DeclarePairedDelimiter{\ang}{\langle}{\rangle}
\DeclareMathOperator{\Ext}{Ext}
\newcommand{\term}{\emph}
\newcommand{\MO}{\mathit{MO}}
\newcommand{\Sph}{\mathbb S}
\newcommand{\inj}{\hookrightarrow}
\newcommand{\pinm}{pin\textsuperscript{$-$}\xspace}
\newtheorem{thm}[equation]{Theorem}
\newtheorem{lem}[equation]{Lemma}
\newtheorem{cor}[equation]{Corollary}
\def\sqtwoL (#1,#2,#3){
  \draw[#3] (#1,#2) .. controls (#1-1,#2+1) .. (#1,#2+2);
}
\def\sqtwoR (#1,#2,#3){
  \draw[#3] (#1,#2) .. controls (#1+1,#2+1) .. (#1,#2+2);
}
\def \sqtwoCR (#1,#2,#3){
   \draw[#3] (#1,#2) .. controls (#1+0.75,#2+.5) and (#1+1.125,#2+2) .. (#1+1.5,#2+2);
}
\def \sqtwoCL (#1,#2,#3){
   \draw[#3] (#1,#2) .. controls (#1-1,#2+.5) and (#1-1.5,#2+2)  .. (#1-2,#2+2);
}
\def \sqone (#1,#2,#3){
  \draw[#3] (#1,#2) -- (#1,#2+1);
}
\newcommand{\tikzpt}[4]{
	\fill[#4] (#1, #2) circle (3pt) node[anchor=east] {#3}
}
\newcommand{\Aone}[3]{
	\tikzpt{#1}{#2}{#3}{};
	\tikzpt{#1}{#2+1}{}{};
	\tikzpt{#1}{#2+2}{}{};
	\tikzpt{#1}{#2+3}{}{};
	\foreach \y in {3, ..., 6} {
		\tikzpt{#1+1.5}{#2+\y}{}{};
	}
	\sqone(#1, #2, );
	\sqone(#1, #2+2, );
	\sqone(#1+1.5, #2+3, );
	\sqone(#1+1.5, #2+5, );
	\sqtwoL(#1, #2, );
	\sqtwoCR(#1, #2+1, );
	\sqtwoCR(#1, #2+2, );
	\sqtwoCR(#1, #2+3, );
	\sqtwoR(#1+1.5, #2+4, );
}
\title{SPTs and String Worldsheets}
\author{Justin Kaidi, Julio Parra-Martinez, Yuji Tachikawa}
\date{}
\begin{document}
\noindent
\begin{minipage}{.49\linewidth}
\begin{flushleft}
UCLA/TEP/2019/106
\end{flushleft}
\end{minipage}
\begin{minipage}{.49\linewidth}
\begin{flushright}
IPMU-19-0164
\end{flushright}
\end{minipage}

\vskip 1in
\begin{center}

{\Large \bf Topological Superconductors on Superstring Worldsheets}

\vskip 0.3in

{\large Justin Kaidi$^1$, Julio Parra-Martinez$^{1,2}$, Yuji Tachikawa$^3$} 

\vskip0.1in

{\large with a mathematical appendix by Arun Debray $^4$}

\vskip 0.3in
{ \sl ${}^1$ Mani L. Bhaumik Institute for Theoretical Physics,}\\
{\sl Department of Physics and Astronomy }\\
{\sl University of California, Los Angeles, CA 90095, USA} 

\vskip 0.1in
{\sl ${}^2$ Center of Mathematical Sciences and Applications,}\\ 
{\sl Harvard University, Cambridge, MA 02138, USA}

\vskip 0.1in
 { \sl ${}^3$ Kavli Institute for the Physics and Mathematics of the Universe (WPI),}\\
{\sl University of Tokyo, Kashiwa, Chiba 277-8583, Japan}

\vskip 0.1in
{\sl ${}^4$ Department of Mathematics,}\\
{\sl  University of Texas at Austin, TX 78712, USA}

\end{center}
\vskip 0.4in
\starttext

\begin{abstract}
We point out that different choices of Gliozzi-Scherk-Olive (GSO) projections in superstring theory
can be conveniently understood by the inclusion of fermionic invertible phases, or equivalently topological superconductors, on the worldsheet.
This allows us to find that the unoriented Type $\rm 0$ string theory with $\Omega^2=(-1)^{\sf f}$ admits
different GSO projections parameterized by $n$ mod 8, 
depending on the number of Kitaev chains on the worldsheet.
The presence of $n$ boundary Majorana fermions then leads to the classification of D-branes by $KO^n(X)\oplus KO^{-n}(X)$ in these theories,
which we also confirm by the study of the D-brane boundary states.
Finally, we show that there is no essentially new GSO projection for the Type $\rm I$ worldsheet theory by studying the relevant bordism group,
which classifies corresponding invertible phases.
In two appendixes the relevant bordism group is computed in two ways.

This paper provides the details for the results announced in the letter \cite{Kaidi:2019pzj}.
\end{abstract}
\newpage
\tableofcontents

\section{Introduction and summary}
\subsection{Generalities}
In perturbative formulations of superstring theories, one treats the 2d worldsheets of strings as 2d quantum field theories with fermions.
The most common treatment is the one due to Neveu-Schwarz \cite{Neveu:1971rx} and Ramond \cite{Ramond:1971gb}, often called the NSR formalism.
There, one starts with ten bosonic fields $X^{\m=0,\ldots,9}$ and ten left- and right-moving fermionic fields $\psi^{\m=0,\ldots,9}$ and $\tilde\psi^{\m=0,\ldots,9}$.
To remove the closed-string tachyon and at the same time obtain spacetime spinors,
one must perform a crucial step called the Gliozzi-Scherk-Olive (GSO) projection \cite{Gliozzi:1976jf,Gliozzi:1976qd}.
The two Type \II superstring theories, Type \IIA and Type $\rm IIB$, arise due to a difference in the specifics of this projection.

It was pointed out in \cite{Seiberg:1986by,AlvarezGaume:1986jb} that the GSO projection can be interpreted as a sum over the possible spin structures on the worldsheet,
with different consistent GSO projections corresponding to different ways of assigning complex phases to spin structures, in a manner consistent with cutting and gluing of the worldsheet.
This point of view makes manifest the all-genus consistency of known GSO projections,
which is not evident in the one-loop analysis often presented to beginners of string theory e.g.~in \cite{Polchinski:1998rr}.
It does not, however, tell us whether we have found all possible GSO projections.
Indeed, possible consistent GSO projections for unoriented superstring theories have not been studied systematically in the past.

One approach to the enumeration of all consistent GSO projections comes from a rather unexpected place, namely from the study of symmetry-protected topological (SPT) and invertible phases of matter in condensed matter physics.\footnote{%
In the literature of high-energy physics, the terms SPT and invertible phase are often used interchangeably. 
In condensed matter physics they have subtly different connotations.
In this paper we stick to the terminology of invertible phases, which are more directly relevant for our purposes.
} 
For the purposes of this paper, an invertible phase for a symmetry $G$ can be defined as a system which has a one-dimensional gapped vacuum on any closed spatial manifold  with a background field for $G$.
By taking the infrared limit, one can then isolate a quantum system whose entire Hilbert space on any closed spatial manifold with any background field is one-dimensional and contains the vacuum only,
with its partition function simply a complex phase.
%Such theories are called invertible phases, or invertible quantum field theories.
We note that for an internal symmetry, the background field is simply a non-dynamical gauge field for the symmetry.
On the other hand, for fermion number symmetry the background field is the spin structure,
while for time-reversal symmetry the background field is the ``un-oriented-ness'' of the spacetime.

An invertible phase provides a method for assigning complex phases to spin and other structures on a manifold, in a manner consistent with cutting and gluing.
Conversely, any such assignment corresponds to an invertible phase.
Therefore, if one can classify invertible phases, one can classify all consistent GSO projections.
Invertible phases with fermion number symmetry and some additional discrete symmetries are usually called topological superconductors in the condensed matter literature.
Therefore, the classification of GSO projections is equivalent to the classification of topological superconductors in $(1+1)$ dimensions.

The classification of invertible phases has been an important topic of recent research in theoretical condensed matter physics. 
As will be reviewed below, a general answer in terms of bordism groups has been obtained, see e.g.~\cite{Kapustin:2014dxa,Freed:2016rqq,Yonekura:2018ufj}. The upshot is that with this result, we can now carry out the classification of possible GSO projections on a given worldsheet,
once the structure on said worldsheet is specified.

One important feature of a nontrivial invertible phase in $d$ dimensions for a symmetry $G$ is that, 
if it is put on a spacetime with boundary, the $(d-1)$-dimensional boundary theory  necessarily hosts  nontrivial degrees of freedom.
In particular, if $G$ is unbroken on the boundary, the $G$-symmetric boundary theory carries a corresponding $G$-anomaly.
One familiar case is that of the chiral anomaly of a fermion in spacetime dimension $2n$, which is captured by a Chern-Simons term in $2n+1$ dimensions.
This is known as anomaly inflow \cite{Faddeev:1985iz,Callan:1984sa}.
The current understanding is that all anomalies\footnote{%
Some qualifications need to be added to this blanket statement. 
First, this framework is mostly about the anomalies of partition functions,
and therefore does not immediately describe the conformal anomaly.
Second, anomalies of supersymmetry are less well understood, and it is not clear whether they can be described by bulk invertible phases.
That said, neither is conclusively outside of this framework. 
As for the first, the anomaly described by the Kitaev chain is about the impossibility of quantizing a single Majorana fermion, which is also not directly about the phase of the partition function.
As for the second, the supersymmetry anomaly recently found in \cite{Katsianis:2019hhg,An:2019zok}, which is a superpartner of  the anomaly in R-symmetry, was first found in the context of AdS/CFT \cite{Papadimitriou:2017kzw}.
%Also related is the fact that it is in the same supermultiplet.
We also note that the shortening anomaly of \cite{Gomis:2016sab} is related to the fact that the scalar target space of the holographic supergravity dual is often not K\"ahler.
All this suggests that these anomalies might also be described in a suitable generalization of the current framework. 
It would be interesting to work this out.
} in $(d-1)$ dimensions, both local and global,
can be characterized in terms of invertible phases in $d$ dimensions.

Applying this observation to the worldsheets of superstrings,
we conclude that different invertible phases, i.e.~different GSO projections,
will require different boundary conditions on the edges of worldsheets.
Since the boundaries of worldsheets describe the D-branes to which strings attach,
this means that the properties of D-branes reflect the choice of GSO projection.
The discussions up to this point can be summarized schematically as follows:
\begin{equation}
\begin{array}{cccc}
&\text{boundary anomaly} & : & \text{bulk invertible phase} \\
\sim & \text{properties of D-branes} & : & \text{choice of GSO projection.}
\end{array}
\end{equation}

\subsection{GSO projections and K-theory classification of D-branes}
Let us now be more concrete. 
First, we recall the classification of invertible phases in terms of bordism groups.\footnote{For other recent applications of bordism groups to high-energy theory, see \cite{Garcia-Etxebarria:2018ajm,Wan:2019oyr,Guo:2018vij,Davighi:2019rcd,McNamara:2019rup,Cordova:2019bsd,Cordova:2019wpi,Hason:2019akw,Wang:2019obe}.}
Let $X$ denote collectively the structure on the spacetime, appropriate for the systems we would like to classify.
For example, $X$  consists of a spin structure and a $G$ gauge field
 for systems with fermion number symmetry and an internal symmetry $G$.
We define the $X$-bordism group $\Omega_d^X$ in dimension $d$
to be \begin{equation}
\Omega_d^X := \{\text{$d$-dimensional manifolds with $X$ structure}\}/\sim,
\end{equation}
where the equivalence relation is introduced so that 
$M_d \sim M_d'$ if and only if there exists $N_{d+1}$ with the structure $X$ such that 
$\partial N_{d+1}$ has $M_d$ as the incoming boundary and $M_d'$ as the outgoing boundary. The group structure is given by the disjoint union.
Then, the topological invertible phases in spacetime dimension $d$ are classified by \cite{Kapustin:2014dxa,Freed:2016rqq,Yonekura:2018ufj} \begin{equation}
\mho^d_X := \Hom(\Omega_d^X,U(1)).
\end{equation}
This simply means that the topological invertible phase for an element $\alpha\in \mho^d_X$ assigns the partition function $\alpha(M_d)\in U(1)$
in such a way that it only depends on the bordism class $[M_d]\in \Omega_d^X$.\footnote{%
More precisely, $\mho^d_X$ as defined here classifies invertible phases whose partition functions do not depend continuously on the background fields.
Therefore it includes e.g.~the 2d theta term $\int \theta F/(2\pi)$, which only depends on topological data, 
but it does not include e.g.~the 3d gravitational Chern-Simons term, which does depend continuously on the metric.
The latter is accounted for by considering $(D\Omega^X)^{d+1}$ instead, where $D$ denotes the Anderson dual.
This group classifies the deformation classes of invertible phases which can depend continuously on the background fields.
Since we take the deformation classes, $(D\Omega^X)^{d+1}$ does not include the theta term, which can be continuously varied.
The torsion parts of both groups coincide: $\mathrm{Tors} \Hom(\Omega_d^X,U(1)) = \mathrm{Tors} (D\Omega^X)^{d+1}$, since torsion invertible phases are discrete and cannot depend continuously on the background data.
Both $\Hom(\Omega_d^X,U(1))$  and $(D\Omega^X)^{d+1}$ are generalized cohomology theories, 
but their common torsion part is not. 
This unfortunately makes the torsion part less mathematically natural.
For all the cases of interest to us in this paper, all bordism groups  are finite,
so these subtle differences can and will be ignored.
}

In this paper we will encounter the following structures on the worldsheet: 
spin structure for oriented Type \o strings,
$\Spin \times \ZZ_2$ structure for oriented Type \II strings,
$\Pin^\pm$ structure for two types of unoriented Type \o strings,
and ``double pin'' or $\DPin$ structure for Type $\rm I$ strings.
We remark here that originally the $\rm II$, $\rm I,\rm 0$ in Type $\rm II, I,0$ strings referred to the number of supersymmetries in ten dimensions.
Contrary to this usage, in this paper we refer to any NSR strings with independent GSO projections on left- and right-moving spin structures as Type $\rm II$, and to any NSR strings with diagonal GSO projections as Type $\rm 0$.
Type $\rm I$ strings will then be defined as Type $\rm II$ NSR strings on unoriented worldsheets.

\begin{table}[!tbp]
\begin{center}
\begin{tabular}{c|ccccc}
\hline
$d$ &   $\mho^d_{\Spin}(pt)$ &    $\mho^d_{\Spin}(B\ZZ_2)$& $\mho^d_{\Pin^-}(pt)$ &  $\mho^d_{\Pin^+}(pt)$ & $\mho^d_{\DPin}(pt)$  
\\\hline
2 &  $\ZZ_2$&$\ZZ_2^{2}$ & $\ZZ_8$&$\ZZ_2$ &$\ZZ_2^2$  
\\
3 & 0& $\ZZ_8$& 0& $\ZZ_2$&$\ZZ_8$ 
 \\\hline
\end{tabular}
\end{center}
\caption{Dual bordism groups relevant to our analysis. 
The first four columns are classic \cite{KirbyTaylor,Kapustin:2014dxa}.
The last column is new.}
\label{table1}
\end{table}%

The relevant dual bordism groups are listed in Table~\ref{table1}.
There we have used a slightly more general notation, with $\Omega^X_d(Y)$ representing the bordism group of $d$-dimensional manifolds with $X$ structure equipped with a map to $Y$.
Then for example a structure $X'$ consisting of spin structure and a $\bZ_2$ gauge field
can equivalently be thought of as having $(X,Y)=(\text{spin structure},B\bZ_2)$,
where $B\bZ_2$ is the classifying space of $\bZ_2$ gauge fields,
and therefore $\Omega^\text{$X'$}_d = \Omega^{\Spin}_d(B\bZ_2)$.
Similarly, letting $pt$ stands for a point, we have $\Omega^X_d=\Omega^X_d(pt)$.

Let us begin by discussing the group $\mho^2_{\Spin}(pt)=\bZ_2$.
The nontrivial element is mathematically known as the Arf invariant, and assigns to a surface $\Sigma$ with a choice of spin structure $\sigma$ a sign $(-1)^{\Arf(\Sigma,\,\sigma)}$.
A spin structure is called even or odd depending on whether this sign is $+1$ or $-1$.
There are many mathematical and physical definitions of the Arf invariant, one of which is as the number modulo two of zero modes of the Dirac operator on the surface $\Sigma$ with spin structure $\sigma$ \cite{Atiyah}.
From this definition, we see easily that $(-1)^{\Arf(T^2,\,\sigma)}$ is $-1$ if and only if the spin structure $\sigma$ is periodic along both cycles of the torus $T^2$.
There is also a combinatorial definition \cite{Arf,Johnson}, which we will recall below.

In the continuum quantum field theory language, the Arf invariant may be written in terms of the partition function of a mass $m$ Majorana fermion on $(\Sigma,\sigma)$ as $Z_{\rm ferm}( m\gg 0; \Sigma, \sigma)/Z_{\rm ferm}(m\ll 0; \Sigma, \sigma)$.
In general, the infinite-mass limit of  a fermion partition function is known as an $\eta$-invariant in the mathematics literature,
meaning that the Arf invariant is an example of an $\eta$-invariant.
There is also a discretized Hamiltonian version of this massive Majorana fermion defined on a spin chain --- this is known as the Kitaev chain \cite{Kitaev:2001kla}.
In both of these descriptions, it is easy to argue that one needs a single Majorana fermion on the $(0+1)$d boundary of the $(1+1)$d system hosting the Arf invariant theory.

A single Majorana fermion cannot be consistently quantized, since two Majorana fermions act on a two-dimensional Hilbert space irreducibly.
This means that, assuming that this Hilbert space is the tensor product of two copies of the Hilbert space for a single fermion, the single-fermion Hilbert space would need to have dimension $\sqrt2$.
This is one manifestation of the anomaly of the boundary theory,
and will turn out to explain the difference by a factor of $\sqrt2$ between the tensions of ${\rm D}9$-branes in Type \IIA and Type \IIB theories,
originally found in \cite{Sen:1999mg}.

Without the Arf invariant on the worldsheet, as will be the case for Type \IIB strings, the endpoints of open strings will naturally couple to unitary bundles.
Consideration of tachyon condensation motivates one to introduce an equivalence relation on unitary bundles,
leading to the statement that stable D-branes on $X$ are classified by complex K-theory $K^0(X)$  \cite{Witten:1998cd}.
On the other hand, in the presence of the Arf invariant the boundary of the worldsheet needs unitary bundles together with an additional Majorana fermion,
or equivalently with an action of the complex Clifford algebra $\Cl(1,\mathbb{C})$.
Unitary bundles with an action of $\Cl(1, \mathbb{C})$, under a suitable equivalence relation implementing tachyon condensation,
are classified by $K^1(X)$, thus reproducing the known classification of the D-branes in the Type \IIA theory \cite{Witten:1998cd,Horava:1998jy}.

We next discuss the effects of including the topological superconductor corresponding to the group $\mho^2_{\Pin^-}(pt)=\bZ_8$ on worldsheets.
The worldsheets can now be nonorientable and are equipped with a $\Pin^-$ structure.
This structure will be seen to be compatible with the Type $\rm 0$ string, but not with the Type $\rm I$ string.
The generator of the group $\bZ_8$ of time-reversal invariant topological superconductors is again the Kitaev chain, but now with the added assumption of time-reversal invariance.
The invertible phase corresponding to $n$ modulo 8 is simply $n$ copies of the Kitaev chain,
and has $n$ time-reversal symmetric Majorana fermions on the boundary.
Physically, the reason that we need only consider $n$ modulo $8$ is that one can introduce a four-fermi interaction to the $n=8$ theory which gives rise to a theory with unique ground state \cite{Fidkowski:2009dba}. In continuum field theory language, the effective action describing the basic non-trivial phase is the Arf-Brown-Kervaire (ABK) invariant, to be discussed below.

Roughly speaking, with $n$ copies of the time-reversal-symmetric Kitaev chain on the worldsheet, open string endpoints can now couple to orthogonal bundles with an action of $n$ time-reversal invariant Majorana fermions, or equivalently with an action of the real Clifford algebra $\Cl(n,\mathbb{R})$.
Tachyon condensation then leads to the classification of D-branes by $KO^n(X)$.
As we will see, a more careful analysis reveals that the classification is in fact by $KO^n(X)\oplus KO^{-n}(X)$.

In the case of Type $\rm I$ strings, the natural way to specify the worldsheet fermions is to consider chiral fermions on the orientation double cover of the worldsheet.
This leads to a structure which we call ``double pin'' structure, since it will be shown to contain both $\Pin^\pm$ as subgroups.
We will find by a standard algebraic topology computation that any invertible phase one can add on the worldsheet is either the Arf invariant associated to the orientation double cover, or a continuum version of the Haldane chain.
We will find that the Arf invariant on the double cover can be removed by performing a spacetime parity transformation along one direction, meaning that it does not give rise to physically distinct theories. 
On the other hand, the invertible phase corresponding to the low energy limit of the $S=1$ Haldane chain\cite{Pollmann_2012,OshikawaTalk}, whose partition function counts the number of $\mathbb{RP}^2$ modulo 2, gives rise to the difference between ${\rm O}9^\pm$-orientifold planes, thus differentiating between Type $\rm I$ and $\rm \tilde I$ worldsheet theories.

We note in passing that Ryu and Takayanagi have pointed out in \cite{Ryu:2010hc,Ryu:2010fe} that the periodic table \cite{Kitaev:2009mg,Ryu:2010zza} of free topological superconductors and topological insulators, based on K-theory and KO-theory, can be naturally realized by considering D-branes in string theory.
In those works the topological superconductors were realized on brane worldvolumes, whereas here we consider the topological superconductors on string worldsheets.

\subsection*{Organization}

The aim of this paper is to give details on the results presented thus far in the Introduction.
The rest of this paper is organized as follows.

In Sec.~\ref{sec:topsupercond},
we begin by reviewing the necessary preliminary material concerning topological superconductors with several variants of spin structure.
In Sec.~\ref{sec:closed},
we study the effects of worldsheet invertible phases on massless closed string and D-brane spectra.
In the next two sections,
we give a more detailed analysis of the classification of D-branes.
This is done from two different perspectives:
in Sec.~\ref{sec:Dbraneopen} we study D-branes via boundary fermions and tachyon condensation,
while in Sec.~\ref{sec:Dbraneclosed} we utilize the boundary state formalism.
The final section Sec.~\ref{sec:typeI} is devoted to the algebraic-topological study of the possible invertible phases on the Type $\rm I$ worldsheet.

Additional background information and details of calculations are given in the appendices.
In Appendix~\ref{App:RNS}, we briefly review the NSR formulation of superstring theory.
In Appendix~\ref{boundstateapp}, we provide a short review of the boundary state formalism necessary for calculations in Sec.~\ref{sec:Dbraneclosed}. 
The results of this appendix are also utilized in Appendix~\ref{app:tadcanc}, in which we discuss the issue of tadpole cancellation for some of the Type \o theories discussed in this paper.
The final three appendices are more mathematical. 
In Appendix \ref{App:indextheory} we reobtain many of our results for $\Arf$ and $\ABK$ invariants by means of index theory. Much of this appendix is due to E. Witten \cite{Wittenemails}. 
In Appendix~\ref{App:AHSS} we provide the technical details of the algebraic-topological computation used in Sec.~\ref{sec:typeI}, which uses the Atiyah-Hirzebruch spectral sequence.
In Appendix~\ref{App:Adams}, written by Arun Debray, we explain another computation of the same bordism group via the Adams spectral sequence.

The results of this paper were already announced in a short letter \cite{Kaidi:2019pzj} by the same authors.
The authors also note that when said work was nearing completion, 
they were informed\footnote{%
The authors thank Kantaro Ohmori and Matthew Heydeman for this information.
}
 that E. Witten has an unpublished work with large overlap with theirs; 
two seminars he gave can be found in  \cite{WittenSCGP,WittenStanford}, and are highly recommended for their clarity in presentation.

%\newpage
\section{The (1+1)d topological superconductors}
\label{sec:topsupercond}

The main topic of this paper is the addition of fermionic invertible phases to oriented and unoriented string worldsheets. In this section we begin by reviewing some basic facts about fermions and (s)pin structures (see \cite{Seiberg:1986by,Witten:2015aba} for more detailed reviews), as well as about the known invertible phases for $\Spin$, $\Spin\times\bZ_2$, $\Pin^-$, and $\Pin^+$ structures \cite{Kapustin:2014dxa}.
Also of importance to us will be ``double pin'' or $\DPin$ structure, though we postpone a discussion of this to Section \ref{sec:typeI}.

Many of the results that we obtain via combinatoric methods in this section can also be obtained via index theory, i.e.~by studying the properties of free fermions. 
For completeness, we discuss this approach in Appendix \ref{App:indextheory}. 

\subsection{Oriented invertible phases}

On an oriented $d$-manifold $M$ the structure group of the tangent bundle $TM$ is $\SO(d)$. In order to consider fermions on $M$, we need to lift $\SO(d)$ to its double cover $\Spin(d)$ as specified by the short exact sequence
\begin{equation}
  \label{eq:spinsec}
  0\to \bZ_2 \to \Spin(d) \to \SO(d) \to 0\,.
\end{equation}
There might be an obstruction to doing so, which is captured by the second Stiefel-Whitney class of $TM$, i.e. $w_2(TM)\in H^2(M,\ZZ_2)$. If this class is trivial, we say that the manifold admits a spin structure. Such a spin structure is generically not unique. Given a spin structure, we can obtain another one by twisting by an element of $H^1(M,\ZZ_2)$. 

In this paper, we focus on two-dimensional manifolds $\Sigma$, which we take to be the worldsheet of a string. Any orientable two-manifold admits a spin structure since $w_2(T\Sigma)=w_1^2(T\Sigma)\mod 2$. For notational simplicity, from now on we write $w_i:=w_i(T\Sigma)$.

\subsubsection{Invertible phase for \texorpdfstring{$\Spin$}{Spin}}
\label{sec:SpinSPT}
Our primary interest is in fermionic invertible phases, i.e. phases which depend on a choice of spin structure $\sigma$ on $\Sigma$. In the absence of any symmetry besides fermion number $(-1)^{\f}$, the group capturing such phases is $\mho^2_{\Spin}(pt)= \ZZ_2$. The effective action for the corresponding fermionic invertible phase can be written in terms of the Arf invariant \cite{Kapustin:2014dxa},
\bea
\label{Soriented}
e^{2 i \pi  S_\text{eff}(\Sigma,\,\sigma)} =(-1)^{\Arf(\Sigma,\,\sigma)}
\eea
where $\Arf(\Sigma,\sigma)$ is defined modulo 2.
For simplicity, we will often leave the dependence on $\sigma$ implicit.

As discussed in the Introduction, this phase is a continuum version of the Kitaev chain \cite{Kitaev:2001kla}.
In the continuum field theory language, this corresponds to the definition 
\cite{Witten:2015aba}
\begin{equation}
\label{eq:ArffromZferm}
(-1)^{ \Arf(\Sigma,\,\sigma)} := \frac{Z_\text{ferm}(m \gg 0;\Sigma,\sigma)}{Z_\text{ferm}(m \ll 0;\Sigma, \sigma) },
\end{equation}
where $Z_\text{ferm}(m; \Sigma,\sigma)$ is the partition function of a free massive Majorana fermion of mass $m$. 
To see that the right-hand side is $\pm1$, 
we note that the non-zero eigenvalues $E$ of the Dirac operator $D$ comes in pairs $\pm E$,
since $\Gamma:=\gamma^1\gamma^2$ is globally well-defined on an oriented spin surface
and $D\Gamma=-\Gamma D$.
Therefore,  \begin{equation}
\frac{Z_\text{ferm}(+m)} {Z_\text{ferm}(-m) }
=\prod_{E=0 }\left(\frac{iE+ m}{iE -m}\right) \prod_{E>0}  \frac{(iE+m)(-iE+m)}{ (iE-m)(-iE-m)}
=(-1)^{\mathop{\mathrm{index}} D}.
\label{-1}
\end{equation}

The Arf invariant can also be defined combinatorially.
To do so, given a spin structure $\sigma$ on $\Sigma$, 
we define $\tilde q(a)\in \ZZ_2$ for each $\ZZ_2$-valued 1-cocycle $a$ on $\Sigma$
by  taking a non-self-intersecting 1-cycle $A$  Poincar\'e dual to it
and declaring 
\begin{equation}
\tilde q(a)=\begin{cases}
0 & \text{if the spin structure around $A$ is $\NS$},\\
1 & \text{if the spin structure around $A$ is $\R$}~.
\end{cases}
\end{equation}
This function $\tilde q(a)$ is known as a quadratic form and satisfies 
\bea
\label{spinstrucrel}
\tilde{q}(a+b) - \tilde{q}(a) - \tilde{q}(b) = \int_\Sigma a \cup b.
\eea
There is a one-to-one correspondence between such quadratic forms and spin structures \cite{Johnson}.

The Arf invariant can be defined in terms of this quadratic form as follows,
\bea
\label{Arfdef}
(-1)^{\Arf(\Sigma,\,\sigma)}:= \frac{1}{\sqrt{|H^1(\Sigma, \ZZ_2)|}}\sum_{a\in H^1(\Sigma, \ZZ_2)} (-1)^{\tilde{q}(a)}.
\eea
To see that the right-hand side is $\pm1$,
we consider its square:
\bea
\text{RHS}^2&=&\frac{1}{{|H^1(\Sigma, \ZZ_2)|}}\sum_{a,b\,\in H^1(\Sigma, \ZZ_2)} (-1)^{  \tilde{q}(a)+ \tilde{q}(b) }\no\\
&=&\frac{1}{{|H^1(\Sigma, \ZZ_2)|}}\sum_{a,b\,\in H^1(\Sigma, \ZZ_2)} (-1)^{  \tilde{q}(a+b)+\int_\Sigma a\cup b }\no\\
&=&\frac{1}{{|H^1(\Sigma, \ZZ_2)|}}\sum_{a,c\,\in H^1(\Sigma, \ZZ_2)} (-1)^{  \tilde{q}(c)+\int_\Sigma a\cup c}
\eea
where we have defined $c = a+ b \in H^1(\Sigma, \ZZ_2)$
and used that $\int_\Sigma a \cup a = 0$ for an orientable manifold. 
When $c=0$, the summand is $1$ and the sum contributes a factor of $|H^1(\Sigma, \ZZ_2)|$. 
When $c \neq 0$ there are equally many $\int_\Sigma a \cup c=0,1$ contributions by assumption of a non-degenerate intersection pairing, and hence these contributions cancel out. 
Thus we find that $\text{RHS}^2=1$.

We now focus on the torus $T^2$, which admits four spin structures. 
We begin by listing all elements of $H^1(T^2, \ZZ_2)$, which is an order four group containing $\{0,a,b,a+b\}$. 
Here $a$ and $b$ are mod 2 Poincar{\'e} duals of the $A$- and $B$-cycles of the torus, respectively. Then using formula (\ref{Arfdef}), we have 
\bea
(-1)^{\Arf(T^2)} &=& \frac{1}{\sqrt{4}} \left( 1 + e^{i\pi \tilde{q}(a) }+e^{i\pi \tilde{q}(b) }+e^{i\pi \tilde{q}(a+b) }\right)
\no\\
&=&\frac12 \left( 1 + e^{i\pi \tilde{q}(a) }+e^{i\pi \tilde{q}(b) }-e^{i\pi \tilde{q}(a) }e^{i\pi \tilde{q}(b) }\right),
\eea
where we have made use of (\ref{spinstrucrel}) and noted that $\int a\cup b = 1$ for the two 1-cycles of the torus. 
Depending on the spin structure, one has $(\tilde{q}(a),\tilde{q}(b)) \in \{(0,0),(0,1),(1,0),(1,1)\}$, for which we find that $(-1)^{\Arf(T^2)}$ assigns
\begin{align}
\begin{split}
\label{ArfonTorus}
&(0,0):  \quad
\begin{tikzpicture} [baseline=3.5ex,scale=1.5]
    \draw[dashed,gray,semithick] (0,0.3)--(1,0.3);
    \draw[dashed,gray,semithick] (0.3,0)--(0.3,1);
  \draw[postaction={decorate},decoration={
    markings,
    mark=at position .145 with {\arrow{latex}},
    mark=at position .385 with {\arrow{latex}},
    mark=at position .405 with {\arrow{latex}},
    mark=at position .605 with {\arrowreversed{latex}},
    mark=at position .855 with {\arrowreversed{latex}},
    mark=at position .875 with {\arrowreversed{latex}}
  },semithick
  ]
    (0,0) -- +(1,0) -- +(1,1) -- +(0,1) -- cycle;
  \end{tikzpicture}\,\, \mapsto +1,
\hspace{0.5in} 
(0,1):  \quad
\begin{tikzpicture} [baseline=3.5ex,scale=1.5]
    \draw[dashed,gray,semithick] (0,0.3)--(1,0.3);
    \draw[red,semithick] (0.3,0)--(0.3,1);
  \draw[postaction={decorate},decoration={
    markings,
    mark=at position .145 with {\arrow{latex}},
    mark=at position .385 with {\arrow{latex}},
    mark=at position .405 with {\arrow{latex}},
    mark=at position .605 with {\arrowreversed{latex}},
    mark=at position .855 with {\arrowreversed{latex}},
    mark=at position .875 with {\arrowreversed{latex}}
  },semithick
  ]
    (0,0) -- +(1,0) -- +(1,1) -- +(0,1) -- cycle;
  \end{tikzpicture}\,\, \mapsto +1,
  \\[0.5cm]
&(1,0):  \quad
\begin{tikzpicture} [baseline=3.5ex,scale=1.5]
    \draw[red,semithick] (0,0.3)--(1,0.3);
    \draw[dashed,gray,semithick] (0.3,0)--(0.3,1);
  \draw[postaction={decorate},decoration={
    markings,
    mark=at position .145 with {\arrow{latex}},
    mark=at position .385 with {\arrow{latex}},
    mark=at position .405 with {\arrow{latex}},
    mark=at position .605 with {\arrowreversed{latex}},
    mark=at position .855 with {\arrowreversed{latex}},
    mark=at position .875 with {\arrowreversed{latex}}
  },semithick
  ]
    (0,0) -- +(1,0) -- +(1,1) -- +(0,1) -- cycle;
  \end{tikzpicture}\,\, \mapsto +1,
\hspace{0.5in}
(1,1):  \quad
\begin{tikzpicture} [baseline=3.5ex,scale=1.5]
    \draw[red,semithick] (0,0.3)--(1,0.3);
    \draw[red,semithick] (0.3,0)--(0.3,1);
  \draw[postaction={decorate},decoration={
    markings,
    mark=at position .145 with {\arrow{latex}},
    mark=at position .385 with {\arrow{latex}},
    mark=at position .405 with {\arrow{latex}},
    mark=at position .605 with {\arrowreversed{latex}},
    mark=at position .855 with {\arrowreversed{latex}},
    mark=at position .875 with {\arrowreversed{latex}}
  },semithick
  ]
    (0,0) -- +(1,0) -- +(1,1) -- +(0,1) -- cycle;
  \end{tikzpicture}\,\, \mapsto -1.
\end{split}
\end{align}
Here, we have represented the spin structure by lines on the torus --- a grey dashed line means that fermions are anti-periodic i.e.~$\NS$ in the normal direction, whereas solid red lines means that fermions are periodic i.e.~$\R$ in the normal direction. From the perspective of canonical quantization the red lines can be interpreted as insertions of $(-1)^{\f}$ symmetry defects/operators.

On a manifold with boundary, the bulk invertible phase requires the presence of an odd number of Majorana fermions on each boundary.  As reviewed in Section~\ref{sec:1dfermion}, there is no canonical way to quantize an odd-dimensional Clifford algebra. This can be thought of as an anomaly of the boundary system, which is compensated by the presence of the bulk invertible phase.

\subsubsection{Invertible phase for \texorpdfstring{$\Spin \times \ZZ_2$}{Spin x Z2}}
\label{sec:spinZ2}
We will also need to consider $\Spin \times \ZZ_2$-structures on $\Sigma$. 
These are given by a choice of spin structure $\sigma$ and a $\ZZ_2$ bundle with background gauge field $a\in H^1(\Sigma,\ZZ_2)$.
As mentioned above, $ H^1(\Sigma,\ZZ_2)$ acts on the space of spin structures, so the choice of $(\sigma, a)$ is equivalent to a choice of two separate spin structures  $(\sigma_L, \sigma_R):=(\sigma,\sigma+a)$, where we take $\sigma_L$ and $\sigma_R$ to be the left- and right-moving spin structures.
 The corresponding invertible phases are classified by $\mho^2_{\Spin}(B\ZZ_2)= \ZZ_2^2$, which is generated by the separate Arf invariants for $\sigma_L$ and $\sigma_R$:
\bea
\label{SorientedZ2}
(-1)^{ \Arf(\Sigma, \, \sigma_L)}\,,  \quad (-1)^{ \Arf(\Sigma,\,\sigma_R)}\,.
\eea
The discussion of each of these phases is identical to that in the previous section.

\subsection{Unoriented invertible phases}
\label{sec:unorSPT}
We would now like to discuss invertible phases which can be formulated on unoriented manifolds.
They can be thought of as phases protected by the action of time-reversal $\Tt$. 
 
 \subsubsection{Invertible phase for orientation}
\label{sec:w1SPT}
Before considering fermionic phases protected by $\Tt$,
let us discuss bosonic phases protected by $\Tt$. 
 The structure group of the tangent bundle of an unoriented $d$-manifold $M$ is $\OO(d)$, which cannot be reduced to $\SO(d)$. 
 The obstruction to doing so is given by the first Stiefel-Whitney class of the tangent bundle $w_1\in H^1(M,\ZZ_2)$. For a 1-cycle $C$, we have
\begin{equation}
  \oint_C w_1 = 
  \begin{cases}
  0 &  \text{if going around $C$ preserves orientation},\\
  1 &\text{if going around $C$ reverses orientation}.
\end{cases}
\end{equation}

Bosonic unoriented phases on the worldsheet are classified by $\mho^2_{O}(pt)=\ZZ_2$, the generator of which is the low-energy limit of the $S=1$ Haldane chain \cite{Pollmann_2012,OshikawaTalk}. 
In the continuum field theory language, the effective action for this phase is\footnote{This is also occasionally written as $(-1)^{\chi(\Sigma)}$, where $\chi(\Sigma)$ is the Euler characteristic of $\Sigma$. The reason for this is that $w_1^2$ is equal to the mod-two reduction of the Euler class $e$, which satisfies $\int_{\Sigma} e = \chi(\Sigma)$. We prefer writing this phase in terms of Stiefel-Whitney classes in order to make bordism invariance manifest.}
\bea
e^{2 \pi i S_\text{eff}(\Sigma)} = (-1)^{\int_{\Sigma} w_1^2}~.
\label{Sunorientedbos}
\eea
The generator of the  bordism group $\Omega_2^{O}(pt)$ is the projective plane $\mathbb{RP}^2$, i.e. $e^{2 \pi i S_\text{eff}(\mathbb{RP}^2)} = -1$. 
One can easily calculate the value of the action on any other manifold by counting the number of constituent $\mathbb{RP}^2$ appearing in its connected sum decomposition (mod 2). 
For instance, the M{\"o}bius strip $M_2$ is a connected sum of the disc and the projective plane, i.e. $M_2 \cong D_2 \# \mathbb{RP}^2$. Hence
\begin{equation}
  e^{2 \pi i S_\text{eff}(M_2)} = -1.
\end{equation}
On the other hand the Klein Bottle $K_2$ is a connected sum of two copies of the projective plane, i.e. $K_2 \cong \mathbb{RP}^2\#\mathbb{RP}^2$, and 
\begin{equation}
  e^{2 \pi i S_\text{eff}(K_2)} = 1.
\end{equation}

When considered on a manifold with boundary, this phase captures the time-reversal anomaly of the $(0+1)$d boundary theory. 
This anomaly can be carried by a bosonic Kramers doublet on the boundary,
i.e.~we have $\Tt^2=-1$ instead of $\Tt^2=+1$.

\subsubsection{Pin structures}

In order to describe fermionic invertible phases protected by $\Tt$, we need to briefly review how to put fermions on an unoriented manifold. 
There exist two different lifts of the $\OO(d)$ bundle, known as $\Pin^{\pm}(d)$, which fit into the short exact sequence 
\begin{equation}
  0\to \bZ_2 \to \Pin^{\pm}(d) \to \OO(d) \to 0\,.
\end{equation}
Both $\Pin^{\pm}(d)$ contain $\Spin(d)$ as their component connected to the identity, and their difference
lies in how time-reversal $\Tt$ and spatial reflection $\Rr$ lift,
\begin{equation}
\begin{aligned}
  \Pin^+:& \quad \Tt^2=(-1)^{\f}\,, \hspace{1 in} \Rr^2=1 \,.\\
  \Pin^-:&  \quad \Tt^2=1\,, \hspace{1.3 in}   \Rr^2=(-1)^{\f}\,.
\end{aligned}
\label{TR}
\end{equation}
The corresponding obstruction classes are
\begin{align}
  \Pin^+: & \quad w_2 \,, \hspace{1 in}
  \Pin^-:   \quad w_2+w_1^2\,.
\end{align}
Every two-manifold $\Sigma$ has $w_2+w_1^2=0\mod 2$, and hence admits a $\Pin^-$ structure. 
However, the same is not true for $\Pin^+$. For instance, the real projective plane $\mathbb{RP}^2$ has $w_2\neq0$, and so it does not admit a $\Pin^+$ structure.

The action of $\Pin^\pm$ on fermions $\Psi$ can be given in terms of gamma matrices.
In particular, reflection of the $i$-th coordinate acts on $\Psi$ by the gamma matrix $\gamma_i$.
The reflection squared is trivial in $\OO(d)$,
and therefore its lift when applied to a fermion is $\pm1$.
Then  we have \begin{equation}
\{\gamma_i,\gamma_j\} = \pm 2\eta_{ij},
\end{equation} 
for $\Pin^\pm$ structure, respectively,
where we temporarily use the Lorentzian signature and the metric $\eta_{ij}$ is mostly plus.
This explains why $\Tt$ squares to $(-1)^{\f}$ when $\Rr$ squares to $1$ and vice versa,
as written in \eqref{TR}.
The fermion fields $\Psi$ transforming in this manner are sometimes called ``pinors.''

In our study of unoriented string amplitudes, the behavior of pinors on the boundary of the M{\"o}bius strip with $\Pin^\pm$ structure will be particularly important. 
Recall that on a circle, one can  consistently define both anti-periodic and periodic fermions, i.e. fermions in the $\NS$ and $\R$ sectors. 
In contrast, the choice of $\NS$ or $\R$ on the boundary of the M{\"o}bius strip is fixed by the choice of $\Pin^\pm$. 
We may see this as follows. 
Note that the M{\"o}bius strip can be constructed by taking a strip and gluing its ends together along an orientation-reversing line (Fig. \ref{fig:Mobiusstrip}) --- upon crossing this line, we pick up an action of $\g_i$ on fermions. 
Traversing the boundary of the M{\"o}bius strip involves crossing this line {twice}, and this picks up an action of $\g_i^2 = \pm1$ on fermions. 
Thus we conclude that boundary fermions on the $\Pin^+$ M{\"o}bius strip are in the $\R$ sector,  while those on the $\Pin^-$ M{\"o}bius strip are in the $\NS$ sector.

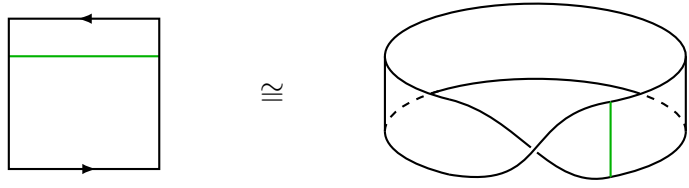
\begin{figure}[!tbp]
  \centering
  \begin{tikzpicture}[thick]
    \draw[black!30!green,cm={cos(0) ,-sin(0) ,sin(0) ,cos(0) ,(-7 cm,-1.5 cm)}] (0,1.5)--(2,1.5);
  \draw[postaction={decorate},decoration={
    markings,
    mark=at position .145 with {\arrow{latex}},
    mark=at position .635 with {\arrow{latex}},
  },cm={cos(0) ,-sin(0) ,sin(0) ,cos(0) ,(-7 cm,-1.5 cm)}
  ]
  (0,0) -- +(2,0) -- +(2,2) -- +(0,2) -- cycle;
  \coordinate (P) at ($(0, 0)$);
  \coordinate (X) at ($(0, 0) + (180:2cm and 0.7cm)$);
  \coordinate (Y) at ($(0,-1) + (180:2cm and 0.7cm)$);
  \coordinate (X2) at ($(0, 0) + (0:2cm and 0.7cm)$);
  \coordinate (Y2) at ($(0,-1) + (0:2cm and 0.7cm)$);
  \coordinate (P1) at ($(0, 0) + (235:2cm and 0.7cm)$);
  \coordinate (P2) at ($(0,-1) + (-60:2cm and 0.7cm)$);
  \coordinate (Q1) at ($(0, -1) + (235:2cm and 0.7cm)$);
  \coordinate (Q2) at ($(0,0) + (-60:2cm and 0.7cm)$);
  \coordinate (A) at ($(1.08, 0) + (-45:0.152cm  and 0.525cm)$);
  \coordinate (B) at ($(1.08, 0) + (125:0.152cm  and 0.525cm)$);
  \coordinate (line1) at ($(0,0) + (-60:2cm and 0.7cm)$);
  \coordinate (line2) at ($(0,-1) + (-60:2cm and 0.7cm)$);
  \draw[] ($(0, 0)  + (-60:2cm and 0.7cm)$(P) arc (-60:235:2cm and 0.7cm);
  %\draw[] ($(0, -1) + (-60:2cm and 0.7cm)$(P) arc (-60:235:2cm and 0.7cm);
  \draw[] ($(0, -1) + (-60:2cm and 0.7cm)$(P) arc (-60:0:2cm and 0.7cm);
  \draw[dashed] ($(0, -1) + (0:2cm and 0.7cm)$(P) arc (0:46:2cm and 0.7cm);
  \draw[] ($(0, -1) + (46:2cm and 0.7cm)$(P) arc (46:134:2cm and 0.7cm);
  \draw[dashed] ($(0, -1) + (130:2cm and 0.7cm)$(P) arc (130:180:2cm and 0.7cm);
  \draw[] ($(0, -1) + (180:2cm and 0.7cm)$(P) arc (180:235:2cm and 0.7cm);
  \draw (X)--(Y);
  \draw (X2)--(Y2);
  \draw[] (P1)..controls(-0.2,-0.8) and (0.2,-1.8)..(P2); 
  \draw[line width=3pt, white] (Q1)..controls(0.2,-1.8) and (-0.2,-0.8)..(Q2); 
  \draw[] (Q1)..controls(0.2,-1.8) and (-0.2,-0.8)..(Q2); 
  \draw[black!30!green] (line1)--(line2); 
  \draw (-3.5,-0.5)node{$\mathbf\cong$};
\end{tikzpicture}
\caption{M\"obius strip with an orientation-reversing line (in green).}
\label{fig:Mobiusstrip}
\end{figure}

\subsubsection{Invertible phase for \texorpdfstring{$\Pin^-$}{Pin-} structure}
\label{sec:Pin-SPT}
We now study fermionic invertible phases protected by $\Tt$ such that $\Tt^2 = +1$. Such phases are classified by $\mho^2_{\Pin^-}(pt)=\ZZ_8$, which is generated by the Arf-Brown-Kervaire (ABK) invariant \cite{Brown}.
For recent work on this invertible phase, see e.g.~\cite{Turzillo:2018ynq,Debray:2018wfz,Kobayashi:2019xxg}. 
The $\ABK$ invariant can be thought of as the effective action of the Kitaev chain protected by time-reversal.
In the continuum version, we have:
\bea
e^{2 \pi i S_\text{eff} (\Sigma,\,\sigma)} = e^{\pi i \ABK(\Sigma, \,\sigma)/4}
= \frac{Z_\text{ferm}(m \gg 0;\Sigma,\sigma)}{Z_\text{ferm}(m \ll 0;\Sigma, \sigma) }
\,.
\label{Sunoriented}
\eea
Here $\sigma$ represents a choice of $\Pin^-$ structure, which we will often omit from the argument of $\ABK$ for brevity.
Alternatively, the ABK invariant can be defined combinatorially as
\bea
e^{\pi i \ABK(\Sigma)/4} = \frac{1}{\sqrt{|H^1(\Sigma, \ZZ_2)|}} \sum_{a \in H^1(\Sigma, \ZZ_2)}e^{\pi i q(a)/2}~, \label{ABKdef}
\eea
where $q$ is a quadratic enhancement $q: H^1(\Sigma, \ZZ_2) \rightarrow \ZZ_4$ satisfying
\bea
\label{quadenrule}
q(a+b) - q(a) - q(b) = 2 \int_\Sigma a \cup b~.
\eea
One may think of this $q$ as a sort of doubling of the spin structure quadratic form $\tilde{q}:H^1(\Sigma, \ZZ_2) \rightarrow \ZZ_2$ introduced earlier. 
Indeed, if the worldsheet is orientable then $q = 2 \,\tilde{q}\,\, (\mathrm{mod}\,\, 4)$. 
In that case one finds $\ABK(\Sigma) = 4 \Arf(\Sigma) \,\,(\mathrm{mod}\,\,8)$, and (\ref{Sunoriented}) reduces to (\ref{Soriented}) as expected. %\cite{Guo:2018vij}. 

To see that the right-hand side of \eqref{ABKdef} is an eighth root of unity,
again we consider its square:
\bea
\text{RHS}^2&=&\frac{1}{{|H^1(\Sigma, \ZZ_2)|}}\sum_{a,b\,\in H^1(\Sigma, \ZZ_2)} i^{  {q}(a)+ {q}(b) }\no\\
&=&\frac{1}{{|H^1(\Sigma, \ZZ_2)|}}\sum_{a,b\,\in H^1(\Sigma, \ZZ_2)} i^{  {q}(a+b)} (-1)^{\int_\Sigma a\cup b }\no\\
&=&\frac{1}{{|H^1(\Sigma, \ZZ_2)|}}\sum_{a,c\,\in H^1(\Sigma, \ZZ_2)} i^{  {q}(c)} (-1)^{\int_\Sigma (a\cup c + a\cup a) } \no \\
&=&\frac{1}{{|H^1(\Sigma, \ZZ_2)|}}\sum_{a,c\,\in H^1(\Sigma, \ZZ_2)} i^{  {q}(c)} (-1)^{\int_\Sigma (a\cup c + a\cup w_1) } \no \\
&=&i^{  {q}(w_1)},
\eea
where we used $\int a\cup a = \int a\cup w_1$.

\begin{figure}[!tbp]
  \centering
  \begin{tikzpicture}[font=\sffamily\small,thick,scale=0.8]
    \draw[blue
 %     ,postaction={decorate},decoration={
 % markings,
 % mark=at position .6 with {\arrow{latex}}}
    ,cm={cos(0) ,-sin(0) ,sin(0) ,cos(0) ,(-7 cm,-0.25 cm)}] (0.75,0)--(0.75,2.5);
  % \draw[blue,cm={cos(0) ,-sin(0) ,sin(0) ,cos(0) ,(-7 cm,-0.25 cm)}]  (1.5,0)--(1.5,2.5);
   % \draw[green,cm={cos(0) ,-sin(0) ,sin(0) ,cos(0) ,(-7 cm,-0.25 cm)}]  (0,0.8)--(2.5,0.8);
  \draw[postaction={decorate},decoration={
    markings,
    mark=at position .145 with {\arrow{latex}},
    mark=at position .395 with {\arrow{latex}},
    mark=at position .635 with {\arrow{latex}},
    mark=at position .895 with {\arrow{latex}},
  },cm={cos(0) ,-sin(0) ,sin(0) ,cos(0) ,(-7 cm,-0.25 cm)}
  ]
  (0,0) -- +(2.5,0) -- +(2.5,2.5) -- +(0,2.5) -- cycle;
  \coordinate (X) at ($(  35:1.5cm  and 0.5cm)$);
  \coordinate (Y) at ($(-135:1.5cm  and 0.5cm)$);
  \coordinate (A) at ($( -45:1.5cm  and 0.5cm)$);
  \coordinate (B) at ($( 145:1.5cm  and 0.5cm)$);
    \draw (-1.5cm,0) edge[out=90,in=90,looseness=2.5] (1.5cm,0);
    \draw[blue,dashed] ($(  125:1.5cm  and 0.5cm)$) edge[out=90,in=180,looseness=0.55] (0,2.19);
    \draw[blue] (0,2.19) edge[out=0,in=90,looseness=0.55] ($(-55:1.5cm  and 0.5cm)$) ;
    \draw (0,0) ellipse (1.5cm and 0.5cm);
    %($(  125:2cm  and 0.7cm)$);
    \draw[] (X)--(Y);
    \draw[] (A)--(B);
    %\node[cm={cos(0) ,-sin(0) ,sin(0) ,cos(0) ,(-6 cm,1 cm)}] (z) (1,1) {$z$};
    %\node[cm={cos(0) ,-sin(0) ,sin(0) ,cos(0) ,(-0.35 cm,1.1 cm)}] (z) (1,1) {$z$};
    \draw (-3,1)node{$\mathbf\cong$};
\end{tikzpicture}
\caption{Projective plane with the single generator $Z$ of its first homology class (in blue). To the right this is drawn as a sphere with a crosscap.}
\label{fig:rp2}
\end{figure}
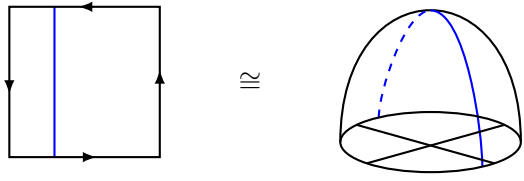

We will need some basic values for the $\ABK$ invariant.
We begin by reviewing the case of $\mathbb{RP}^2$, for which there is a single generator $Z$ of $H_1(\mathbb{RP}^2, \ZZ_2)$, as depicted in Fig.~\ref{fig:rp2}. Its Poincar\'e dual, $z$, is an unoriented cocycle with $\int_{\mathbb{RP}^2} z \cup z = 1$.
 Then using (\ref{quadenrule}) and $q(0)=0$, we conclude that $2 q(z) +2 = 0 \mod 4$ and hence $q(z) =1$ or $3$. 
 These label the two distinct $\Pin^-$ structures on $\mathbb{RP}^2$. 
  The corresponding $\ABK$ invariants are easily calculated,
\bea
\label{eq:ABKRP2}
&1:  \qquad e^{\pi i \ABK(\mathbb{RP}^2)/4}=& \frac{1}{\sqrt{2}} \left(1 + e^{i \pi /2} \right) =  e^{ i\pi  / 4},
\no\\
&3: \qquad e^{\pi i \ABK(\mathbb{RP}^2)/4}=& \frac{1}{\sqrt{2}} \left(1 + e^{i 3\pi /2} \right) =  e^{ -i\pi  / 4}.
\eea

We will also need the value of the $\ABK$ invariant on the Klein bottle. Because $K_2$ is a connected sum of two copies of $\mathbb{RP}^2$, it admits four choices of quadratic enhancement. Taking $z_1$ and $z_2$ to be the basis naturally adapted to the connected sum gives $\left(q(z_1),q(z_2)\right)\in \left\{(1,1), (1,3), (3,1), (3,3)\right\}$. However, it will behoove us to switch to the more familiar basis of $H_1(K_2, \ZZ_2)$, for which $a:= z_1 + z_2$ is Poincar{\'e} dual to the orientation-preserving $A$-cycle while $b:= z_2$ is dual to the orientation-reversing $B$-cycle.  The quadratic enhancements are then $(q(a),q(b)) \in \{(0,1),(0,3),(2,1),(2,3)\}$. For each of these $\Pin^-$ structures  we compute the following values for the $\ABK$ invariant, 
\begin{equation}
\begin{aligned}
\label{KleinbottleABK}
&(0,1): \qquad e^{\pi i \ABK(K_2)/4}=& \frac12 \left(1 +1+i-i \right) &=  1,
\\
&(0,3): \qquad e^{\pi i \ABK(K_2)/4}=& \frac12 \left(1+1-i+i \right) &= 1,
\\
&(2,1): \qquad e^{\pi i \ABK(K_2)/4}=& \frac12 \left(1 -1+i+i \right) &= i,
\\
&(2,3): \qquad e^{\pi i \ABK(K_2)/4}=& \frac12 \left(1-1-i-i \right)& =  -i.
\end{aligned}
\end{equation}

It is useful to note that \begin{equation}
e^{\pi i  (4 \ABK(\Sigma) ) /4 } = (-1)^{\int  w_1^2},\label{uuu}
\end{equation}
namely that four copies of the $\ABK$ theory is the same as the nontrivial bosonic unoriented phase discussed in Section \ref{sec:w1SPT}.

The boundary theory of $n$ copies of the $\ABK$ theory is detailed in Section~\ref{sec:1dfermion}.
When $n=4$, we have  a system of four Majorana fermions $\chi_i$.
We can then introduce a $\Tt$-invariant interaction term $\chi_1\chi_2\chi_3\chi_4$. 
The ground state with this interaction is bosonic, two-fold degenerate, and has the projective anomaly $\Tt^2=-1$,
thus manifesting the equality \eqref{uuu} on the boundary. 
Similarly, when $n=8$, we can consider a suitable quartic interaction under which the ground state is unique \cite{Fidkowski:2009dba}.

\subsubsection{Invertible phase for  \texorpdfstring{$\Pin^+$}{Pin+} structure}
\label{sec:Pin+SPT}
We now study fermionic invertible phases protected by $\Tt$ such that $\Tt^2 = (-1)^{\f}$. Such phases are classified by $\mho^2_{\Pin^+}(pt)=\ZZ_2$, which is generated by the Arf invariant (\ref{Arfdef}) on the orientation double cover $\hat \Sigma$ of the worldsheet, i.e. $(-1)^{\Arf(\hat \Sigma)}$ \cite{Kapustin:2014dxa}.
 This  gives  $\pm1$ on the Klein bottle, depending on the spin structure of its double cover torus. 
 Concretely, the Klein bottle admits four $\Pin^+$ structures, two of which have $\NS$ sector fermions along the oriented $A$-cycle of $K_2$, and two of which have $\R$ sector fermions along the $A$-cycle. %, which by analogy to the $\Pin^-$ case we label as $(0,1),(0,3),(2,1),$ and $(2,3)$. As in the previous notation, the first entry gives the spin structure along the oriented $a$-cycle of the Klein bottle, with $0$ indicating the $\rm NS$ sector and $2$ the $\rm R$ sector. 
 For each of these $\Pin^+$ structures, we must determine the corresponding spin structure on the orientation double cover, i.e. the torus. 
 To do so, let us consider for simplicity a rectangular torus with real coordinates $x,y$ satisfying $x \sim x+2 \pi$, $y \sim y+ 2 L$ for some $L$. 
 The original Klein bottle is obtained by imposing the identification $(x,y) \sim (-x, y+L)$. The cycle in the $x$-direction is the oriented $A$-cycle on $K_2$, while the cycle in the $y$-direction is twice the unoriented $B$-cycle. 
 From this we learn that the choice of $\Pin^+$ structure along the $B$-cycle is irrelevant on the double cover --- the fermion is always periodic along the cycle in the $y$-direction. 
 Thus we conclude that the four %$(0,1),(0,3),(2,1),(2,3)$
  $\Pin^+$ structures %correspond to the $(0,0),(0,0),(1,0),(1,0)$ spin structures on the double cover, and hence 
  are assigned phases $(-1)^{\Arf( \hat \Sigma)}=1,1,-1,-1$. 

\section{GSO projections}
\label{sec:closed}

In this section, we consider the addition of fermionic invertible phases to the worldsheet theories of various superstrings. 
This will allow us to enumerate all possible GSO projections. 
We will focus on oriented Type \II and Type \o theories, as well as unoriented Type \o theories. 
The case of unoriented Type \II (i.e. Type $\rm I$) strings will be treated in Section \ref{sec:typeI}. 
In our discussion we will work in the NSR formalism, which is briefly reviewed in Appendix \ref{App:RNS}. 
The properties of the low-lying states in this formalism are summarized in Table~\ref{tab:spect} for convenience.

\begin{table}[tbh]
  \begin{center}
    \renewcommand{\arraystretch}{1.2}
  \begin{tabular}{r@{\,}l|c|c|c|l}
    \multicolumn{2}{c|}{State} & $(-1)^{\f}$ & $(-1)^{\fl}$ & $(-1)^{\fr}$ & Little group rep. \\
    \hline
    $(\NS-$,& $\NS-)$ & $+1$     & $-1$ & $-1$ 
   & $\mathbf{1}$ \\
    $(\NS+$,& $\NS+)$ & $+1$     & $+1$ & $+1$ 
   & $\mathbf{8}_v \otimes \mathbf{8}_v 
  = \mathbf{1} \oplus \mathbf{28}\oplus \mathbf{35}$\\
   $(\R+$,& $\R+)$ & $+1$      & $+1$ & $+1$ 
   & $\mathbf{8}_s \otimes \mathbf{8}_s 
  = \mathbf{1} \oplus \mathbf{28}\oplus \mathbf{35}_-$ \\
   $(\R-$,& $\R-)$  & $+1$ & $-1$ & $-1$ 
   & $\mathbf{8}_c \otimes \mathbf{8}_c 
  = \mathbf{1} \oplus \mathbf{28}\oplus \mathbf{35}_+$  \\
   $(\R+$,& $\R-)$       & $-1$ & $+1$ & $-1$ 
   & $\mathbf{8}_s \otimes \mathbf{8}_c
  = \mathbf{8}_v \oplus \mathbf{56}$  \\
   $(\R-$,& $\R+)$       & $-1$ & $-1$ & $+1$ 
   & $\mathbf{8}_c \otimes \mathbf{8}_s
  = \mathbf{8}_v \oplus \mathbf{56}$ \\
   $(\NS+$,& $\R+)$                 & $+1$ & $+1$ & $+1$ 
   & $\mathbf{8}_v \otimes \mathbf{8}_s
  =  \mathbf{8}_c \oplus \mathbf{56}_s$  \\
   $(\NS+$,& $\R-)$            & $-1$ & $+1$ & $-1$ 
   & $\mathbf{8}_v \otimes \mathbf{8}_c
  =  \mathbf{8}_s \oplus \mathbf{56}_c$
  \end{tabular}
\end{center}
\caption{Fermion parity and representations of the low-lying states of the closed NSR superstring.}
  \label{tab:spect}
\end{table}

\subsection{Oriented strings}
\subsubsection{Type 0 }
\label{Hilspacintsec}
Let us begin by studying oriented Type \o superstrings, which differ from Type \II strings in that there is a single spin structure for both left- and right-movers \cite{Dixon:1986iz,Seiberg:1986by}. The SPT phases which can be consistently realized on the Type \o worldsheets are classified by $\mho^2_{\Spin}(pt)= \ZZ_2$. As we have seen above, the effective action for the non-trivial phase can be written in terms of the Arf invariant (\ref{Soriented}).  Depending on whether or not one allows for this non-trivial phase on the worldsheet, one expects to arrive at two different Type \o theories, with torus partition functions
\bea
Z^{(n)} = \frac12 \sum_\sigma (-1)^{n \Arf(T^2,\,\sigma)} Z[\sigma] \overline Z[\sigma] ~, \hspace{0.5 in}n=0,1~,
\eea
where $Z[\sigma],\overline Z[\sigma]$ are the standard left- and right-moving worldsheet torus partition functions with spin structure $\sigma$.
The two theories obtained in this way are Type \oB $(n=0)$ and Type \oA $(n=1)$,
\begin{equation}
\begin{aligned}
\label{type0part}
\mathrm{0A:}&&  Z^{(1)} = \frac12 \left( |Z[\sigma_{00}]|^2 + |Z[\sigma_{01}]|^2 + |Z[\sigma_{10}]|^2 - |Z[\sigma_{11}]|^2\right)~,\\
   \mathrm{0B:}&&  Z^{(0)} =  \frac12 \left(|Z[\sigma_{00}]|^2 + |Z[\sigma_{01}]|^2 + |Z[\sigma_{10}]|^2 + |Z[\sigma_{11}]|^2 \right)~,
\end{aligned}
\end{equation}
and their massless $\RR$ states are found to be 
\begin{equation}
\begin{aligned}
\label{type0massless}
\mathrm{0A:}&&  |0\rangle_{\RR}^{a\dot b},|0\rangle_{\RR}^{\dot ab} &\in (\mathbf{8}_s \otimes \mathbf{8}_c)\oplus(\mathbf{8}_c \otimes \mathbf{8}_s) = (2 \cdot \mathbf{8}_v) \oplus( 2 \cdot \mathbf{56})~,
\\
\mathrm{0B:}&&  |0\rangle_{\RR}^{ab},|0\rangle_{\RR}^{\dot a\dot b} &\in (\mathbf{8}_s \otimes \mathbf{8}_s)\oplus(\mathbf{8}_c \otimes \mathbf{8}_c) = (2 \cdot \mathbf{1}) \oplus( 2 \cdot \mathbf{28})\oplus \mathbf{70}~.
\end{aligned}
\end{equation}

The massless $\RR$ sector for Type \oA contains two 1-forms and two 3-form fields, while that of Type \oB contains two scalars, two 2-forms, and a 4-form with no self-duality constraint. This is exactly double the $\RR$ content of the corresponding Type \II theories. This leads one to expect a doubled brane spectrum, where for each $p$ we have both ${\rm D}p$ and ${\rm D} p'$ branes. This will be discussed further in Section \ref{sec:pin-Ktheory}.

\subsubsection{Type II}
\label{sec:TypeIIstrings}
We now proceed to discuss the more familiar Type \II superstrings. 
A characteristic feature of these strings is that they have separate spin structures for left- and right-movers, which means that the worldsheet is endowed with a $\Spin \times \ZZ_2$ structure. 
The anomalies and invertible phases on such worldsheets are captured by $\mho^3_{\Spin}(B\ZZ_2)$ and $\mho^2_{\Spin}(B\ZZ_2)$, respectively. 
These groups are listed in Table \ref{table1}. 
As discussed in \cite{Witten:1985mj,Tachikawa:2018njr}, 
the fact that $\mho^3_{\Spin}(B\ZZ_2)= \ZZ_8$ implies that the number of physical pairs of left- and right-movers needs to be a multiple of eight to have a non-anomalous GSO projection. 
This is indeed the case if the physical string lives in ten dimensions.

On the other hand, the SPT phases on the worldsheet are classified by $\mho^2_{\Spin}(B\ZZ_2) = (\ZZ_2)^2$. 
The two $\ZZ_2$ can be interpreted as separate left- and right-moving fermionic invertible phases, as discussed in Section \ref{sec:spinZ2}. 
In other words, the partition functions for these phases are given by 
\bea
\label{doubleArf}
(-1)^{n_L \Arf(\Sigma,\,\sigma_L) + n_R \Arf (\Sigma\,,\sigma_R)}~,\hspace{0.5 in} n_{L,R}=0,1~,
\eea
where $\sigma_{L(R)}$ is the left(right)-moving spin structure on the worldsheet $\Sigma$. The corresponding torus partition functions for these theories are given by 
\bea
Z^{(n_L, n_R)} &=& \frac14 \left( \sum_{ \sigma_L} 
(-1)^{n_L \Arf(T^2,\,\sigma_L)} Z[\sigma_L] \right)\times\left(\sum_{ \sigma_R} 
(-1)^{n_R \Arf(T^2,\,\sigma_R)} \overline Z[\sigma_R]  \right)~,
\eea
For instance, two cases are
\begin{equation}
\begin{aligned}
Z^{(0, 0)} &= \frac14  \left( Z[\sigma_{00}] + Z[\sigma_{01}] + Z[\sigma_{10}] +Z[\sigma_{11}]\right)
\left( \overline Z[\sigma_{00}] + \overline Z[\sigma_{01}] + \overline Z[\sigma_{10}] + \overline Z[\sigma_{11}]\right),\\
Z^{(0, 1)} &= \frac14  \left( Z[\sigma_{00}] + Z[\sigma_{01}] + Z[\sigma_{10}] +Z[\sigma_{11}]\right)
\left( \overline Z[\sigma_{00}] + \overline Z[\sigma_{01}] + \overline Z[\sigma_{10}] - \overline Z[\sigma_{11}]\right).
\end{aligned}
\end{equation}

While there are seemingly four distinct SPT phases, there are in fact only two physically distinct Type \II theories. 
To see this, recall the continuum definition of the Arf invariant:
\begin{equation}
(-1)^{ \Arf(\Sigma,\,\sigma)} = \frac{Z_\text{ferm}(m \gg 0; \Sigma, \sigma)}{Z_\text{ferm}(m \ll 0;\Sigma, \sigma) }~,
\end{equation}
where $Z_\text{ferm}(m;\Sigma,\sigma)$ is the partition function of a free massive Majorana fermion of mass $m$. 
We note this formula holds at finite mass as well,\begin{equation}
(-1)^{ \Arf(\Sigma,\,\sigma)} = \frac{Z_\text{ferm}(+m;\Sigma, \sigma)}{Z_\text{ferm}(-m;\Sigma, \sigma)}~,
\end{equation}
which was already used in \eqref{-1}.
In other words, upon flipping the sign of a mass term, $m\to -m$, one generates a factor of $(-1)^{\Arf(\Sigma, \, \sigma)}$ in the partition function. 
Note that such a flip of the mass term
can be performed by $(\psi, \tilde\psi)\to (\psi,- \tilde\psi)$.
Taking the limit $m\to 0$, we find that a Majorana-Weyl fermion $ \tilde\psi$ has an anomaly under $ \tilde\psi\to - \tilde\psi$, and generates $(-1)^{\Arf(\Sigma,\, \sigma_R)}$. This in particular means that the parity transformation along a single spacetime direction, say $(\psi^{\m=9}, \tilde\psi^{\m=9})\to(-\psi^{9},- \tilde\psi^{9})$,
produces $(-1)^{\Arf(\Sigma, \,\sigma_L)+\Arf(\Sigma, \,\sigma_R)}$, 
i.e.~$n_L=n_R=1$ in \eqref{doubleArf}.
Therefore, there are only essentially two distinct Type $\rm II$ GSO projections, with the others being related by spacetime parity transformation.
The cases $(n_L,n_R)=(0,0), (1,1)$ are traditionally called  Type $\rm IIB$
while the cases $(n_L,n_R)=(0,1), (1,0)$ are called Type $\rm IIA$.

The reasoning above also explains why T-duality exchanges Type $\rm IIA/B$. Recall that T-duality along a spacetime direction keeps $(\partial X,\psi)$ fixed and implements $(\bar{\partial} X,\tilde\psi)\to (-\bar{\partial} {X},-\tilde\psi)$. Then by the previous paragraph, this generates $(-1)^{\Arf(\Sigma, \, \sigma_R)}$, exchanging Type ${\rm IIA/B}$.
By the same arguments, the two Type \o theories are also exchanged by T-duality.

\subsubsection{Comments on the two points of view on the effect of invertible phases}

There are two ways of understanding the gauging of a global symmetry in the presence of a non-trivial invertible phase. 
The point of view which we have taken so far is to take the tensor product
of the original theory and the invertible phase, and to then gauge the relevant symmetry.
The Hilbert space of the invertible phase is one-dimensional,
and therefore this changes the way the global symmetry acts on the states of the original theory. 
For example, in the case of Type \o strings, we used the projectors
\begin{equation}
    P_{\rm 0A}=\half(1+(-1)^{\f}|_{\rm 0A}),
    \qquad
    P_{\rm 0B}=\half(1+(-1)^{\f}|_{\rm 0B})
\end{equation}
and what produced the difference between the two was that on the $\RR$ sector, we had
\begin{equation}
    (-1)^{\f}|_{\rm 0A}= - (-1)^{\f}|_{\rm 0B}
\end{equation}
due to the presence of the $\Arf$ theory.

More traditionally, the action of $(-1)^{\f}$ was fixed once and for all,
for example to be equal to $(-1)^{\f}|_{\rm 0B}$,
and different GSO projections were said to correspond to different projectors.
For example, in the $\RR$ sector, one would have written
\begin{equation}
    P^\text{RR}_{\rm 0A}=\half(1-(-1)^{\f}),
    \qquad
    P^\text{RR}_{\rm 0B}=\half(1+(-1)^{\f}).
\end{equation}
These two points of view clearly lead to the same results,
and similar statements will be seen to hold for unoriented strings.
Though we will briefly discuss this traditional viewpoint when we compare to the existing literature,
we will mostly use the first point of view.

\subsection{Unoriented strings}
%%%%%%%%%%%%%%%%%%%%%%%%%%%%%%
\subsubsection{Orientation reversal on fermions and ground states}
We now consider unoriented string theories. 
One way to obtain such theories is to gauge time-reversal symmetry $\Tt$ on the worldsheet.
 $\Tt$ is an antiunitary symmetry that acts on worldsheet fermions as
\bea
\label{Ttransfdef}
\Tt \psi(t,\sigma) \Tt^{-1}= \tilde{\psi}(-t,\sigma)~,\hspace{0.5 in}
\Tt \tilde{\psi}(t,\sigma) \Tt^{-1}= \psi(-t,\sigma)
\eea
with $\Tt^2 = 1$. 
In string theory, it is often more common to describe this in terms of worldsheet parity $\Om$, 
which is a unitary symmetry whose action is given by
\bea
\Om \psi(t,\sigma) \Om^{-1}= -\tilde{\psi}(t,2\pi-\sigma)~,\hspace{0.5 in}
\Om \tilde{\psi}(t,\sigma) \Om^{-1}= \psi(t,2\pi-\sigma)\,.
\label{eq:omegafull}
\eea
From this definition it is clear that $\Omega^2=(-1)^{\f}$. 
The ability to choose between $\Tt$ or $\Omega$ is a consequence of the $\CPT$ theorem. %\footnote{In our case $\Cc$ acts trivially, since we are dealing with Majorana fermions.} 
The fact that $\Tt^2 =1$, or equivalently that $\Omega^2=(-1)^{\f}$, means that we are working with a $\Pin^-$ structure  on the worldsheet. 
In this case the action of $\Omega$ on the ground states in the $\NSNS$ and $\RR$ sectors can be taken to be
\bea
\label{1omega0gs}
\Omega\, | 0 \rangle_{\NSNS} = |0 \rangle_{\NSNS}~, \hspace{0.6 in}  \Omega\, |0 \rangle_{\R} \otimes |\tilde 0\rangle_{\R} =\begin{cases}
 -\hphantom{i} |\tilde 0 \rangle_{\R} \otimes |0\rangle_{\R} &\text{for} \quad (\R\pm,\R\pm),\\ 
 -i |\tilde 0 \rangle_{\R} \otimes |0\rangle_{\R}& \text{for} \quad (\R\pm,\R\mp). 
\end{cases}
\eea

One can also consider gauging $\Omega$ twisted by some $\ZZ_2$ symmetry. 
Here we consider $\Omega_{\f} := \Omega (-1)^{\fl}$.
We find 
\bea
(\Omega_{\f})^2 = \Omega (-1)^{\fl}\Omega (-1)^{\fl} = \Omega (-1)^{\fl + \fr}\Omega = (-1)^{\f} \Omega^2 = 1
\eea
and hence gauging this operator gives $\Pin^+$ structure on worldsheets. 
In this case parity acts on the $\NSNS$ and $\RR$ ground states as
\bea
\label{1omegafgs}
\Omega_{\f}\, | 0 \rangle_{\NSNS} = - |0 \rangle_{\NSNS}~, \hspace{0.4 in}  \Omega_{\f}\, |0 \rangle_\R \otimes |\tilde 0\rangle_\R = 
\begin{cases}
\mp \hphantom{i}|\tilde 0 \rangle_\R \otimes |0\rangle_\R &\text{for} \quad (\R\pm,\R\pm),\\ 
\mp i |\tilde 0 \rangle_\R \otimes |0\rangle_\R & \text{for} \quad (\R\pm,\R\mp).
\end{cases}
\eea

Unlike for Type \II strings where $\Omega$ is a symmetry of only Type $\rm IIB$, 
for Type \o strings $\Omega$ %(as well as $\Omega_{\f}$ and $\Omega_{\F}$) 
is a symmetry of both Type \oA and ${\rm 0B}$. 
Hence we can obtain $\Pin^-$ Type \o theories by starting from either Type \oA or \oB and gauging $\Omega$. % or $\Omega_{\F}$. 
Likewise, one might expect that we can obtain $\Pin^+$ Type \o theories by starting from either Type \oA or \oB and gauging $\Omega_{\f}$. 
However, it turns out that $\Omega_{\f}$ cannot be consistently gauged in Type $\rm 0A$, since it is incompatible with the Type $\rm 0A$ spin structure projection \cite{Bianchi:1990yu,Sagnotti:1995ga,Sagnotti:1996qj}.  

In the rest of this section we study the consistent unoriented Type \o strings in more detail. 
We begin by analyzing the $\Pin^-$ strings in Section \ref{sec:pinminintro}, and then proceed to a discussion of the  $\Pin^+$ strings in Section \ref{pin+strings}.
To the best of our knowledge, many of these theories have not been discussed in the literature --- some preliminary works include \cite{Sagnotti:1995ga,Sagnotti:1996qj,Bianchi:1990yu,Blumenhagen:1999uy,Blumenhagen:1999bd,Blumenhagen:1999ad,Klebanov:1999pw,Bergman:1999km,Distler:2009ri,Distler:2010an}. 
For a condensed matter perspective, see e.g. \cite{Cho:2016xjw,Shiozaki:2016zjg}.

\subsubsection{ \texorpdfstring{$\Pin^-$}{Pin-} Type 0 Strings}
\label{sec:pinminintro}
Let us begin by discussing $\Pin^-$ Type \o strings. 
The  group classifying the relevant invertible phases is $\mho^2_{\Pin^-}(pt) = \ZZ_8$, which is generated by the $\ABK$ invariant. 
We are thus led to predict the existence of eight $\Pin^-$ theories, each distinguished by the presence of $n=0,\dots, 7$ copies of $\ABK$ on the worldsheet. 
%In what follows, we will argue that these eight theories are all physically distinct. 

In unoriented theories, the presence of a non-trivial invertible phase manifests itself in the action of $\Omega$ on the different closed string ground states. 
In order to understand this action, we make use of the values of the $\ABK$ invariant on the Klein bottle $K_2$ obtained in Section \ref{sec:Pin-SPT}. In particular, it was found there that the Klein bottle admits four $\Pin^-$ structures labelled by quadratic enhancements $\left(q(a),q(b)\right)\in\left\{(0,1),(0,3),(2,1),(2,3)\right\}$ with respective values $e^{i \pi \ABK(K_2)/4}=1,1,i,-i$. 
Recall that the first entry $a$ corresponds to the orientation-preserving cycle on $K_2$, while the second entry $b$ corresponds to the orientation-reversing cycle.

We now want to interpret these results as the action of $\Omega$ and $(-1)^{\f}$ on the closed string Hilbert space.
This can be done as follows. 
We begin by cutting $K_2$ along the orientation-preserving cycle $A$ to obtain a cylinder with an insertion of an $\Omega$ symmetry line. Since $A$ is orientation-preserving, we know that $q = 2\, \tilde{q}\mod 4$, where $\tilde{q}(a)$ is the spin structure along $A$. Consequently, the first and second $\Pin^-$ structures correspond to $\rm NS$ structure along the $A$-cycle, while the third and fourth correspond to $\rm R$ structure along the $A$-cycle. We may then interpret the partition function for each $\Pin^-$ structure as the following traces on the torus.
We have:
\begin{align}
  \begin{split}
\label{Kleinbottlessresults}
&(0,1):\quad 
\begin{tikzpicture}[semithick,baseline=3.3ex,scale=1.5]
    \draw[black!30!green] (0,.75)--(1,.75);
    \draw[dashed,gray,semithick] (0,0.3)--(1,0.3);
    \draw[dashed,gray,semithick] (0.3,0)--(0.3,1);
  \draw[postaction={decorate},decoration={
    markings,
    mark=at position .145 with {\arrow{latex}},
    mark=at position .385 with {\arrow{latex}},
    mark=at position .405 with {\arrow{latex}},
    mark=at position .635 with {\arrow{latex}},
    mark=at position .855 with {\arrowreversed{latex}},
    mark=at position .875 with {\arrowreversed{latex}}
  }
  ]
  (0,0) -- +(1,0) -- +(1,1) -- +(0,1) -- cycle;
\end{tikzpicture}  
\quad \leftrightarrow \quad \frac14{\rm Tr}_{\NSNS}\left[\Omega\,e^{-2\pi l H_{\rm cl}}\right] 
\\[0.25cm]
  &(0,3):\quad
\begin{tikzpicture}[semithick,baseline=3.3ex,scale=1.5]
    \draw[black!30!green] (0,.75)--(1,.75);
    \draw[red,semithick] (0,0.3)--(1,0.3);
    \draw[dashed,gray,semithick] (0.3,0)--(0.3,1);
  \draw[postaction={decorate},decoration={
    markings,
    mark=at position .145 with {\arrow{latex}},
    mark=at position .385 with {\arrow{latex}},
    mark=at position .405 with {\arrow{latex}},
    mark=at position .635 with {\arrow{latex}},
    mark=at position .855 with {\arrowreversed{latex}},
    mark=at position .875 with {\arrowreversed{latex}}
  }
  ]
  (0,0) -- +(1,0) -- +(1,1) -- +(0,1) -- cycle;
\end{tikzpicture}
 \quad \leftrightarrow \quad \frac14{\rm Tr}_{\NSNS}\left[\Omega\,(-1)^{\f}\,e^{-2\pi l H_{\rm cl}}\right]
\\[0.25cm]
  &(2,1):\quad
\begin{tikzpicture}[semithick,baseline=3.3ex,scale=1.5]
    \draw[black!30!green] (0,.75)--(1,.75);
    \draw[dashed,gray,semithick] (0,0.3)--(1,0.3);
    \draw[red,semithick] (0.3,0)--(0.3,1);
  \draw[postaction={decorate},decoration={
    markings,
    mark=at position .145 with {\arrow{latex}},
    mark=at position .385 with {\arrow{latex}},
    mark=at position .405 with {\arrow{latex}},
    mark=at position .635 with {\arrow{latex}},
    mark=at position .855 with {\arrowreversed{latex}},
    mark=at position .875 with {\arrowreversed{latex}}
  }
  ]
  (0,0) -- +(1,0) -- +(1,1) -- +(0,1) -- cycle;
\end{tikzpicture}
  \quad \leftrightarrow \quad \frac14{\rm Tr}_{\RR}  \left[\Omega\,e^{-2\pi l H_{\rm cl}}\right] \;\;
  \\[0.25cm]
  &(2,3):\quad
\begin{tikzpicture}[semithick,baseline=3.3ex,scale=1.5]
    \draw[black!30!green] (0,.75)--(1,.75);
    \draw[red,semithick] (0,0.3)--(1,0.3);
    \draw[red,semithick] (0.3,0)--(0.3,1);
  \draw[postaction={decorate},decoration={
    markings,
    mark=at position .145 with {\arrow{latex}},
    mark=at position .385 with {\arrow{latex}},
    mark=at position .405 with {\arrow{latex}},
    mark=at position .635 with {\arrow{latex}},
    mark=at position .855 with {\arrowreversed{latex}},
    mark=at position .875 with {\arrowreversed{latex}}
  }
  ]
  (0,0) -- +(1,0) -- +(1,1) -- +(0,1) -- cycle;
\end{tikzpicture}
  \quad \leftrightarrow \quad \frac14{\rm Tr}_{\RR}  \left[\Omega\,(-1)^{\f}\,e^{-2\pi l H_{\rm cl}}\right]\,
\end{split}
  \end{align}
where green lines represent orientation-reversal lines and red lines represent spin lines. In this way, the value of the $\ABK$ invariant on $K_2$ with $\Pin^-$ structure labeled by $\left(q(a),q(b)\right)$ can be assigned to the action of $\Omega$ on the ground states with the appropriate cylinder spin structures. 
In particular, we conclude that $\Omega$ acts trivially on $\rm NS$ ground states, whereas it acts with an extra factor of $i$ on $\rm R$ ground states. 
This implies that the presence of $n$ copies of $\ABK$ changes the action of $\Om$ on the $\RR$ sector ground states by a factor of $i^n$ relative to \eqref{1omega0gs}, giving
\bea
\label{1omega1gs}
\Omega |0\rangle_{\NSNS} = |0\rangle_{\NSNS} ~,\hspace{0.5 in}\Omega |0\rangle_{\R}\otimes |\tilde0\rangle_{\R} =
\begin{cases}
- i^{n\hphantom{+1}}|\tilde 0 \rangle_{\R} \otimes |0\rangle_{\R} & \text{for} \quad (\R\pm,\R\pm),\\ 
 -i^{n+1} |\tilde 0 \rangle_{\R} \otimes |0\rangle_{\R}& \text{for} \quad (\R\pm,\R\mp) .
\end{cases} 
\eea

Note that upon shifting $n\rightarrow n+1$, the additional factor of $i$ changes the fermion-parity of the $\RR$ ground state, since $\Omega^2 = (-1)^{\f}$.
Since the Type \oAB theories differ by a projection onto states of worldsheet fermion number $(-1)^{\f}=\pm 1$,
we see that  theories with even $n$ correspond to orientifolds of Type ${\rm 0B}$, while theories with odd $n$ correspond to orientifolds of Type ${\rm 0A}$. 
This is also supported by recalling that on oriented manifolds $\Sigma$, the $\ABK$ invariant reduces to $\ABK(\Sigma) = 4 \Arf(\Sigma) \,\,(\mathrm{mod}\,\,8)$, 
and hence the partition function becomes $e^{i n \pi \ABK(\Sigma)/4} = (-1)^{n \Arf(\Sigma)}$, which is precisely what distinguished the oriented Type \oAB theories.

As far as the action of $\Omega$ on the vacuum (\ref{1omega1gs}) is concerned, theories differing by four copies of $\ABK$ are indistinguishable. 
The reason for this is that only data about the Klein bottle $K_2$ was used to obtain (\ref{1omega1gs}). 
However, the manifold that generates the bordism group $\Omega^{\Pin^-}_2(pt)$ is the projective plane $\mathbb{RP}^2$, while $K_2$ is a connected sum of two copies thereof, i.e. $K_2 \cong \mathbb{RP}^2 \# \mathbb{RP}^2$. 
Consequently, the partition function on $K_2$ is insensitive to an additional sign that can arise on manifolds whose decompositions contain an odd number of copies of $\mathbb{RP}^2$. 
Indeed, four copies of the $\ABK$ theory is not trivial and gives partition function $e^{4 \pi i \ABK(\Sigma)/4} = (-1)^{\int_\Sigma w_1^2}$, as discussed in Section \ref{sec:Pin-SPT}. 
As described in Section \ref{sec:w1SPT}, unlike for the Klein bottle the M{\"o}bius strip amplitude \textit{is} sensitive to this sign. 
This sign turns out to give precisely the difference between ${\rm O}9^\pm$ orientifolds.
%and indeed we claim that the $n=4, \dots , 7 \mod 8$ cases are the  ${\rm O}9^+$ analogs of the $n=0,\dots,3 \mod 8$ cases, respectively. 
More detail on this will be given in Section \ref{Omegasection1}.

We may now study the closed string spectra of these theories. 
The action of $\Omega$ proposed in (\ref{1omega1gs}) does not project out the closed string tachyon in the $\NSNS$ sector, but has the following implications for the spectra of $\RR$ fields. 
For $n$ even, $\Omega$ projects out all $(\R\pm,\R\mp)$ states. The cases $n=0 \mod 4$ and $n=2 \mod 4$ differ by a sign in the action of $\Omega$, which projects out the symmetric or antisymmetric combinations of $(\R\pm,\R\pm)$ states. 
Then upon gauging $\Omega$ we obtain the following $\RR$ spectra, 
\begin{align}
  n=0,4:& \qquad |0\rangle_{\RR}^{[ab]} \in \mathbf{28} \subset \mathbf{8}_s \otimes \mathbf{8}_s, 
  && |0\rangle_{\RR}^{[\dot a \dot b]} \in \mathbf{28} \subset \mathbf{8}_c \otimes \mathbf{8}_c~,\no \\
  n=2,6:& \qquad |0\rangle_{\RR}^{(ab)} \in \mathbf{1} \oplus \mathbf{35}_- \subset \mathbf{8}_s \otimes \mathbf{8}_s, 
  && |0\rangle_{\RR}^{(\dot a \dot b)} \in \mathbf{1} \oplus \mathbf{35}_+ \subset \mathbf{8}_c \otimes \mathbf{8}_c~.
\end{align}
In the $n=0 \mod 4$ cases, only the two 2-forms survive the projection, 
while for $n=2 \mod 4$ only the two scalars and the 4-form survive. 
These spectra of $\RR$ fields are indeed a projection of the Type \oB ones. 
For $n$ odd, the extra factor of $i$ in (\ref{1omega1gs}) projects out all the $(\R\pm,\R\pm)$ states, while the $(\R\pm,\R\mp)$ combinations
\begin{align}
  n=1,5: & \qquad \frac{1}{\sqrt{2}} ( |0\rangle^{a\dot b}_{\RR} + |0\rangle^{\dot b a}_{\RR} ) \in \mathbf{8}_v \oplus \mathbf{56} \subset (\mathbf{8}_s \otimes \mathbf{8}_c) \oplus (\mathbf{8}_c \otimes \mathbf{8}_s) \,, \no\\
  n=3,7: & \qquad \frac{1}{\sqrt{2}} ( |0\rangle^{a\dot b}_{\RR}  -  |0\rangle^{\dot b a}_{\RR} ) \in \mathbf{8}_v \oplus \mathbf{56} \subset (\mathbf{8}_s \otimes \mathbf{8}_c) \oplus (\mathbf{8}_c \otimes \mathbf{8}_s)~,
\end{align}
survive the projection. 
This leaves a single set of 1- and 3-form fields. These states are part of the Type \oA spectrum.

It is worth mentioning that because these theories possess neither spacetime fermions nor (anti-)self-dual form fields, they are all free of perturbative gravitational anomalies.

Finally, let us give a more traditional orientifold interpretation to the theories studied in this section.
In perturbative string theory we often refer not only to left/right-moving worldsheet fermion number $(-1)^{\fl,\fr}$ but also to left/right-moving spacetime fermion number $(-1)^{\FL,\FR}$.
% Precise modern QFT interpretation of spacetime fermion number in NSR worldsheet seems subtle.
We recall that $(-1)^{\F}$ acts by $+1$ on the $\NS$ sector and by $-1$ on the $\R$ sector.
We now consider $\Omega_\F:=\Omega (-1)^{\FL}$,
which acts with an extra minus sign on the left-moving $\R$ sector. 
Above, we saw that gauging $\Om$ with two copies of the $\ABK$ theory gives the same minus sign. This suggests the following identifications,
\begin{align}
\label{orientassigns}
n\,\,&=\,\,0,4: \hspace{0.2 in} (\mathrm{0B}, \Omega) \hspace{0.8 in}n\,\,=\,\,1,5: \hspace{0.2 in} (\mathrm{0A}, \Omega)
\no\\
n\,\,&=\,\,2,6: \hspace{0.2 in} (\mathrm{0B}, \Omega_{\F}) \hspace{0.73 in}n\,\,=\,\,3,7: \hspace{0.2 in} (\mathrm{0A}, \Omega_{\F})
\end{align}
where the first element in parenthesis denotes the starting theory, and the second element denotes the operator being gauged. 
The difference between theories differing by $4$ copies of $\ABK$ is the action of $\Omega$ or $\Omega_{\F}$ on Chan-Paton factors.
This correspondence between the $\ABK$ viewpoint and the orientifold viewpoint will be discussed further in Section \ref{Omegasection1}.

\subsubsection{ \texorpdfstring{$\Pin^+$}{Pin+} Type 0 Strings}
\label{pin+strings}

We finally proceed to the case of $\Pin^+$ Type \o strings, which were studied in \cite{Blumenhagen:1999ns,Angelantonj:1999qg}. 
The  group capturing potential invertible phases on the $\Pin^+$ worldsheet is  $\mho^2_{\Pin^+}(pt)=\ZZ_2$. 
As reviewed in Section \ref{sec:Pin+SPT}, the effective action for this invertible phase is given by the $\Arf$ invariant of the oriented double cover  $\hat \Sigma$ of the worldsheet, i.e. $(-1)^{\Arf(\hat \Sigma)}$, whose generating manifold is the Klein bottle.
The Klein bottle was seen to admit four $\Pin^+$ structures, which we now label as $(0,1),(0,3),(2,1), (2,3)$ in analogy to the $\Pin^-$ notation.\footnote{%
We do this for notational convenience only. 
There is no correspondence between quadratic enhancements and $\Pin^+$ structures in general.
} 
By examining the spin structure on the double cover torus, these were assigned respective phases $(-1)^{\Arf( \hat \Sigma)}=1,1,-1,-1$.

We now proceed as in the $\Pin^-$ case above. First, we recast the Klein bottle partition functions for the four $\Pin^+$ structures in terms of traces on the torus. This gives
\begin{align}
  (0,1)  \,\,&\leftrightarrow\,\, \frac14{\rm Tr}_{\NSNS}\left[\Omega_{\f}\,e^{-2\pi l H_{\rm cl}}\right]~,
  \hspace{0.5in} 
  (0,3)  \,\, \leftrightarrow\,\, \frac14{\rm Tr}_{\NSNS}\left[\Omega_{\f}\,(-1)^{\f}\,e^{-2\pi l H_{\rm cl}}\right]~,
  \no\\
  (2,1)  \,\,&\leftrightarrow\,\, \frac14{\rm Tr}_{\RR}  \left[\Omega_{\f}\,e^{-2\pi l H_{\rm cl}}\right]~,\;\;
  \hspace{0.55in}
  (2,3)   \,\,\leftrightarrow\,\, \frac14{\rm Tr}_{\RR}  \left[\Omega_{\f}\,(-1)^{\f}\,e^{-2\pi l H_{\rm cl}}\right]\,.
  \end{align}
The $\Arf(\hat \Sigma)$ invertible phase assigns $-1$ to the Klein bottle with $\Pin^+$ structure $(2,1)$ and $(2,3)$, and $+1$ to the other $\Pin^+$ structures. 
This means that the presence of the non-trivial invertible phase changes the action of $\Omega_{\f}$ on  $\R$ sector ground states by a sign relative to (\ref{1omegafgs}), 
but does not change the action of $(-1)^{\f}$.  

With this information, we may turn towards the analysis of the massless closed string spectra of the theories. For the trivial phase, the orientifold projection keeps the symmetric combinations of ($\R-$,$\R-$) and antisymmetric contributions of ($\R+$,$\R+$) in the Type \oB spectra. In the non-trivial phase, one instead keeps the antisymmetric combinations of ($\R-$,$\R-$) and symmetric contributions of ($\R+$,$\R+$),\begin{align}
  n=0: & \qquad |0\rangle_{\RR}^{[ab]} \in \mathbf{28} \subset \mathbf{8}_s \otimes \mathbf{8}_s\,, \hspace{0.55 in}\ |0\rangle_{\RR}^{(\dot a \dot b)} \in \mathbf{1} \oplus \mathbf{35}_+ \subset \mathbf{8}_c \otimes \mathbf{8}_c\,,
  \no\\
  n=1:&\qquad |0\rangle_{\RR}^{(ab)} \in \mathbf{1} \oplus \mathbf{35}_- \subset \mathbf{8}_s \otimes \mathbf{8}_s\,, \quad\ |0\rangle_{\RR}^{[\dot a \dot b]} \in  \mathbf{28} \subset \mathbf{8}_c \otimes \mathbf{8}_c\,.
\end{align}

We note that these spectra are the same up to a spacetime parity transformation which exchanges the self-dual and anti-self-dual 4-forms. 
This observation can also be explained from the fact that the spacetime parity transformation generates $(-1)^{\Arf(\hat\Sigma)}$ on the worldsheet. 
Indeed, in the Type \II case, the same operation generated $(-1)^{\Arf(\Sigma,\sigma_L)+\Arf(\Sigma,\sigma_R)}$ as we saw before, which is equal to $(-1)^{\Arf(\hat \Sigma)}$ when $\Sigma$ is oriented.

We note that the $\RR$ spectra are equivalent to that of Type $\rm IIB$,
and the theory has a gravitational anomaly from the anti-self-dual 4-form, with no fermions to cancel it. 
As we discuss briefly in Appendix \ref{app:tadcanc}, consistency requires the theory to be coupled to fermionic open strings, giving a $U(32)$ gauge group \cite{Blumenhagen:1999uy,Bergman:1999km}.

\subsection{Branes and K-theory}
\label{sec:pin-Ktheory}
In the above analysis we identified two oriented Type \II strings, two oriented Type \o strings, and a number of unoriented Type \o strings. 
In this subsection we discuss their spectra of stable branes. 
To do so, we begin by briefly reviewing the well-known K-theory classification of stable branes for oriented theories. 

Recall that oriented Type \IIB on a spacetime $X$ has stable D-branes which are classified by the K-group $K(X)$  \cite{Witten:1998cd,Horava:1998jy}. 
This is the group of pairs of vector bundles $(E,F)$ over $X$ subject to an equivalence relation $(E \oplus H, F \oplus H) \sim (E, F)$. 
More precisely, one should consider the reduced K-group $\widetilde{K}(X)$, for which the bundles $E$ and $F$ are required to have the same rank. 
Physically, the idea is to begin with a stack of equal numbers of $\mathrm{D9}$- and $\overline{\mathrm{D9}}$-branes, and then to consider annihilation amongst these stacks.
When the vector bundles over these stacks are unequal this annihilation is not complete, and a residual brane of lower dimension is left over \cite{Witten:1998cd,Sen:1998sm,Sen:1999mg}. 
It is expected that all branes can be obtained in this way.  

For Type ${\rm IIA}$, stable branes are classified by the higher K-group $K^{1}(X)  = \widetilde{K}(X \times S^1)$.  
One might entertain the possibility of allowing for even higher K-groups $\widetilde{K}^{n}(X)$ for $n>1$. However, Bott periodicity states that for complex K-groups, 
\bea
\widetilde{K}^{n}(X) = \widetilde{K}^{n+2}(X)~.
\eea
Thus the only distinct complex K-groups are those mentioned above, and both are realized by string theories. 
The stable ${\rm D}p$-branes are captured by the groups $\widetilde{K}^n(S^{9-p})$, as listed in the first two rows of Table \ref{bigKtable}.

As discussed in Section \ref{Hilspacintsec}, the spectrum of massless $\RR$ fields in oriented Type $\oAB$ theories is precisely double that of Type ${\rm IIA/B}$. As such, one expects the spectrum of branes in Type ${\rm 0A/B}$ to be doubled as well; the two branes of given worldvolume dimension $(p+1)$ are typically denoted as ${\rm D}p$- and ${\rm D}p'$-branes. 
It follows that the classification of stable branes is via two copies of the complex K-groups just described. 
In other words, because there now exist both $\mathrm{D9}$- and $\mathrm{D9}'$-branes, we must consider two separate pairs of vector bundles corresponding to $\mathrm{D9}$-$\overline{\mathrm{D9}}$ and $\mathrm{D9'}$-$\overline{\mathrm{D9'}}$ stacks. 
So the branes in the Type \o theories are classified by
\bea
\label{doubledcomplexK}
{\widetilde{K}}^{n}(X)\oplus {\widetilde{K}}^{n}(X) \cong {\widetilde K}^{n}(X) \oplus {\widetilde K}^{-n}(X)~.
\eea
The equality above follows from the mod 2 periodicity of complex K-theory. 
As we now discuss, it is the latter form which generalizes to the unoriented case.

 It has long been known that stable branes in unoriented Type I string theory are classified by real K-theory $\widetilde{KO}(X)$ \cite{Witten:1998cd,Gukov:1999yn}. 
Crucially, the reduced real K-groups have a mod 8 periodicity \cite{Atiyah:1966qpo}, 
\bea
\widetilde{KO}{}^{n}(X) \cong \widetilde{KO}{}^{n+8}(X)~.
\eea
It is thus natural to guess that the eight $\Pin^-$ Type \o strings labeled by $n$ mod 8 have stable branes captured by $\widetilde{KO}{}^{n}(X)$. 
More precisely, because one again expects a doubled spectrum from these Type \o theories, the relevant group will be found to be  % ${DKO}^{n}(X)$, which we define as\footnote{%
%We note that $DKO^n(X)$ is \emph{not} a generalized cohomology theory.
%It is introduced only to simplify the notation.
%}
\bea
\label{doubleKOgroup}
 %{DKO}^{n}(X) := 
 \widetilde{KO}{}^{n}(X) \oplus \widetilde{KO}{}^{-n}(X)~.
\eea
In Sections \ref{sec:Dbraneopen} and \ref{sec:Dbraneclosed}, this group will be confirmed to classify the stable $p$-brane spectrum of the $\Pin^-$ Type \o theory with $n$ copies of $\ABK$. 
Concretely, this spectrum is obtained by evaluating  ${KO}^{\pm n}(X)$ on $X=S^{9-p}$, with the results listed in Table \ref{bigKtable}. 
The entries in this table can be obtained by noting that $\widetilde{KO}{}^{n}(S^k) = KO^{n-k}(pt)$, and then using the following values for real K-groups of points: 
\begin{equation}
\begin{aligned}
KO^0(pt) & =\ZZ,&
KO^{-1}(pt) &= \ZZ_2,&
KO^{-2}(pt) &= \ZZ_2,&
KO^{-3}(pt) &=0,\\
KO^{-4}(pt) &= \ZZ,&
KO^{-5}(pt) &= 0, &
KO^{-6}(pt) &= 0,&
KO^{-7}(pt) &=0.
\end{aligned}
\end{equation}
At this point we can check that the $\RR$ spectra we determined above agree with the non-torsion part of $\widetilde{KO}{}^n(S^{9-p})\oplus \widetilde{KO}{}^{-n}(S^{9-p})$.
The aim of the next two sections is to establish the agreement including the torsion parts.

For $\Pin^+$ Type \o strings, we saw in Section \ref{pin+strings} that these theories have the same $\RR$ spectra as oriented Type $\rm IIB$. 
As such, we expect to have the same classification via complex K-theory as in that case.  Since the $\Pin^+$ strings have less features not found previously than their $\Pin^-$ counterparts, 
we will be very brief about them in what follows.

It is worth noting that whenever tadpole cancellation requires the addition of D9-branes, the question of stability of ${\rm D}p$-branes must be revisited to account for the possibility of tachyonic modes of the strings stretched between the ${\rm D}p$- and ${\rm D}9$-branes. In this case, the K-theory classification outlined above may be modified, though we will not address these modifications.

\begin{table}[thp]
\begin{center}
\begin{tabular}{c|ccccccccccc}
\hline
   & $-1$& $0$ &   $1$ &    $2$& $3$&   $4$ & $5$ &$6$ & $7$& $8$ & $9$
\\\hline
$\widetilde{K}$ & $\ZZ$ & 0 & $\ZZ$ &0  &$\ZZ$ & 0 &$\ZZ$&0&$\ZZ$&0&$\ZZ$ 
\\
$\widetilde{K}^{1}$ &0& $\ZZ$ & 0 & $\ZZ$ &0  &$\ZZ$ & 0 &$\ZZ$&0&$\ZZ$&0
 \\\hline
$\widetilde{KO}{}^{0}{\oplus}\widetilde{KO}{}^{-0}$ & $2\ZZ_2$ & $2\ZZ_2$ & $2\ZZ$ &0  &0 & 0 &$2\ZZ$&0&$2\ZZ_2$&$2\ZZ_2$&$2\ZZ$ 
\\
$\widetilde{KO}{}^{1}{\oplus}\widetilde{KO}{}^{-1}$ &  $\ZZ_2$ & $ \ZZ {\oplus} \ZZ_2$ &$\ZZ_2$  &$\ZZ$  & $0$  &$ \ZZ $&$0$ &$\ZZ{\oplus} \ZZ_2$ &$\ZZ_2$&$\ZZ{\oplus} \ZZ_2$&$\ZZ_2$ 
\\
$\widetilde{KO}{}^{2}{\oplus}\widetilde{KO}{}^{-2}$ & $2\ZZ$ & $0$ & $\ZZ_2$ &$\ZZ_2$  &$2\ZZ $ & $0$ &$ \ZZ_2 $&$\ZZ_2$&$2\ZZ$&$0$&$\ZZ_2$ 
\\
$\widetilde{KO}{}^{3}{\oplus}\widetilde{KO}{}^{-3}$ & $0$ & $\ZZ$ & $0$  &$\ZZ{\oplus}\ZZ_2$ & $\ZZ_2$ &$\ZZ{\oplus} \ZZ_2$&$ \ZZ_2$&$\ZZ$&$0$&$\ZZ$&$0$
\\
$\widetilde{KO}{}^{4}{\oplus}\widetilde{KO}{}^{-4}$ & 0& 0 & $2\ZZ$ &0  &$2\ZZ_2$ & $2\ZZ_2$ &$2\ZZ$&0&0&0&$2\ZZ$ 
\\
$\widetilde{KO}{}^{5}{\oplus}\widetilde{KO}{}^{-5}$ &$0$ & $\ZZ$ & $0$  &$\ZZ{\oplus}\ZZ_2$ & $\ZZ_2$ &$\ZZ{\oplus} \ZZ_2$&$ \ZZ_2$&$\ZZ$&$0$&$\ZZ$&$0$
\\
$\widetilde{KO}{}^{6}{\oplus}\widetilde{KO}{}^{-6}$ & $2\ZZ$ & $0$ & $\ZZ_2$ &$\ZZ_2$  &$2\ZZ $ & $0$ &$ \ZZ_2 $&$\ZZ_2$&$2\ZZ$&$0$&$\ZZ_2$ 
\\
$\widetilde{KO}{}^{7}{\oplus}\widetilde{KO}{}^{-7}$ &  $\ZZ_2$ & $ \ZZ {\oplus} \ZZ_2$ &$\ZZ_2$  &$\ZZ$  & $0$  &$ \ZZ $&$0$ &$\ZZ{\oplus} \ZZ_2$ &$\ZZ_2$&$\ZZ{\oplus} \ZZ_2$&$\ZZ_2$ 
\\\hline
\end{tabular}
\end{center}
\caption{The ten K-groups capturing stable branes in the oriented Type \II and (un)oriented Type \o theories discussed above. }
\label{bigKtable}
\end{table}%

%%%%%%%%%%%%%%%%%%%%%%%%%%%%%%%%%%%%%%%%%%%

\section{D-brane spectra via boundary fermions}
\label{sec:Dbraneopen}

In this section, we demonstrate the ${KO}^{n}\oplus {KO}^{-n}$ classification of stable branes for the $\Pin^-$ Type \o theory with $n$ copies of $\ABK$. This is done by analyzing the Clifford modules carried by open string endpoints.
After doing so, we also study the spectra of non-stable branes in these theories, including the gauge groups supported on their worldvolumes and the representations of their open string tachyons. 

\subsection{(0+1)d Majorana fermions and their anomalies}
\label{sec:1dfermion}
\subsubsection{Fermions and Clifford algebras}
We begin by considering  systems of $(0+1)$d Majorana fermions,
which appear on the boundary of $n$ copies of  the $\ABK$  theory.
Let us say we have $r+s$ hermitian fermion operators $\xi_a=(\xi_a)^\dagger$, $a=1,\ldots, r+s$,
satisfying 
\begin{equation}
\xi_a^2 = +1 \qquad a=1,\ldots, r+s
\end{equation} and \begin{equation}
{\Tt}\xi_a{\Tt}^{-1}= 
\begin{cases}
+\xi_a & a=1,\dots,r , \\
-\xi_a & a=r+1,\dots,r+s .
\end{cases}
\end{equation}
In the mathematics literature it is more common to consider operators invariant under $\Tt$.
This can be achieved by defining 
\begin{equation}
\begin{cases}
\gamma_a = \xi_a & i=1,\ldots, r, \\
\gamma_a = i \xi_a & i=r+1,\ldots, r+s.
\end{cases}
\end{equation}
The $\g_a$ are no longer hermitian in general, but satisfy $\Tt \gamma_a \Tt^{-1} = \gamma_a$.
They generate the real Clifford algebra $\Cl(r,s)$.
We often use the abbreviations $\Cl(+n):=\Cl(n,0)$ and $\Cl(-n):=\Cl(0,n)$.\footnote{%
Our convention is that $\Cl(-n)=C_n$ and $\Cl(+n)=C'_n$ in the notation of Atiyah-Bott-Shapiro \cite{ABS}.}

The system of $r+s$ fermions giving rise to $\Cl(r,s)$ can have anomalies in the realization of $\Tt$ and $(-1)^\f$, which will be the topic of the next subsection. Before proceeding, we now give a rough argument for why these anomalies
depend only on $r-s$ modulo 8.
First we argue that only $r-s$ is relevant for the anomaly.
The reason  is that  a pair of fermions with opposite $\Tt$ transformations allow a $\Tt$-invariant mass term, 
so they cannot be anomalous. 
For example, when $(r,s)=(1,1)$, we can simply add a $\Tt$-invariant mass term $i\xi_1\xi_2$, which would trivialize the vacuum, precluding any anomaly.

We next argue that only $r-s \mod 8$ matters.
This can be understood in two steps.
As the first step, we consider $\Cl(4)$.
For this we can introduce a $\Tt$-invariant quartic hermitian interaction term \begin{equation}
{H}=\xi_1\xi_2\xi_3\xi_4
\end{equation} to the system,
for which the vacuum is two-dimensional, purely bosonic, and $\Tt^2=-1$.
This realizes a Kramers doublet, on which \begin{equation}
\sigma_x := i\xi_1\xi_2, %= i \xi_3 \xi_4, 
\qquad
\sigma_y:= i\xi_1\xi_3, %= i \xi_4 \xi_2, 
\qquad
\sigma_z:=i\xi_1\xi_4 %= i \xi_2 \xi_3, 
\label{foo2}
\end{equation}
act as Pauli matrices.

As the second step, we  combine two such Kramers doublets obtained from two copies of $\Cl(4)$, 
and make a single bosonic system with $\Tt^2=+1$ and a unique vacuum.
This can be done by introducing a $\Tt$-invariant term \begin{equation}
{H}':=\sigma_x \otimes \sigma_x + \sigma_y \otimes \sigma_y + \sigma_z\otimes \sigma_z,
\end{equation} 
which we note can be realized as a four-fermi operator using \eqref{foo2}.

In other words, we can introduce to $\Cl(8)$ 
a  four-fermion term $c_{ijk\ell}\xi^i\xi^j\xi^k\xi^\ell  $ which is hermitian and $\Tt$-invariant,
such that its addition leads to a non-degenerate vacuum \cite{Fidkowski:2009dba}.
Therefore, eight Majorana fermions with the same time-reversal properties can be removed without affecting the anomaly.

\subsubsection{More general (0+1)d systems and anomalies}

We do not necessarily have to couple $n$ boundary Majorana fermions to $n$ copies of the $\ABK$ theory.
We only have to couple a boundary system which has the \emph{same} anomaly as $n$ Majorana fermions.
We thus need to understand the possible anomalies concerning the realizations of $\Tt$ and $(-1)^\f$.
In the mathematical literature this analysis was first done in \cite{WallGradedBrauer},
in which the following eight-fold classification in terms of three signs was given.

The first sign is the most subtle to define. 
We ask whether $(-1)^\f$ can be realized in an irreducible ungraded representation of the algebra.
If this is possible, the representation is of type $+$, and if not, it is of type $-$.
As an example, consider $\Cl(+1)$. 
There are two ungraded irreducible representations, which are real one-dimensional such that $\xi_1 = \pm1$.
Clearly there is no $(-1)^\f$ operator that anticommutes with $\xi_1$,
meaning that  $\Cl(+1)$ is of type $-$.

In the following, for representations of type $-$, we adjoin $(-1)^\f$ to the algebra and consider the resulting irreducible graded representations.
Again take $\Cl(+1)$ as an example. Then we consider a representation given by \begin{equation}
\xi_1 = \begin{pmatrix}
0 & 1\\
1 & 0
\end{pmatrix},
\qquad (-1)^\f = \begin{pmatrix}
1 & 0 \\
0 & -1
\end{pmatrix}.
\end{equation} 
This is not irreducible as an ungraded representation but is irreducible as a graded representation.

The other two signs specifying the anomaly type are easier to define. The second sign is the one appearing in \begin{equation}
\Tt (-1)^\f = \pm (-1)^\f \Tt,
\end{equation} while the third sign is the one appearing in \begin{equation}
(\Tt)^2 = \pm 1.
\end{equation}
The eight types, labeled by $n=0,\dots, 7$, are summarized in the left portion of Table~\ref{anomalyTable}.
There, we showed $\Tt^2=\pm1$ in terms of the corresponding division algebras $\mathbb{R}$ and $\mathbb{H}$.
In \cite{WallGradedBrauer} it was shown that the tensor product of a representation of type $n$ and another of  type $n'$ has the type $n+n'$ modulo 8.
A very explicit analysis of $\Cl(+n)$ was given in Sec.~2.5.1 and 2.5.2 of \cite{Stanford:2019vob}, 
from which one can find that $\Cl(+n)$ is indeed of type $n$.
Similarly, $\Cl(-n)$ is of type $-n$.

\begin{table}
\centering
$$
\begin{array}{|c||ccc|}
\hline
n & \text{type} & \text{$\Tt$ and $(-1)^\f$} & \Tt^2 \\
\hline
0 & + & \text{commute} & \mathbb{R} \\
1 & - & \text{anticommute} & \mathbb{R} \\
2 & + & \text{anticommute} & \mathbb{R} \\
3 & - & \text{commute} & \mathbb{H} \\
4 & + & \text{commute} & \mathbb{H} \\
5 & - & \text{anticommute} & \mathbb{H} \\
6 & + & \text{anticommute} & \mathbb{H} \\
7 & - & \text{commute} & \mathbb{R} \\
\hline
\end{array}
\qquad
\begin{array}{|c||cc|ccc|c|}
\hline
n & \multicolumn{2}{|c|}{\cA } &\cA_0 & \cA_1^+ & \cA_1^- & \sqrt{\dim_{\mathbb{R}}}\\
\hline
0 & \Cl(+0)= & \mathbb{R} & \mathbb{R} &&&1\\
1 & \Cl(+1)= & \mathbb{R}\oplus \mathbb{R} &\mathbb{R} &\mathbb{R}&& \sqrt{2}\\
2 & \Cl(+2)= &\mathbb{R}[2] &\mathbb{C}&\mathbb{C}&& 2\\
3 & \Cl(+3)= &  \mathbb{C}[2] &\mathbb{H}&\mathbb{R}^3& \mathbb{R}^1& 2\sqrt{2}\\
4 &   & \mathbb{H} &\mathbb{H} &&& 2\\
5 &  \Cl(-3)= & \mathbb{H}\oplus \mathbb{H} &\mathbb{H}& \mathbb{R}^1&\mathbb{R}^3& 2\sqrt{2}\\
6 & \Cl(-2)= &   \mathbb{H} &\mathbb{C}&&\mathbb{C}& 2\\
7 & \Cl(-1)= & \mathbb{C} &\mathbb{R} &&\mathbb{R}& \sqrt{2}\\
\hline
\end{array}
$$
\caption{ The properties of $\Tt$ and $(-1)^\f$ for the eight anomaly types $n=0,\ldots, 7$.
The corresponding graded division algebras $\cA=\cA_0\oplus\cA_1$, 
$\cA_1=\cA_1^+\oplus \cA_1^-$ are also given.
\label{anomalyTable}}
\end{table}

In \cite{WallGradedBrauer} it was also shown that any graded irreducible representation of type $n$
automatically contains an action of  the minimal algebra $\cA$ for that type.
This information is shown in the right portion of Table~\ref{anomalyTable}.
There, $\mathbb{F}[n]$ stands for the $n\times n$ matrix algebra over the field $\mathbb{F}$,
$\cA_0$ and $\cA_1$ are the bosonic and fermionic parts of the algebra,
and $\cA_1^\pm$ are the subspaces of $\cA_1$ which are hermitian and anti-hermitian, respectively.\footnote{%
Note that the hermitian conjugate on a real algebra is simply an involution satisfying $(ab)^*=b^*a^*$.
}
These eight graded algebras are known to exhaust the graded division algebras over $\mathbb{R}$,
i.e.~graded algebras such that any homogeneous element has an inverse.

\subsubsection{On the boundary Hilbert space}
\label{sec:bdyHilbertspace}
Consider $n$ copies of the Kitaev chain on a segment $\sigma\in [0,\pi]$.
We call $\sigma=0$ the left boundary and $\sigma=\pi$ the right boundary.
The Majoranas $\xi$ on the left  and $\xi'$ on the right must have opposite transformation properties under time-reversal \cite{Fidkowski:2009dba,Witten:2015aba}. 
This can be seen by imagining a process in which the endpoints of the segment join to give a closed circle.
Once the endpoints come together, the $\Tt$-invariant mass term involving $\xi$ and $\xi'$ should be able to gap the system, so $\xi$ and $\xi'$ needs to have opposite $\Tt$-transformation properties.

We work in the convention\footnote{%
In Section \ref{sec:Dbraneclosed}, the choice of the convention here will correspond to a choice of definition of the ${\rm O}9$-plane state.} 
that we have $n$ fermions $\xi_i$ with $\Tt \xi_i \Tt^{-1} = +\xi_i$ on the left
and $n$ fermions $\xi_i'$ with $\Tt \xi_i' \Tt^{-1} = -\xi_i'$ on the right.
They form $\Cl(n)$ and $\Cl(-n)$, respectively.
The Hilbert space associated to the open string segment, including the bulk and two boundaries, 
can be identified with $\Cl(n)$ itself,
on which $\Cl(n)$ acts from the left and $\Cl(-n)$ acts from the right.
That the Hilbert space on the segment should naturally be equal to the boundary algebra $\Cl(n)$ itself is clear from the state-operator correspondence.
If we have other degrees of freedom on the worldsheet, the Hilbert space on the segment is of the form \begin{equation}
\mathcal{H}_\text{phys}= \Cl(n)\otimes \mathcal{H}_\text{other dof}.\label{bar}
\end{equation}

Naively, one would like to say that this Hilbert space is the tensor product of the Hilbert spaces  on the two boundaries. 
The square root $\sqrt{\dim_{\mathbb{R}}\cA}$, listed in Table~\ref{anomalyTable}, 
is then what would be taken as the dimension of the boundary Hilbert spaces.
Note that this is not always an integer.
It is sometimes useful to have a well-defined, non-anomalous boundary Hilbert space with integer dimension.
This can be done by introducing $n$ auxiliary boundary fermions $\Xi_i$ with $\Tt \Xi_i \Tt^{-1} =-\Xi_i$ on the left
and $n$ auxiliary boundary fermions $\Xi_i'$ with $\Tt \Xi_i'\Tt^{-1}=+\Xi_i'$ on the right;
this technique was used in e.g.~\cite{Gao:2010ava} in the Type \II setting.
We note that auxiliary boundary fermions have opposite $\Tt$-transformation rules as compared to the physical boundary fermions.
The fermions on the left boundary now form $\Cl(n,n)$ and can be quantized without anomaly. This can be
represented on a space $V$, thus providing Chan-Paton indices to the boundary.
We can do the same on the right.
The Hilbert space on the segment, including both the physical and auxiliary boundary fermions, is then of the form \begin{equation}
V\otimes V^* \otimes \mathcal{H}_\text{other dof} = \mathcal{H}_\text{aux} \otimes \mathcal{H}_\text{phys}
\end{equation} 
where $\mathcal{H}_\text{aux} =\Cl(-n)$ and $\mathcal{H}_\text{phys}$ was defined in \eqref{bar}.
Note that the elements of $\mathcal{H}_\text{phys}$ can be found by finding operators
(anti)commuting with all the auxiliary boundary fermions $\Xi$.

Below, when we say that ``a boundary carries a representation of $\Cl(n)$,''
we mean that there are physical boundary fermions forming $\Cl(n)$,
where $n$ is taken modulo 8.
If we use the auxiliary boundary fermions, they form $\Cl(-n)$.

\subsection{D-branes and boundary fermions}
\label{sec:ninebraneferm}

Let us now study the boundary fermions in the context of the worldsheet theory of $\Pin^-$ Type \o strings,
with $n$ copies of the $\ABK$ theory.
We first consider open strings ending on $9$-branes.
We then have $n$ boundary fermions on the left forming $\Cl(n)$
and $n$ boundary fermions on the right forming $\Cl(-n)$, as discussed above.
The open-string Hilbert space, before GSO projection, is of the form \begin{equation}
\Cl(n)\otimes \mathcal{H}_0
\end{equation}  where $\mathcal{H}_0$ is the open string Hilbert space of the massless worldsheet fields.
Also as discussed above, we can replace $\Cl(n)$ with any graded algebra having the same anomaly. 
 
The restriction to  $9$-branes above meant that the only boundary fermions required were those needed to cancel the anomaly of $n$ copies of $\ABK$. However, for branes of higher codimension there will be additional anomalies from bulk fermion zero-modes, which should be accompanied by additional boundary fermions $\zeta_i$. 
Let us set $n=0\mod 8$ for the moment, since the $\ABK$ boundary fermions can be easily reinstated later. 
In order to understand the additional boundary Majoranas $\zeta_i$, we must first understand when zero-modes can appear in the bulk of the string. 
As reviewed in Appendix \ref{App:RNS}, this depends on whether the ends of the string have Dirichlet ($\rm D$) or Neumann ($\rm N$) boundary conditions. 
The $\rm NN$, $\rm DD$, and $\rm ND$ strings satisfy the following boundary conditions,
\begin{align}
\label{NNDDND}
\mathrm{NN}: \hspace{0.5 in} \psi^\m(t,0) \,\,&=\,\, -\eta_1\, \tilde\psi^\m(t,0), \hspace{0.8 in}\psi^\m(t,\pi) \,\,=\,\, +\eta_2 \,\tilde\psi^\m(t, \pi),
\no\\
\mathrm{DD}: \hspace{0.5 in} \psi^\m(t,0)\,\, &=\,\, +\eta_1\, \tilde\psi^\m(t,0),\hspace{0.8 in}\psi^\m(t,\pi) \,\,=\,\,- \eta_2 \,\tilde\psi^\m(t, \pi),
\no\\
\mathrm{ND}:\hspace{0.5 in} \psi^\m(t,0) \,\,&=\,\, -\eta_1 \,\tilde\psi^\m(t,0), \hspace{0.8 in}\psi^\m(t,\pi)\,\, = \,\, -\eta_2\,\tilde\psi^\m(t, \pi),
\end{align}
where $\psi, \tilde \psi$ are left- and right-moving bulk fermions and $\eta_{1,2} = \pm1$ specifies the boundary conditions at $\sigma =0,\pi$. 
Though only the relative sign in these conditions is ultimately important,
 our current conventions are such that $\eta_{1,2}=+1$ represents a ${\rm D}p$-brane at $\sigma= 0,\pi$, while $\eta_{1,2}=-1$ represents a ${\rm D}p'$-brane at $\sigma=0,\pi$. 
 This is described in more detail in Appendices \ref{App:RNS} and \ref{boundstateapp}.

 \begin{figure}[!tbp]
\centering
\begin{minipage}{.3 \textwidth} 
\centering
\scalebox{0.8}{ \begin{tikzpicture}
      
    \draw[black, very thick] (0,0) -- (0.5,0.5);
    \draw[black, very thick] (0,0) -- (0,2);
    \draw[black, very thick] (0.5,0.5) -- (0.5,2.5);
    \draw[black, very thick] (0,2) -- (0.5,2.5);
  \shade[top color=blue!40, bottom color=blue!10] (0,0) -- (0.5,0.5) -- (0.5,2.5) -- (0,2)-- cycle;

   \draw[black, very thick] (2,0) -- (2.5,0.5);
    \draw[black, very thick] (2,0) -- (2,2);
    \draw[black, very thick] (2.5,0.5) -- (2.5,2.5);
    \draw[black, very thick] (2,2) -- (2.5,2.5);
     \shade[top color=blue!40, bottom color=blue!10]  (2,0) -- (2.5,0.5) -- (2.5,2.5) -- (2,2)-- cycle;

\draw (1.25,-1)node{$(a)$};
  \draw[ thick] node[below]{${\rm D}p$}(0.25,1.25) node{\tiny$\bullet$}..controls(0.75,1.5).. (1.25,1.25)node[above]{$\NS$} ; 
  \draw[ thick] (1.25,1.25) ..controls(1.75,1).. (2.25,1.25) node{\tiny$\bullet$}; 
   \draw (2.23,0.02) node[below]{${\rm D}p$};
 \end{tikzpicture}}
\end{minipage}%
\begin{minipage}{.3 \textwidth} 
\centering
\scalebox{0.8}{ \begin{tikzpicture}
      
    \draw[blue, thick] (-0.25,0.75) -- (0.75,1.75);

   \draw[black, very thick] (2,0) -- (2.5,0.5);
    \draw[black, very thick] (2,0) -- (2,2);
    \draw[black, very thick] (2.5,0.5) -- (2.5,2.5);
    \draw[black, very thick] (2,2) -- (2.5,2.5);
     \shade[top color=blue!40, bottom color=blue!10]  (2,0) -- (2.5,0.5) -- (2.5,2.5) -- (2,2)-- cycle;

\draw (1.25,-1)node{$(b)$};
  \draw[ thick] node[below]{${\rm D}(9-k)$}(0.25,1.25) node{\tiny$\bullet$}..controls(0.75,1.5).. (1.25,1.25)node[above]{$\NS$} ; 
  \draw[ thick] (1.25,1.25) ..controls(1.75,1).. (2.25,1.25) node{\tiny$\bullet$}; 
   \draw (2.23,0.02) node[below]{${\rm D}9$};
 \end{tikzpicture}}
\end{minipage}%
\begin{minipage}{.3 \textwidth} 
\centering
\scalebox{0.8}{ \begin{tikzpicture}
      
  \draw[black, very thick] (0,0) -- (0.5,0.5);
    \draw[black, very thick] (0,0) -- (0,2);
    \draw[black, very thick] (0.5,0.5) -- (0.5,2.5);
    \draw[black, very thick] (0,2) -- (0.5,2.5);
  \shade[top color=blue!40, bottom color=blue!10] (0,0) -- (0.5,0.5) -- (0.5,2.5) -- (0,2)-- cycle;
 
   \draw[black, very thick] (2,0) -- (2.5,0.5);
    \draw[black, very thick] (2,0) -- (2,2);
    \draw[black, very thick] (2.5,0.5) -- (2.5,2.5);
    \draw[black, very thick] (2,2) -- (2.5,2.5);
     \shade[top color=red!40, bottom color=red!10]  (2,0) -- (2.5,0.5) -- (2.5,2.5) -- (2,2)-- cycle;
\draw (1.25,-1)node{$(c)$};
  \draw[ thick] node[below]{${\rm D}p$}(0.25,1.25) node{\tiny$\bullet$}..controls(0.75,1.5).. (1.25,1.25)node[above]{$\R$} ; 
  \draw[ thick] (1.25,1.25) ..controls(1.75,1).. (2.25,1.25) node{\tiny$\bullet$}; 
   \draw (2.23,0.02) node[below]{${\rm D}p'$};
 \end{tikzpicture}}
\end{minipage}%
\newline\newline\newline\centering
\begin{minipage}{1 \textwidth} 
\centering
\scalebox{0.8}{ \begin{tikzpicture}
      
    \draw[red,thick] (-0.25,0.75) -- (0.75,1.75);

 \draw[black, very thick] (2,0) -- (2.5,0.5);
    \draw[black, very thick] (2,0) -- (2,2);
    \draw[black, very thick] (2.5,0.5) -- (2.5,2.5);
    \draw[black, very thick] (2,2) -- (2.5,2.5);
     \shade[top color=blue!40, bottom color=blue!10]  (2,0) -- (2.5,0.5) -- (2.5,2.5) -- (2,2)-- cycle;

  \draw[ thick] node[below]{${\rm D}(9-k)'$}(0.25,1.25) node{\tiny$\bullet$}..controls(0.75,1.5).. (1.25,1.25)node[above]{$\R$} ; 
  \draw[ thick] (1.25,1.25) ..controls(1.75,1).. (2.25,1.25) node{\tiny$\bullet$}; 
   \draw (2.23,0.02) node[below]{${\rm D}9$};
   
   \draw (4.25,1.25)node{$\mathbf\cong$};
      \draw (4.25,-1)node{$(d)$};
   
       \draw[blue,thick] (5.75,0.75) -- (6.75,1.75);

   \draw[black, very thick] (8,0) -- (8.5,0.5);
    \draw[black, very thick] (8,0) -- (8,2);
    \draw[black, very thick] (8.5,0.5) -- (8.5,2.5);
    \draw[black, very thick] (8,2) -- (8.5,2.5);
   \shade[top color=blue!40, bottom color=blue!10]  (8,0) -- (8.5,0.5) -- (8.5,2.5) -- (8,2)-- cycle;

\draw (6.23,0.02)node[below]{${\rm D}(k+1)$};
  \draw[ thick] (6.25,1.25) node{\tiny$\bullet$}..controls(6.75,1.5).. (7.25,1.25)node[above]{$\NS$} ; 
  \draw[ thick] (7.25,1.25) ..controls(7.75,1).. (8.25,1.25) node{\tiny$\bullet$}; 
   \draw (8.23,0.02) node[below]{${\rm D}9$};
   \draw (9.2,0)node{};
 \end{tikzpicture}}
\end{minipage}%
\caption{$(a)$ A pair of ${\rm D}p$-branes with an $\NS$ sector string stretched between them; 
$(b)$ an $\NS$ sector string stretched between a  ${\rm D}(9-k)$- and ${\rm D}9$-brane; 
$(c)$ an $\R$ sector string stretched between a  ${\rm D}p$- and ${\rm D}p'$-brane; 
$(d)$ an $\R$ sector string stretched between a ${\rm D}(9-k)'$- and ${\rm D}9$-brane, which can be thought of as an $\NS$ sector string stretched between a ${\rm D}(k+1)$- and ${\rm D}9$-brane, as far as the behavior of the fermion zero modes is concerned.}
\label{fig:branes}
\end{figure}
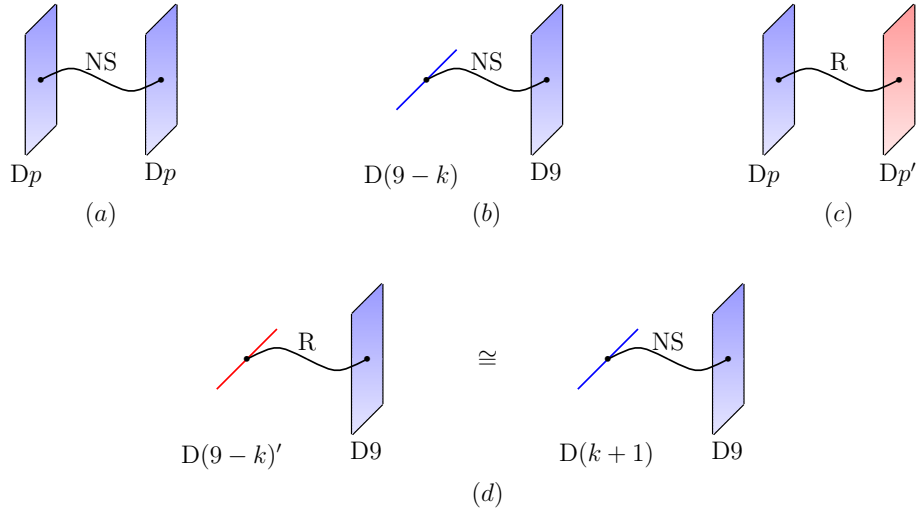

The presence of zero-modes depends on the relative choice of $\eta_{1,2}$. 
This choice is in turn related to the choice of $\rm NS$ or $\rm R$ sectors on the open string;
we declare that  the string is in the $\rm NS$ sector for $\eta_1 = \eta_2$ and in the $\rm R$ sector for $\eta_1 = - \eta_2$; see Appendix  \ref{App:RNS}.
From (\ref{NNDDND}) we then conclude that $\rm NS$ sector fermions can have zero-modes along $\rm ND$ and $\rm DN$ directions, 
whereas $\rm R$ sector fermions can have zero-modes along $\rm NN$ and $\rm DD$ directions.

Consider now a coincident pair of $9$-branes of type $\eta$ (Fig.~\ref{fig:branes}(a)). 
In light-cone gauge, there are zero $\rm ND$ directions and a total of $8$ $\rm{NN}+\rm{DD}$ directions. 
Since open strings stretching between branes of the same type are in the $\rm NS$ sector, any potential zero-modes would be in $\rm ND$ directions, of which there are none. 
Hence there are no bulk zero-modes, and thus no boundary fermions. 
The same holds more generally for coincident pairs of $p$-branes of type $\eta$. 

We now consider a $9$-brane coincident with a $(9-k)$-brane, both of type $\eta$ (Fig.~\ref{fig:branes}(b)). 
We consider an open string between them, which is again in the $\rm NS$ sector since the branes are still of the same type. However, there are now  $k$ $\rm DN$ directions, and hence $k$ zero-modes. We denote the restriction of these zero modes to the boundary at the $(9-k)$-brane by $\psi^i_0$.
As in (\ref{Ttransfdef}), the boundary time-reversal operator ${\Tt}$ acts on these by ${\Tt}\psi_0^i {\Tt}^{-1} = \tilde \psi_0^i$. 
By (\ref{NNDDND}), if the $(9-k)$-brane is at $\sigma = 0$ then we have $\psi_0^i = \eta \tilde \psi_0^i$, whereas if it is at $\sigma = \pi$ we have $\psi_0^i = -\eta \tilde \psi_0^i$. Together we conclude that
 \begin{equation}
{\Tt}\psi_0^i {\Tt}^{-1} = \begin{cases}
+ \eta\psi_0^i & \sigma=0,\\
- \eta\psi_0^i & \sigma=\pi.
\end{cases} \label{zeromodeT}
\end{equation}

When $k \neq 0 \mod 8$, these zero-modes will lead to anomalies in $\Tt$ and $(-1)^\f$, and in order to cancel them we must add extra boundary fermions. 
At $\sigma = 0$ there are two options: we may add $k$ boundary fermions $\zeta_i$ satisfying ${\Tt}\zeta_i{\Tt}^{-1} = - \eta \zeta_i$, 
or $8-k$ boundary fermions satisfying ${\Tt}\zeta_i{\Tt}^{-1} = +\eta \zeta_i$. These two choices are effectively identical, 
as they both behave as $\Cl(-\eta k)$. 
On the boundary at $\sigma = \pi$, one likewise requires a representation of $\Cl(+\eta k)$.

For full consistency, we must check that the same additional boundary fermions also cancel tentative anomalies in mixed strings stretching between ${\rm D}p$- and ${\rm D}q'$-branes.
To begin, consider coincident ${\rm D}p$- and ${\rm D}p'$-branes (Fig.~\ref{fig:branes}(c)). 
There are then zero $\rm ND$ directions and a total of $8$ $\rm NN$+$\rm DD$ directions. 
Since the branes are now of opposite types, the open strings stretched between them are in the $\rm R$ sector. 
Recall that in the $\rm R$ sector, there are zero-modes not from $\rm ND$ directions, but rather from $\rm NN$ and $\rm DD$ directions. 
Hence in this case there are a total of 8 bulk zero-modes. 
But because there are eight of them, there is no time-reversal anomaly, and no boundary fermions are necessary. 
This is consistent with the fact that the strings stretching from the ${\rm D}p$-branes to themselves, or from the ${\rm D}p'$-branes to themselves, were shown to not require any boundary fermions.  

Another way to analyze this setup is to note that, insofar as counting zero-modes is concerned,  switching $\eta \rightarrow - \eta$ is equivalent to switching Neumann and Dirichlet boundary conditions in all directions, as is evident from (\ref{NNDDND}). 
Then the ${\rm D}p'$-brane can be replaced by a ${\rm D}(10-p)$-brane, and we now have a setup with 8 $\rm ND$ directions and zero $\rm NN$ and $\rm DD$ directions (in light-cone gauge). 
Since the branes are of the same type now, the open string stretched between them is in the ${\rm NS}$ sector, and we can again conclude that we have $8$ zero-modes. 
This method of replacing the ${\rm D}p'$-brane with a ${\rm D}(10-p)$-brane  is useful in analyzing e.g.~the case of a ${\rm D}(9-k)'$- and ${\rm D}9$-brane, which can be replaced by a ${\rm D}(k+1)$- and ${\rm D}9$-brane (Fig.~\ref{fig:branes}(d)). 
By our above analysis, this then has $8-k$ bulk zero-modes which satisfy ${\Tt} \psi_0 {\Tt}^{-1} =- (-1) \psi_0 = +\psi_0 $ at $\sigma = 0$. 
These can be cancelled by adding $8-k$ boundary fermions $\zeta_i$ satisfying ${\Tt}\zeta_i{\Tt}^{-1} = -\zeta_i$, or $k$ boundary fermions satisfying ${\Tt}\zeta_i{\Tt}^{-1} = \zeta_i$, both of which give a representation of $\Cl(k)$. 
This matches the results found before for a  ${\rm D}(9-k)'$-brane. 
Using these techniques, one can easily check that the more general mixed strings between ${\rm D}p$- and ${\rm D}q'$-branes have vanishing anomaly when one makes the above assignments of boundary fermions. 

Finally, we may reintroduce non-zero $n$. 
As discussed above, this gives a representation of $\Cl(n)$ on the boundary at $\sigma = 0$ and of $\Cl(-n)$ on the boundary at $\sigma = \pi$. 
Let us now consider  an open string stretched between a $(9-k)$-brane of type $\eta$ at $\sigma = 0$ and a ${\rm D}(9-k)$-brane at $\sigma = \pi$. 
In the theory with $n$ copies of $\ABK$, the endpoint at $\sigma = 0$ carries a representation of $\Cl(n-\eta k)$, while the endpoint at $\sigma = \pi$ carries a representation of $\Cl(-n+\eta k)$.

\subsection{K-theory classification of branes}

We can now study the stability of D-branes in these theories. The analogous results in Type $\rm I$ were obtained in \cite{Bergman:2000tm,Asakawa:2002ui,Gao:2010ava}. 
We first note that the tachyon vertex operator $\zeta$ is a fermionic, hermitian boundary operator such that the boundary interaction\begin{equation}
\int dt \, i \psi^\m_0 \zeta D_\m \mathcal{T}(X).
\end{equation} is  $\Tt$-invariant. 
Here $\psi_0^\m$ is the restriction of the bulk fermion field to the boundary
and $\mathcal{T}(X)$ is the spacetime tachyon profile.
In particular, the $\psi_0^\m$ appearing here should be those associated to the Neumann directions, which have opposite $\Tt$ transformations as those given in (\ref{zeromodeT}). Thus for the coupling to be $\Tt$-invariant, we need $\zeta$ to have the \textit{same} $\Tt$ transformations as those given in (\ref{zeromodeT}),\bea
\label{tachyontransf}
 {\Tt} \zeta {\Tt}^{-1} = \left\{ \begin{matrix} +\eta\zeta  &\quad \sigma = 0, \\ -\eta\zeta &\quad \sigma = \pi. \end{matrix} \right. 
\eea
Stable branes are those for which such $\zeta$ does not exist.

It is cumbersome to carry out the analysis below for the two cases $\eta=\pm1$  separately. 
This can be circumvented by analyzing the $\sigma=0$ end when $\eta=+1$ and the $\sigma=\pi$ end when $\eta=-1$. This means that we work with the Clifford algebra $\Cl(n-\eta k)$ for $\eta = +1$ and $\Cl(-n+\eta k)$ for $\eta = -1$, or in other words just  $\Cl(\eta n-k)$ for both cases.
Therefore, below we simply analyze $\Cl(\eta n-k)$ and look for a tachyon vertex operator satisfying \begin{equation}
\Tt \zeta {\Tt}^{-1} = \zeta,
\end{equation} for both cases $\eta=\pm1$.
This means that the results to be obtained depend only on \begin{equation}
\nu := \eta n-k.
\end{equation}

When $\nu=0\mod 8$, the minimal choice of Chan-Paton algebra is simply $\mathbb{R}$, as can be seen from Table~\ref{anomalyTable}.
We can enlarge the Chan-Paton algebra by introducing $N$ bosonic indices and $N'$ fermionic indices.
Then the Chan-Paton algebra has the bosonic part $\mathbb{R}[N] \oplus \mathbb{R}[N']$
and the fermionic part $\mathbb{R}^N\otimes \mathbb{R}^{N'} \oplus \mathbb{R}^{N'}\otimes \mathbb{R}^{N}$,
where we recall that the algebra $\mathbb{F}[N]$ stands for the $N\times N$ matrix algebra with entries in $\mathbb{F}$.
The $\Tt$-invariant fermionic part is then $\mathbb{R}^N\otimes \mathbb{R}^{N'}$.
This means that the gauge group is $O(N)\times O(N')$
and the tachyon field is in the bifundamental.
The stable branes are those such that the tachyon representation is empty, which occurs for $N\ge 0$ and $N'=0$ or $N=0$ and $N'\ge 0$. These can be labeled by $\mathbb{Z}$.
The case $\nu=4$ is similar; one simply replaces $\mathbb{R}$ by $\mathbb{H}$ and $O$ by $Sp$.

In the other six cases, the minimal choice of Chan-Paton algebra $\cA$ already contains both the bosonic and the fermionic part.
We can introduce additional Chan-Paton indices $i=1,\ldots,N$,
thus enlarging the Chan-Paton algebra to $\cA[N]:=\cA\otimes \mathbb{R}[N]$.
From Table~\ref{anomalyTable}, we see that $\cA[N]$ has the following structure:
\begin{equation}
\begin{array}{|c||ccc|}
\hline
\n &\cA[N]_0 & \cA[N]_1^+ & \cA[N]_1^- \\
\hline
1 &\mathbb{R}[N] &\mathbb{R}[N]^\text{symm.}&\mathbb{R}[N]^\text{anti.}\\
2 & \mathbb{C}[N]&\mathbb{C}[N]^\text{symm.}&\mathbb{C}[N]^\text{anti.}\\
3 & \mathbb{H}[N]&\mathbb{H}[N]^\text{symm.}&\mathbb{H}[N]^\text{anti.}\\
5 &\mathbb{H}[N]&\mathbb{H}[N]^\text{anti.}&\mathbb{H}[N]^\text{symm.}\\
6 & \mathbb{C}[N]&\mathbb{C}[N]^\text{anti.}&\mathbb{C}[N]^\text{symm.}\\
7 &\mathbb{R}[N] &\mathbb{R}[N]^\text{anti.}&\mathbb{R}[N]^\text{symm.}\\
\hline
\end{array}
\end{equation}
where $\cA[N]_{0,1}$ denotes the bosonic and fermionic parts
and $\cA[N]_1^{\pm}$ denotes the hermitian and anti-hermitian parts.
The symmetrization  and antisymmetrization are defined as usual for $\mathbb{R}$ and $\mathbb{C}$, and as the symmetric and antisymmetric tensor square for the fundamental representation of $Sp(N)$ for $\mathbb{H}$.
It is instructive to check that the table above reduces to Table~\ref{anomalyTable} when $N=1$.
From this it is easy to read off the gauge group and the representation of the tachyons for the D-branes, which is given in Table~\ref{Gaugetachtable}.
One may check that by setting $n=0$ in Table \ref{Gaugetachtable} we get back the results of \cite{Bergman:2000tm,Asakawa:2002ui,Gao:2010ava} for Type $\rm I$. 

The stable branes are those for which the tachyons are absent.
For $\nu=0,4$ we already discussed that they are classified by $\mathbb{Z}$.
For $\nu=1,2,3,5$ the tachyon field is always present.
Finally, for $\nu=6,7$ the tachyon field is absent when $N=1$ but appears when $N=2$,
and hence the classification is by $\mathbb{Z}_2$.
These results match with $KO^{\nu}(pt)$. 

Let us reanalyze this setup using the auxiliary boundary fermions discussed in Section \ref{sec:bdyHilbertspace}.
Our physical boundary fermions form an algebra $\cA$ whose anomaly is of type $\nu$.
We introduce auxiliary boundary fermions $\Cl(-\nu)$ to cancel the anomaly, so $\Cl(-\nu)\otimes \cA$ is non-anomalous, acting on the Chan-Paton vector space $V$.
The tachyon field corresponds to a hermitian fermionic element $\zeta$ in $\cA$.
This means that if such a tachyon field is present, the $\Cl(-\nu)$ action on $V$ is extended to a $\Cl(1-\nu)$ action on $V$.
In other words, on the D-brane characterized by $\nu$, 
a Chan-Paton space $V$ with an action of $\Cl(-\nu)$ is necessary,
and it is unstable if and only if this $\Cl(-\nu)$ action can be extended to an action of $\Cl(1-\nu)$.
Using the result in \cite{ABS}, this can be directly connected to $KO^\nu(pt)$.
To see this, recall that in \cite{ABS} Atiyah, Bott, and Shapiro considered \begin{equation}
M_{-n} := \text{free $\bZ$-modules generated by graded irreducible representations of $\Cl(-n)$}
\end{equation}
and considered the natural map $i^*: M_{-n-1} \to M_{-n}$ induced by $i:\Cl(-n)\to \Cl(-n-1)$.
It was then shown that \begin{equation}
KO^{-n}(pt)= M_{-n} / i^*(M_{-n-1}). 
\end{equation} 
We note that by replacing $\Xi$ by $\hat\Xi=\Xi (-1)^\f$, a graded representation of $\Cl(n)$ can be converted to a graded representation of $\Cl(-n)$ and vice versa. 
Therefore we also have\begin{equation}
KO^{-n}(pt)= M_{n} / j^*(M_{n+1})
\end{equation}
where $j^*: M_{n+1} \to M_{n}$ is now induced by $j:\Cl(n)\to \Cl(n+1)$.
Then, Chan-Paton spaces with $\Cl(-\nu)$ action modulo those with $\Cl(-\nu+1)$ action are
clearly classified by $KO^{\nu}(pt)$.

We note that this restatement also shows that the brane of type $\nu$ can naturally host an orthogonal bundle with an additional action of $\Cl(-\nu)$.
It would be interesting to connect this observation to the definition of KO-theories in terms of Clifford bundles, which can be found e.g.~in \cite{Karoubi:2008rqk}.

\begin{table}[tb]
\begin{center}
\setlength{\tabcolsep}{3pt}
    \renewcommand{\arraystretch}{1.2}
\begin{tabular}{c|ccccccccccc}
\hline
$\n$ & 0 & 1 & 2 & 3 & 4 & 5 & 6& 7
 \\\hline
 $G_\n$ & $O(N) \times O(N')$ & $O(N)$ & $U(N)$ & $Sp\left(\frac{N}{2}\right)$ & $Sp\left(\frac{N}{2}\right)\times Sp\left(\frac{N'}{2}\right)$ & $Sp\left(\frac{N}{2}\right)$ & $U(N)$ & $O(N)$
  \\
  $\rho_{\zeta}$ & bifund. & symm. & symm. & symm. & bifund. & anti. & anti. & anti.
  \\
  $\lambda_\n$ & $1$& $\sqrt{2}$ & $2$ & $2 \sqrt{2}$ & $2$ & $ 2\sqrt{2}$ & $2$ & $\sqrt{2}$
   \\
 $KO^\n(pt)$ & $\ZZ$ & 0 & 0& 0& $\ZZ$ & 0 & $\ZZ_2$ & $\ZZ_2$
 \\
  \hline
\end{tabular}
\end{center}
\caption{Gauge groups $G_\n$ and tachyon representations $\rho_{\zeta}$ on the worldvolume of $N$ $(9-k)$-branes of type $\eta$ in the theory with $n$ copies of $\ABK$. 
Note that $\n = \eta n -  k = \eta n + p -1$ mod 8. 
``Bifund.'' refers to the bifundamental representation, 
whereas ``symm.'' (``anti.'') refer to the symmetric (anti-symmetric) rank 2 tensor representations.
We also listed the tensions of these branes, as well as their K-theory classifications, which can be found by considering when the tachyon field is empty.}
\label{Gaugetachtable}
\end{table}%

\subsubsection{Brane tension from boundary fermions}
Finally, let us comment on the tensions of the various branes, both stable and unstable, identified thus far. In order to obtain the tension one computes a disc path integral, with the boundary of the disc anchored on the brane --- this may be interpreted as the one-point function of the brane with a graviton in the closed string picture. 
If there are Majorana fermions present on the boundary of the disc, these will give a contribution to the path integral. 
In particular, as mentioned in the Introduction the contribution of a $(0+1)$d Majorana fermion to the path integral is an overall factor of $\sqrt{2}$. 
This holds in all cases, including Type \II and Type \o strings. 

We begin by discussing the more familiar case of Type $\rm II$, for which there are branes of only a single type $\eta = 1$. 
As we have discussed, the Type \IIB and \IIA theories are distinguished by the presence of respectively $n=0,1$ Majorana fermions on the boundary. 
For this reason, the non-BPS ${\rm D}9$-branes in Type \IIA have a tension which is $\sqrt{2}$ times that of their BPS counterparts in Type $\rm IIB$. 
However, it is not the case that \textit{all} branes in Type \IIA have tensions which are larger by this factor. 
Indeed, let us consider branes of codimension $k$. 
By considering open strings stretched between such branes and ${\rm D}9$-branes, we may conclude that strings ending on such branes must have $k$ bulk zero modes --- this is argued for as above, by noting that a change in codimension leads to a change in the number of $\rm DN$ or $\rm ND$ directions for the open string endpoints. 
It is then possible for the anomalies of the invertible phase and bulk zero modes to cancel. 
In fact, since in this case the anomaly is only $\ZZ_2$ we can conclude that the number of boundary fermions is only $|n-k| \mod 2$. 
As such, we conclude that ${\rm D}p$ branes in Type \IIAB have tensions $(\sqrt{2})^{|n-k| \,{\rm mod}\, 2}\, T_p^{\rm II}$, where $T_p^{\rm II}$ is the tension of the corresponding stable non-torsion $p$-brane. 
In particular, this tells us that for Type ${\rm IIB}$, branes with $p$ odd have tensions $T_p^{\rm II}$ (essentially by definition) while for $p$ even the branes have tensions $\sqrt{2}T_p^{\rm II}$. 
Likewise for Type $\rm IIA$, branes with $p$ even have tensions $T_p^{\rm II}$, while branes with $p$ odd have tensions $\sqrt{2}T_p^{\rm II}$. 

We now generalize this to the case of Type \o strings. 
As explained above, in this case the bulk theory carries the same anomaly as $\nu$ boundary Majorana fermions.
 It is natural to choose the minimal realization of the anomaly in the boundary system, which is a system of Majorana fermions in a representation of $\Cl(+\nu)$ if $\nu<4$ and of $\Cl(-\nu)$ if $\nu>4$. Then the tension of the corresponding branes is weighted by the path integral over the system of fermions, yielding the result $\l_\n T_p^{\rm 0}$ where $\l_\n:=\sqrt{\dim_{\mathbb{R}}\cA}$ as given in Table~\ref{anomalyTable}. 
 For example, in the case of a ${\rm D}9$- or ${\rm D}9'$-brane in the theory with $1$ or $7$ copies of $\ABK$, we find that the tension is $\sqrt{2}$ times that in the theory with no copies of $\ABK$.
The values $\l_\n$ are tabulated in Table \ref{Gaugetachtable}, and will be reproduced for stable branes via the boundary state formalism in Section \ref{sec:Dbraneclosed}.

\subsection{Vacuum manifolds of tachyons as classifying spaces}
\label{sec:vacamp}
We now reinterpret the results of this section in terms of tachyon condensation on the worldvolume of $9$-branes. 
In the string theory literature, this is referred to as the Atiyah-Bott-Shapiro (ABS) construction  \cite{ABS,Witten:1998cd,Horava:1998jy}. 

We start from the $9$-brane of type $\eta$.
As before we use the trick of analyzing the $\sigma=0$ boundary when $\eta=+1$
and the $\sigma=\pi$ boundary when $\eta=-1$.
The boundary then carries an action of $\Cl(\eta n)$.
To construct a $(9-k)$-brane located at  $X^1 = \dots = X^{k} = 0$,
we choose a set of time-reversal invariant Majorana fermion operators $\zeta_{i=1,\ldots, k}$ in $\Cl(\eta n)$ (for this purpose one might need to replace $\eta n$ by $\eta n + 8m $ for some integer $m$) and use them to assemble the tachyon field
\bea
\label{taupm}
\mathcal{T}(X) = f(|X|) \sum_{i=1}^{k} X^i \zeta_i~,
\eea
with $f(|X|)$ some convergence factor chosen such that $\mathcal{T}(X)$ obtains its vacuum value as $|X| \rightarrow  \infty$ and such that $f(|X|)^2 \sum_{i=1}^{k} (X^i)^2=1$. 
Upon condensation this tachyon gives rise to the desired $(9-k)$-brane.
As we used up $k$ out of $\eta n$ Majorana fermions we originally had on the $9$-brane,
we end up with the residual action of $\Cl(\eta n-k)$,
as argued before in a different manner.

We now discuss the connection between tachyon profiles and KO-theory. 
Consider the usual $9$-$\overline{9}$ stack, supporting an $O(N) \times O(N)$ gauge group for $n=0\mod 8$. 
We now follow Sen's construction and consider a domain wall configuration of the tachyon such that it condenses to produce a $\mathrm{D}8$-brane \cite{Sen:1999mg}. 
This involves assignment of a vacuum value of $\zeta(X)$ to each point on the ``sphere'' $S^0$ at infinity. 
Thus such domain wall configurations are classified by $\pi_0(\cV_9)$, where $\cV_9$ is the vacuum manifold for the tachyon on $\mathrm{D}9$-brane.
The tachyon potential should be such that it breaks $O(N) \times O(N)$ to the $O(N)$ supported on the resulting $\mathrm{D8}$. 
A minimal assumption is then 
\bea
\cV_9 = \frac{O(\infty) \times O(\infty)}{O(\infty)} \cong O(\infty)~
\eea
where we took the formal limit in which the number of original ${\rm D}9$-branes is infinite.
We can repeat the argument for lower $\mathrm{D}p$-branes,
and find that they would similarly be classified by $\pi_{8-p}(\cV_9)$.

The $\mathrm{D}p$-brane can also be obtained from tachyon condensation on $\mathrm{D}q$-branes with $p<q<9$.
The vacuum manifold $\cV_q$ for the tachyon field on the $\mathrm{D}q$-brane can be determined from Table~\ref{Gaugetachtable}:
\begin{equation}
\begin{aligned}
\cV_2&= \frac{O(\infty)  }{ O(\infty)\times O(\infty) }\times\ZZ~,&
\cV_3&= \frac{U(\infty)  }{ O(\infty)}~, &
\cV_4&= \frac{Sp(\infty) }{ U(\infty)}~, &
\cV_5&= \frac{Sp(\infty) \times Sp(\infty)}{Sp(\infty)}~, &\\
\cV_6&= \frac{Sp(\infty) }{ Sp(\infty) \times Sp(\infty) }\times \ZZ ~,&
\cV_7&= \frac{U(\infty)  }{ Sp(\infty)} ~,&
\cV_8&= \frac{O(\infty)  }{ U(\infty)}~,&
\cV_9&= \frac{O(\infty) \times O(\infty)}{O(\infty)}~
\end{aligned}
\label{Enspectrum}
\end{equation}
where the subscript is defined modulo 8.
One then expects that $\mathrm{D}p$-branes can be classified by $\pi_{q-p-1}(\cV_q)$.

All of this is compatible with the statement that the $\mathrm{D}p$-branes are classified by $\widetilde{KO}{}^0(S^{9-p})$ 
thanks to the mathematical fact \cite{Karoubi:2008rqk} that
$KO_n := \cV_{n+2}$ is the classifying space of KO-theory,  in the sense that 
 \begin{equation}
\widetilde{KO}{}^n(X) = [X,KO_n].
\end{equation}
The $KO_n$ form an $\Omega$-spectrum, which entails the relation $KO_n \simeq \Omega KO_{n+1}$.
We then have \begin{equation}
\pi_{q-p-1}(\cV_q) = [S^{q-p-1}, KO_{q-2}] 
= [pt, \Omega^{q-p-1} KO_{q-2}]
= [pt, KO_{p-1}]
=\widetilde{KO}{}^0 (S^{9-p}).
\end{equation}

%%%%%%%%%%%%%%%%%%%%%%%%%%%%%%%%%%%%%%%%%%%
\section{D-brane spectra via boundary states}
\label{sec:Dbraneclosed}
%%%%%%%%%%%%%%%%%%%%%%%%%%%%%%%%%%%%%%%%%%%
%%%%%%%%%%%%%%%%%%%%%%%%%%%%%%%%%%%%%%%%%%%
In the previous section, we were able to verify the K-theory classification of $\Pin^-$ Type \o strings from the open string perspective. In this section, we rederive these results from a closed string perspective. 
In addition to gaining more intuition about the behavior of invertible phases on worldsheets, we will use the closed string perspective to verify our earlier results on the tensions of stable branes appearing in each theory. 
These tools will also be used to address the issue of tadpole cancellation in Appendix \ref{app:tadcanc}.

From the closed string point of view, D-branes correspond to states in the closed string Hilbert space. The basic idea is to define these states by imposing the open string boundary conditions (\ref{NNDDND}) as gluing conditions. Upon appropriate rotation of the Euclidean worldsheet, one finds that the correct conditions to impose on boundary states $|{\rm B}p, \eta \rangle$  are
  \begin{align}
  {\rm N:}\quad (\a_n^\m- \tilde{\a}_{-n}^\m)|{\rm B}p, \eta \rangle &= (\psi_r^\m - i \eta \tilde{\psi}_{-r}^\m)|{\rm B}p, \eta \rangle = 0  ~,
   \no\\
  {\rm D:}\quad (\a_n^\m + \tilde{\a}_{-n}^\m)|{\rm B}p, \eta \rangle &= (\psi_r^\m + i \eta \tilde{\psi}_{-r}^\m)|{\rm B}p, \eta\rangle = 0 ~,
     \end{align}
as reviewed in Appendix \ref{boundstateapp}. Here $r$ is integer/half-integer for the $\rm R$/$\rm NS$ sector. The solution to these equations ends up being the coherent state defined in (\ref{Bpdef}). Furthermore, in Appendix \ref{boundstateapp} it is shown that the physical D-brane states in Type \o theories are in fact linear combinations of these $|{\rm B}p, \eta\rangle$, of the form
\bea
\label{Type0Dpdef}
|{\rm D}p,\eta\rangle =  \frac{1}{\cN_{{\rm B}p}} \,\frac{1}{\sqrt{2}} ( \eta |{\rm B}p, \eta\rangle_{\NSNS} +  |{\rm B}p,\eta \rangle_{\RR} )~,\hspace{0.5 in} \eta = \pm1
\eea
where $|{\rm D}p,+\rangle$ represents a ${\rm D}p$-brane, $|{\rm D}p,-\rangle$ represents a ${\rm D}p'$-brane,  and $\cN_{{\rm B}p}=2^{\frac{5}{2}} (4 \pi^2 \a')^{\frac{p-4}{2}}$ is a normalization factor obtained in Appendix \ref{Dbranenorms}. This result will be the starting point for our analysis in this section. %Many computational details will be relegated to Appendix \ref{boundstateapp}.

\subsection{Matching non-torsion brane spectra}

In this subsection we begin by rederiving the stable non-torsion brane spectra of the eight $\Pin^-$ Type \o theories via the boundary state formalism. 
The analysis of the torsion brane spectra will be carried out in Section \ref{App:torsionbranes}. 

\label{Omegasection1}
To understand the spectrum of branes in the theory with $n$ copies of $\ABK$, we need to understand the action of the worldsheet parity operator $\Omega$ on D-brane states in the presence of the invertible phase. We begin by understanding the action of $\Omega$ on the constituent boundary states $|{\rm B}p, \eta\rangle$. 
In Section \ref{sec:pinminintro}, we described the action of $\Omega$ on the ground states in the theory with $n$ copies of $\ABK$.
From this and \eqref{1omega} it easily follows that the action of $\Omega$ on the boundary states is given by
\bea
\label{1omega0}
\Omega|{\rm B}p, \eta\rangle_{\NSNS}= |{\rm B}p, \eta \rangle_{\NSNS}~, \hspace{0.4 in} \Omega|{\rm B}p, \eta\rangle_{\RR} = i^{\eta\n} |{\rm B}p, \eta \rangle_{\RR}~,
\eea
with the parameter $\n= \eta n -  k$ defined in Section \ref{sec:Dbraneopen}. 

The branes which survive the orientifolding are those which are left invariant by $\Omega$; i.e.~those for which $\nu$ is zero modulo 4.
Therefore, the non-torsion ${\rm D}p$-branes are in one-to-one correspondence with non-torsion elements of $\widetilde{KO}{}^n(S^k)$, while the non-torsion ${\rm D}p'$-branes are in one-to-one correspondence with non-torsion elements of $\widetilde{KO}{}^{-n}(S^k)$,
as listed in Table \ref{bigKtable}.

Although $\n$ is a mod 8 parameter, the presence of non-torsion branes depends on $\n$ only mod 4. 
Indeed, this is to be expected since (\ref{1omega0}) was obtained by considering only the values of $\ABK$ on the Klein bottle $K_2$. 
However, as discussed in Section \ref{sec:pinminintro}, $K_2$ is not the generating manifold for $\mho^2_{\mathrm{Pin}^-}(pt) = \ZZ_8$. 
Rather, the generating manifold is $\mathbb{RP}^2$, and since $K_2 \cong \mathbb{RP}^2 \# \mathbb{RP}^2$ we currently have access to only a $\ZZ_4$ of the full $\ZZ_8$. 

In contrast, the open string states \textit{do} depend on $\n$ mod 8.
This is because the action of $\Omega$ on open string ground states comes from the M{\"o}bius strip amplitude. 
The  M{\"o}bius strip $M_2$  contains a single $\mathbb{RP}^2$, whose  amplitude is  $e^{i \pi\ABK(\mathbb{RP}^2)/4} =e^{\pm i \pi /4} $, as we saw in Section \ref{sec:Pin-SPT}. 
Then we have the schematic action 
\bea
\label{omegacpfactors}
\Omega |0;ij\rangle \sim e^{\pm i n\pi /4}  |0;ji\rangle~
\eea
where $i,j$ are Chan-Paton indices.
Therefore shifting $n \rightarrow n+4$ changes the action of $\Omega$ on the Chan-Paton factors by a minus sign, between symmetric and anti-symmetric. 
This is precisely as expected when going between ${\rm O}9^-$- and ${\rm O}9^+$-planes, and in agreement with the fact that $\widetilde{KO}{}^{ n+4}(X) \cong \widetilde{KSp}{}^{ n}(X)$.

%%%%%%%%%%%%%%%%%%%%%%%%%%%%%%%%%%%%%%%%%%%%%%%%%%%%%
\subsection{Matching torsion brane spectra}
\label{App:torsionbranes}

We now use the boundary state formalism to check that the torsion brane spectra predicted by $\widetilde{KO}{}^{n}(X)\oplus\widetilde{KO}{}^{-n}(X)$ are reproduced by the theory with $n$ copies of $\ABK$. 
The study of torsion branes \cite{Sen:1998rg,Sen:1998ii,Sen:1999mg} in the boundary state formalism for unoriented theories was first done in \cite{Bergman:1998xv,Frau:1999qs,Gaberdiel:2000jr}. 
Here we will adapt these methods to the $\Pin^-$ Type 0 theories. 

In contrast to the generic Type 0 non-torsion brane (\ref{Type0Dpdef}), the generic torsion branes of Type \o consists of only the $\NSNS$ portion, 
\bea
|\widetilde{{\rm D}p}, \eta\rangle = \frac{\lambda_\n}{\cN_{Bp}}\frac{\eta}{\sqrt{2}}  |{\rm B}p, \eta\rangle_{\NSNS}~,  \hspace{0.8 in}\eta = \pm1
\eea
where $\lambda_\n>0$ is a normalization factor related to the tension of the torsion brane, which will be shown to depend only on $\n$. 
This state, though GSO invariant, is not stable in the oriented Type \o theory since the tachyon is not projected out. 
This can be detected by computing the overlap of two boundary states in the closed string tree-channel, and then doing a modular transformation to the open string loop-channel, where a tachyon appears. 
However, in the unoriented theory the gauging of $\Omega$ can project out the tachyon. 
This is seen at the level of the amplitude by cancellation of the tachyon piece of the cylinder amplitude with the analogous piece of the M{\"o}bius strip amplitude.

To determine the spectrum of stable torsion branes, we begin by computing the closed string cylinder diagram in tree-channel. 
Using the results of (\ref{BpNSBpNS}),  we have 
\bea
\cA_{C_2} = \int_0^\infty d l \, \langle \widetilde{{\rm D}p},\eta | e^{- 2 \pi l H_{\text{cl}}}| \widetilde{{\rm D}p},\eta \rangle = \frac{\l_\n^2}{2^6} v_{p+1} \int_0^\infty \frac{d\ell}{\ell^{\frac{9-p}{2}}} \frac{f_3^8(2i \ell)}{f_1^8(2 i \ell)}~.
\eea
The functions $f_i(\tau)$ are defined in (\ref{eq:fs}) in terms of Jacobi theta functions and the Dedekind eta function. 
We can translate the tree-channel amplitude to loop-channel using the transformation $l=1/2t$ and the modular $S$ transformations in \eqref{eq:modS}, which yields
\bea
\cA_{C_2} =\half  {\lambda_\n^2} \, v_{p+1}\int_0^\infty \frac{dt}{(2t)^{\frac{p+3}{2}}}   \frac{f_3^8(i t)}{f_1^{8}(i t)}~,
\eea
with the answer independent of $\eta$ and $n$. 
Using the $q$-expansions of the $f_i(\tau)$ given in \eqref{eq:fs}, we may isolate the tachyon contribution,
\bea
\label{Type0tachC}
\cA_{C_2} \big|_{\rm tachyon} =\half \lambda_\n^2 \, v_{p+1} \int_0^\infty \frac{dt}{(2 t)^{\frac{p+3}{2}} } \, e^{\pi t}~.
\eea

We must now check under which circumstances this can be cancelled by a contribution from the M{\"o}bius strip. 
In order to calculate the M{\"o}bius strip amplitude, we use the orientifold state $|{\rm O}p \rangle$ introduced in (\ref{finalType0crosscap})  and calculate in the loop-channel
\bea
\cA_{M_2} &=& \int_0^\infty dl \left( \langle { \rm O}9 | e^{-2 \pi \ell H_\text{cl}} | \widetilde{{\rm D}p}, \eta\rangle + \langle \widetilde{{\rm D}p} , \eta| e^{-2 \pi \ell H_\text{cl}} | {\rm O}9\rangle \right) 
\no\\
&=& - \l_\n  v_{p+1} \int_0^\infty {d \ell } \left[e^{i\frac{ \pi n \eta}{4}} \left(\frac{f_3^8 f_4^{2(9-p)}}{f_1^{p-1}} \right)\left(2 i \ell + \half \right) 
\right.
\no\\
&\vphantom{.}&\hspace{2 in} \left.- e^{- i\frac{ \pi n \eta}{4}} \left(\frac{f_4^8 f_3^{2(9-p)}}{f_1^{p-1}} \right)\left(2 i \ell + \half \right) \right]
\no\\
&=& -\frac{\l_\n}{2 } v_{p+1} \int_0^\infty \frac{dt}{(2t)^{\frac{p+3}{2}}}\left[ e^{i \frac{\pi}{4}(9-p-n \eta )}  \left(\frac{f_3^8 f_4^{2(9-p)}}{f_1^{p-1}} \right)\left( i t + \half \right) \right.
\no\\
&\vphantom{.}&\hspace{2 in} \left.-  e^{-i \frac{\pi}{4}(9-p-n \eta )}  \left(\frac{f_4^8 f_3^{2(9-p)}}{f_1^{p-1}} \right)\left( i t + \half \right) \right]~.
\eea
Isolating the tachyon contribution yields
\bea
\label{Type0tachM}
\cA_{M_2} \big|_{\rm tachyon} = {\lambda_\n } \sin \left[ \frac{\pi}{4} \n \right] \, v_{p+1} \int_0^\infty \frac{dt}{(2 t)^{\frac{p+3}{2}}} \,e^{\pi t}~.
\eea

Combining (\ref{Type0tachC}) and (\ref{Type0tachM}), we have
\bea
(\cA_{C_2} + \cA_{M_2})\big|_{\rm tachyon} =\half {\lambda_\n }\left({\lambda_\n } + 2 \sin \left[ \frac{\pi}{4} \n\right] \right) \, {v_{p+1}} \int_0^\infty \frac{dt}{(2 t)^{\frac{p+3}{2}}} {e^{\pi t}  }~.
\eea
Since $\lambda_\n$ is positive, the values of $p$ for which the tachyon can be cancelled are those such that $\sin \left[ \frac{\pi}{4} \n\right]  <0$. 
The tension of the corresponding torsion brane is then $\l_\n T^{\rm 0}_p$, where $\l_\n = -2\sin \left[ \frac{\pi}{4}\n\right] $ and $T^{\rm 0}_p$ is the tension of a stable non-torsion $p$-brane in Type $\rm 0$. 
For example, when $n=0$ we conclude that cancellation is possible if $p \in \{-1,0,7,8 \}$, regardless of $\eta =\pm1$. The corresponding branes have tension $2T^{\rm 0}_{p}$ for $p=-1,7$ and tension $\sqrt{2} T^{\rm 0}_p$ for $p=0,8$. 
These are in one-to-one correspondence with the torsion classes in $\widetilde{KO}{}^{n}(S^k)\oplus\widetilde{KO}{}^{-n}(S^k)$. 
The tensions determined here also match exactly with the ones listed in Table \ref{Gaugetachtable}.

Note that in this example, it would naively seem that $p=6$ yields an acceptable torsion brane as well. 
However, this is not the case. 
To understand this, recall that before orientifolding  these branes are unstable due to a tachyon in the $p$-$p$ strings 
--- these in particular are in the $(-1)^{\f}$ odd part of the spectrum, 
and are built out of a certain vacuum $|0\rangle^{odd}_{pp}$. 
One then notes that, without the presence of Chan-Paton factors \cite{Frau:1999qs},
\bea
\label{omega0oddpp}
\Omega |0\rangle^{odd}_{pp} = e^{-i \frac{\pi}{4} p} |0\rangle^{odd}_{pp}~.
\eea
Hence for $p=2,6$ we see that $\Omega^2 = -1$ on the vacuum.
To compensate, we need to introduce at least a two-dimensional Chan-Paton index $i=1,2$, in the doublet representation of $Sp(1)$.
Then the $\Omega$ projection keeps the antisymmetric combination $[ij]$ of the tachyon,
and thus there is no stable torsion $\mathrm{D6}$-brane.
Indeed this class is absent from $\widetilde{KO}{}^{n}(S^k)\oplus\widetilde{KO}{}^{-n}(S^k)$. 

The generalization of this condition for non-zero $n$ is as follows. 
The goal is to check that $\Omega^2 \neq -1$ on the relevant ground state, lest $\half(1+ \Omega)$ not be a valid projector for removing the tachyon. 
We know that non-zero $n$ modifies the open string ground states such that the action of $\Omega$ is modified by (\ref{omegacpfactors}). 
The cases with $n$ even descend from Type $\rm 0B$, in which case we have $p$ even, whereas the cases with $n$ odd descend from Type $\rm 0A$, in which case we have $p$ odd. 
Thus we conclude that $\Omega^2$ on the $|0;n\rangle_{pp}^{odd}$ ground state is given by $e^{-i \frac{ \pi}{2}\eta n} e^{-i \frac{\pi}{2} p}$, and hence we must exclude cases for which $p+\eta \,n = \pm 2,6,10,14, \dots$. 

To summarize, the conditions that must be satisfied by torsion $(9-k)$-branes of type $\eta$ in the theory with $n$ copies of $\ABK$ are the following,
\bea
\label{torbraneconds}
\sin \left[ \frac{\pi}{4} \n\right]  <0~, \hspace{0.8 in} 9 +  \n \neq \pm2 \mod 8~.
\eea
Note that the combination of $n,k$ appearing here, namely $\n = \eta n -  k\mod 8$, is the same one appearing in Section \ref{sec:Dbraneopen} and Section \ref{Omegasection1}. 
We have already checked above that the analysis via boundary states is in agreement with the previous analysis via boundary fermions when $n=0$.
Since the results \eqref{torbraneconds} only depend on $\nu$, this agreement is simply extended to the general case.

\subsection{\texorpdfstring{$\Pin^+$}{Pin+} Type 0 theories}

We may now briefly turn to the analysis of the brane spectra in $\Pin^+$ Type \o theories. Using the action of $\Omega$ given in (\ref{1omega0}) (with $\n \rightarrow - k$) and the action of $(-1)^{\fl}$ given in (\ref{fnprojcond}), one finds
\bea
\Omega_{\f}|{\rm B}p, \eta\rangle_{\NSNS}=- |{\rm B}p, -\eta \rangle_{\NSNS}~, \hspace{0.4 in} \Omega_{\f}|{\rm B}p, \eta\rangle_{\RR} = i^{-\eta k} |{\rm B}p, -\eta \rangle_{\RR}~.
\eea
in the trivial phase. In the non-trivial phase, one gets an extra sign in the action of $\Omega_{\f}$ on $\RR$ ground states, so more generally
 \bea
\Omega_{\f}|{\rm B}p, \eta\rangle_{\NSNS}=- |{\rm B}p, -\eta \rangle_{\NSNS}~, \hspace{0.4 in} \Omega_{\f}|{\rm B}p, \eta\rangle_{\RR} = i^{2n-\eta k} |{\rm B}p, -\eta \rangle_{\RR}~.
\eea
with $n=0,1$ labelling the trivial or non-trivial phase. The stable D-brane states which are invariant under $\Omega_{\f}$ then take the form 
\bea
|{\rm D}p\rangle =  \frac{1}{\cN_{{\rm B}p}} \,\frac{1}{2} (|{\rm B}p, +\rangle_{\NSNS} - |{\rm B}p, -\rangle_{\NSNS} +  |{\rm B}p,+ \rangle_{\RR} + i^{2n-k}  |{\rm B}p,- \rangle_{\RR} )~,
\eea
reminiscent of the Type \II states obtained in (\ref{Dpdef}). In fact, these states are invariant under $\Omega_{\f}$ for \textit{any} value of $k=9-p$, both odd and even. On the other hand, the fully physical D-brane states must also be invariant under $(-1)^{\f}$. It is easy to see that this requires $2n-k$ to be even. Thus for both $n=0,1$, we keep all states with $p$ odd, reproducing the full spectrum of non-torsion branes in Type $\rm IIB$. 

We may now ask about torsion branes. These would take the same form as in Type ${\rm I}$, given by 
\bea
|\widetilde{{\rm D}p}\rangle =  \frac{\lambda}{\cN_{{\rm B}p}} \,\frac{1}{2} (|{\rm B}p, +\rangle_{\NSNS} - |{\rm B}p, -\rangle_{\NSNS} )
\eea
with $\lambda > 0$ a normalization factor. In order for such a brane to be stable, we require the open string tachyon contributions from the cylinder and M{\" o}bius strip amplitudes to cancel. In order to calculate the M{\" o}bius strip amplitude, one must make use of the appropriate orientifold plane state, which is obtained in Section \ref{app:Oplanenorm} and shown in (\ref{finalType0pcrosscap}). Importantly, note that the $\Pin^+$ structure forces this state to be entirely in the $\RR$ sector. This means that the M{\"o}bius strip amplitude consists only of terms of the form ${}_{\NSNS}\langle {\rm B}p| \dots | {\rm B} p \rangle_{\RR}$, which vanish. There is thus no M{\"o}bius strip contribution at all, and hence we cannot expect any cancellation of tachyons, and so no stable torsion branes. In conclusion, the spectrum of stable branes in these theories is precisely the same as for oriented Type $\rm IIB$, and is classified by the complex K-group $K(X)$.

%%%%%%%%%%%%%%%%%%%%%%%%%%%%%%%%%%%%%%%%%%%%%%%%%%%%

\section{No new Type I theories}
\label{sec:typeI}

In this final section we classify unoriented Type \II (i.e. Type ${\rm I}$) strings. 
Unlike Type \o strings for which one may consider $\Pin^\pm$ theories separately, the unoriented Type \II strings possess a more complicated spin extension of $\OO(d)$, which contains both $\Pin^\pm(d)$ as subgroups. 
We refer to this as $\DPin$ structure, since it is a ``doubled'' $\Pin$ structure, and we define it precisely in Section \ref{sec:algebraic-topology}. 
That such a thing is necessary is to be expected: 
as explained in Section \ref{sec:unorSPT}, for $\Pin^-$ structure the boundary circle of the M\"obius strip is automatically in the $\NS$ sector, 
whereas for $\Pin^+$ structure the boundary circle is automatically in the $\R$ sector. 
In contrast, we know that Type $\rm I$ strings allow both $\NS$ and $\R$ boundary conditions on the M\"obius strip, 
so it is clear that these worldsheets must incorporate both $\Pin^\pm$.

In order to understand possible anomalies and invertible phases on the worlsheet  of unoriented Type \II theories, 
it is necessary to calculate the groups $\mho_{\DPin}^d(pt)$ for $d=2,3$. 
This may be done using the Atiyah-Hirzebruch spectral sequence for twisted spin bordism, as will be recalled in Section \ref{sec:AHSSintro}. 
Details of the calculation are relegated to Appendix \ref{App:AHSS}. 
The final result is that $\mho_{\DPin}^2(pt) = (\ZZ_2)^2$ and $\mho_{\DPin}^3(pt) = \ZZ_8$. 
The latter implies that the unoriented Type \II string is anomaly-free in ten dimensions. 
The former would naively suggest a quartet of string worldsheet theories, but as we discuss in Section \ref{sec:TypeISPT} only two of these theories are physically distinct. 
The two distinct options are the traditional Type $\rm I$ and $\tilde{{\rm I}}$ strings, corresponding to orientifoldings by  orientifold ${\rm O}9^\mp$-planes.

\subsection{`Spin structure' on the Type I worldsheet }
\label{sec:algebraic-topology}
The oriented Type \II worldsheet has separate spin structures for left- and right-movers, necessitating a $\Spin\times \bZ_2$ structure.
We would now like to understand the unoriented lift of this structure.
We note that the $\bZ_2$ part of the $\Spin\times \bZ_2$ structure acts on $(\psi, \tilde\psi)$ by $\diag(+1,-1)$,
and should therefore be mapped to $\diag(-1,+1)$ under orientation reversal. To formalize, we then need an extension of $\OO(d)$ by $\bZ_2\times \bZ_2$,
\begin{equation}
0\to \bZ_2\times \bZ_2 \to G\to \OO(d) \to 0
\end{equation}
such that the orientation reversal part of $\OO(d)$ exchanges the two $\bZ_2$ factors,
and such that when restricted to $\SO(d)$ the extension class is given by $(w_2,w_2)$.
Another way of saying this is that we would like a spin structure on the orientation double cover of the worldsheet. Note that for the worldsheet we have  $d=2$, but we will work more generally for the moment.

We now consider the effect of the homomorphism $s: \bZ_2\times \bZ_2\to \bZ_2$, $(a,b)\mapsto a b$ in the extension.
Over $\SO(d)$, the extension class of the image is $w_2+w_2=0$.
The image is also invariant under the orientation reversal part of $\OO(d)$.
Therefore the extension \begin{equation}
0\to s(\bZ_2\times \bZ_2)\to s(G)\to \OO(d)\to 0
\end{equation} is trivial and can be split: $s(G)\simeq \OO(d)\times \bZ_2$. 
There are two natural splitting; once there is a splitting, 
we can compose it with \begin{equation}
\begin{aligned}
\OO(d)\times \bZ_2 & \to \OO(d)\times \bZ_2 \\
(g,c)&\mapsto (g,c\, \det g)
\end{aligned} \label{fooproj}
\end{equation}
to get another.
So $G$ can also be put in the following sequence, \begin{equation}
0 \to \bZ_2 \to G\to \OO(d)\times \bZ_2 \to 0~.
\end{equation}
Its extension class is a linear combination of $w_2$, $w_1^2$, $aw_1$, and $a^2$,
where $a$ is the generator of $H^1(B\bZ_2,\bZ_2)$. To determine which linear combination, we may argue as follows.
First, since $\bZ_2$ is not extended to $\bZ_4$ within $G$, the term $a^2$ is not involved.
Second, since the entire group is not $\Pin^\pm \times \bZ_2$, we need the term $aw_1$.
Finally, since the extension is $\Spin(d)\times \bZ_2$ over $\SO(d)\times \bZ_2$, the class $w_2$ must be involved.
This means that the extension class is either $w_2 + w_1 a$ or $w_2+ w_1^2 + w_1 a$.
These two are exchanged by the change of the splitting, since \eqref{fooproj} sends $a\mapsto a+w_1$.
Let us pick $w_2+w_1^2+w_1 a$ for definiteness,
and write $c:G\to \bZ_2$ for the projection to the $\bZ_2$ factor.
The kernel of $c$ is $\Pin^-(d)$.
The kernel of $\det g: G\to \OO(d)\to \bZ_2$ is  $\Spin(d)\times \bZ_2$,
and the kernel of $c\, \det g$ is $\Pin^+(d)$.
So $G$ is an interesting mixture of all three groups --- as mentioned before, we refer to it as $G=\DPin(d)$.

For $d=2$ we can construct the group $\DPin(2)$ more directly. 
We first let  $\Spin(2)$ act on $(\psi,\tilde \psi)$ via 
$\diag(e^{i\theta/2},e^{-i\theta/2})$.
Next we supplement this with a chiral $\bZ_2$ acting via \begin{equation}
Z=\diag(+1,-1)~,
\end{equation}
and then further include orientation-reversing elements of $\Pin^-(2)$, which can be chosen to be elements of the Clifford algebra $\Cl(-2)$,  \begin{equation}
\g_0=\begin{pmatrix}
0 & i \\
i & 0
\end{pmatrix}~,\hspace{0.8 in}
\g_1=\begin{pmatrix}
0 & -1\\
1 & 0
\end{pmatrix}~.
\end{equation}
Note that we have  \begin{equation}
\g_0\g_1=\diag(i,-i)
\end{equation}  which is a lift of a $180^\circ$ rotation.
This means that $\g_0\g_1 Z = i \,\mathds{1}$ is also a lift of a $180^\circ$ rotation.
Then we see that  \begin{equation}
\tilde\g_0 := i\g_1~, \hspace{0.7 in} \tilde\g_1:= -i \g_0
\end{equation} are also in the group $\DPin(2)$. Together with $\Spin(2)$, these elements of $\Cl(+2)$ can be used to form $\Pin^+(2)$. In this way, we see explicitly how $\DPin(2)$ contains all three of $\Spin(2)\times \ZZ_2$, $\Pin^-(2)$, and $\Pin^+(2)$.

Now consider the M\"obius strip $M_2$. 
As discussed above, the boundary circle of $M_2$ is automatically in the $\NS$ sector for $\Pin^-$ since $\gamma_i^2=-1$, 
whereas for $\Pin^+$ structure the boundary circle of $M_2$ is automatically in the $\NS$ sector since $\tilde \gamma_i^2=1$. 
We may now contrast this with the case of $\DPin$ structure. 
In that case,  we can use either $\gamma_i$ or $\tilde \gamma_i$ to construct $M_2$, and depending on this choice we can have $\NS$ or $\R$ on the boundary circle. 
From the boundary state point of view, this means that the orientifold plane states should have both $\NS$ and $\R$ sector contributions, which matches the well-known result (\ref{finalTypeIcrosscap}).

\subsection{The group \texorpdfstring{$\mho^d_{\DPin}(pt)$}{?dDPin(pt)}}
\label{sec:AHSSintro}

To understand the anomalies and invertible phases present on unoriented Type \II worldsheets, we must calculate $\mho^d_{\DPin}(pt)$ for $d=2,3$. 
These groups may be calculated by using the Atiyah-Hirzebruch spectral sequence (AHSS) for twisted spin bordism groups.

We first recall twisted spin structures.
We consider a space $X$ with a real vector bundle $V$ over it.
Take a manifold $M$. 
A spin structure on $M$ twisted by $V$ is a pair 
 \begin{equation}
(f:M\to X \,,\, \text{spin structure on $TM\oplus f^*(V)$}).
\end{equation}
We can then consider the corresponding bordism group $\Omega^\Spin_d(X;V)$. 
Note that we have \begin{equation}
w_1(TM)=f^*(w_1(V)),\hspace{0.5 in}w_2(TM)= f^*(w_1(V)^2+w_2(V))
\end{equation} since we have a spin structure on $TM\oplus f^*(V)$.

For example, when when $V$ is a zero-dimensional trivial bundle this reduces to an ordinary spin bordism of $X$.
As another set of examples, take $L$ to be the real line bundle over $B\bZ_2$ 
such that $w_1(L)$ is the generator of $H^1(B\bZ_2,\bZ_2)=\bZ_2$.
Then we have
\begin{equation}
\Omega^\Spin_d(B\bZ_2; L^{\oplus n}) = 
\left\{\begin{array}{l@{}l@{\quad}l}
\Omega^\Spin_d&(B\bZ_2) & n= 0,\\
\Omega^{\Pin^-}_d&(pt) & n= 1,\\
\Omega^{\Spin^{\bZ_4}}_d&(pt) & n= 2,\\
\Omega^{\Pin^+}_d&(pt) & n= 3
\end{array}\right.
\end{equation}
where $\Spin^{\bZ_4}=(\Spin\times \bZ_4)/\bZ_2$.

The group $\DPin$ described above corresponds to taking $X=B\bZ_2\times B\bZ_2$
and using $V=(L_1\otimes L_2) \oplus L_2^{\oplus 3}$.
Here $L_1$ and $L_2$ are real line bundles such that $w_1(L_1)=w$ and $w_1(L_2)=a$,
where we denote  the generators of $H^1(B\bZ_2\times B\bZ_2,\bZ_2)=\bZ_2\times \bZ_2$ by $w$ and $a$.
Then \begin{equation}
w_1(V)=w,\hspace{0.7 in} w_2(V)=wa.
\end{equation}

Let us now recall the basics of the AHSS.
This review will not be comprehensive --- for more thorough introductions to the AHSS, the reader can consult e.g.~\cite{DK,Garcia-Etxebarria:2018ajm,Davighi:2019rcd,Hason:2019akw}. The basic ingredients in the AHSS are the $E_2$ page and a set of differentials. For twisted spin bordism, the $E_2$ page consists of $E^{p,q}_2=H^p( X, \underline{\mho^q_\Spin(pt)})$.
The underline in the group $\mho^{q}_\Spin(*)$ denotes the fact that the coefficient system is twisted by $w_1(V)\in H^1(X,\bZ_2)$. 
%We may note interestingly that the bottom line of the $E_2$ page is $H^p(X,\underline{U(1)})$.
%Let $X=BG$. 
%Then $w_1(V)$ determines a homomorphism $T:G\to \bZ_2$ and an action of $G$ on $U(1)$ by complex conjugation.
%The group $H^p(X,\underline{U(1)})$ is what Wen et al.~\cite{Chen:2011pg} use to classify bosonic SPT phases with symmetry $G$, where $T:G\to \bZ_2$ specifies which element of $G$ involves the time reversal. This means that the AHSS above tells one how to find twisted fermionic SPT phases starting from a bosonic SPT phase.
The differentials on the $E_2$ page are as follows:
\begin{itemize}
\item $d_2^{2}: E^{p,2}_2=H^p(X,\bZ_2) \to E^{p+2,1}_2=H^{p+2}(X,\bZ_2)$ is given by 
\begin{equation}
d_2^{2}(x)= \Sq^2x+ w_1(V) \Sq^1 x + w_2(V) x 
\end{equation}
\item $d_2^{1}: E^{p,1}_2=H^p(X,\bZ_2) \to E^{p+2,0}_2=H^{p+2}(X,\underline{U(1)})$ is given by 
\begin{equation}
d_2^{1}(x)=\iota\left( \Sq^2x +  w_1(V)\Sq^1 x + w_2(V) x\right)
=\iota\left( \Sq^2x +  w_2(V) x\right)
\end{equation}
where $\iota$ is the inclusion $ \bZ_2 \stackrel{\iota}{\hookrightarrow} U(1)$.\footnote{%
The second term in $d^1_2(x)$ can be dropped since the twisted Bockstein associated to $0\to \bZ_2\to \underline{U(1)} \to \underline{U(1)}\to 0$ is $\Sq^1+w(V)$, which implies that $\iota \circ(\Sq^1+w(V)) = 0$.
Therefore $\iota(w(V)\Sq^1 x )=\iota(\Sq^1\Sq^1 x)=0$.
The authors thank R.~Thorngren for this point.}
\end{itemize}

These differentials, together with $d_2^3$, were determined in \cite{Thorngren:2018bhj}.
They were also deduced previously in \cite{Teichner,Zhubr} when $w_1(V)$ is trivial.
Alternatively, they can be deduced using the identity \begin{equation}
\Omega^\Spin_d(X;V)=\tilde \Omega^\Spin_{d+\dim V} (\mathop{\mathrm{Thom}}(V))
\end{equation} where $\mathop{\mathrm{Thom}}(V)$ is the Thom space of $V$.
Indeed, denoting the Thom class by $U$, this equality implies \begin{equation}
d^2_2(x)U = \Sq^2(xU), \qquad d^1_2(x)U=\iota(\Sq^2 (xU)).
\end{equation}
We then simply use the Cartan formula for the action of the Steenrod square,
and the definition of the Stiefel-Whitney classes as $\Sq^d U=w_d(V) U$.

Details on our calculation will be given in Appendix \ref{App:AHSS},
where we demonstrate how one computes $\mho^{2,3}_X(pt)$ not only for $X=\DPin$
but also for  $X=\Spin\times \bZ_2$ and $\Pin^\pm$ to illustrate the methods involved.
The results of the computation are that $\mho^2_{\DPin}(pt) = (\ZZ_2)^2$ and $\mho^3_{\DPin}(pt) = \ZZ_8$, precisely as for the oriented Type \II strings.

\subsection{Invertible phases for \texorpdfstring{$\DPin$}{DPin} structure}
\label{sec:TypeISPT}
As in the case of oriented Type \II strings, the fact that $\mho^3_{\DPin}(pt) = \ZZ_8$ means that worldsheet anomalies conveniently cancel in ten dimensions. Also as for oriented Type \II strings, the result $\mho^2_{\DPin}(pt) = (\ZZ_2)^2$ implies four worldsheet theories, though two of these will be physically indistinct from the others. The generators of $\mho^2_{\DPin}(pt)$ can be taken to be  $\{ (-1)^{\int w_1^2},(-1)^{\Arf(\hat \Sigma)} \}$, where $\hat \Sigma$ is the orientation double cover of $\Sigma$. The generator $(-1)^{\int w_1^2}$ is a bosonic invertible phase, which was discussed in detail in Section \ref{sec:w1SPT}. The  generator $(-1)^{\Arf(\hat \Sigma)}$ was discussed in the context of $\Pin^+$ in Section \ref{sec:Pin+SPT}.

The effects of these phases on the Type $\rm I$ theory are easy to read off. We know that the presence of the phase $(-1)^{\int w_1^2}$, which has the effect of assigning $-1$ to M{\"o}bius strip amplitudes and $+1$ to cylinder and Klein bottle amplitudes, corresponds to the choice of ${\rm O}9^\pm$-planes, i.e. it distinguishes between Type $\rm I$ and Type $\tilde{\mathrm{I}}$ theories. On the other hand, adding $(-1)^{\Arf(\hat\Sigma)}$ leads to physically indistinct theories. To see this, note that on an oriented worldsheet $\Sigma$, the partition function contribution $(-1)^{\Arf(\hat \Sigma,\,\sigma)}$ can be interpreted as 
\begin{equation}
(-1)^{\Arf(\Sigma,\, \sigma_L)}\times (-1)^{\Arf(\Sigma,\, \sigma_R)}
\end{equation}
and can thus be absorbed for example by flipping $\psi^{9}$ and $\tilde \psi^{9}$ at the same time, i.e. by a spacetime parity flip in one direction, as explained in Section \ref{sec:TypeIIstrings}. So in fact the theory obtained from the non-trivial invertible phase $\Arf(\hat \Sigma)$ is not physically distinct from the one with the trivial phase, and the two are instead related in the same way that Type $\rm IIA/B$ and Type $\rm IIA/B'$ were related.

\section*{Acknowledgments} 
The authors thank the TASI 2019 summer school at University of Colorado, Boulder,
where JK and JPM were students and YT was a lecturer,
for providing an ideal environment for collaboration;
this paper grew out of a lunchtime conversation there.
The authors are especially grateful to Edward Witten for many useful correspondences, and for providing much of the content of Appendix \ref{App:indextheory}. 
We also thank Arun Debray, Ryohei Kobayashi, and Ryan Thorngren for their help in computing bordism groups, as well as Oren Bergman, Jake McNamara, Max Metlitski, Matthew Heydeman, and Juven Wang for useful discussions. JPM thanks the Center for Mathematical Sciences and Applications at Harvard University for hospitality while this work was being completed.
YT is in part supported  by WPI Initiative, MEXT, Japan at IPMU, the University of Tokyo,
and in part by JSPS KAKENHI Grant-in-Aid (Wakate-A), No.17H04837 
and JSPS KAKENHI Grant-in-Aid (Kiban-S), No.16H06335. JPM is supported by the US Department of State through a Fulbright scholarship. JK and JPM thank the Mani L. Bhaumik
Institute for generous support. 

\appendix

\section{NSR formalism}
\label{App:RNS}
Throughout this paper we work in the NSR formalism, where the worldsheet fields for the closed string consist of scalars $X^\mu$ and left- and right-moving Majorana-Weyl fermions $\psi^\m,\tilde\psi^\m$,
all of which are vectors in the ten-dimensional target space \cite{Polchinski:1998rr,Blumenhagen:2013fgp}. 

\paragraph{Closed strings:}
We choose our conventions such that the worldsheet spatial coordinate $\sigma \in [0,2\pi)$. 
For simplicity we work in light-cone gauge, with $\mu=0,1$ being the light-cone coordinates. 
In this gauge the worldsheet action is
\begin{equation}
\frac{1}{4\pi }\int dt\, d\sigma \( \frac{1}{\a'}\left(\partial_{t}X^{\mu} \partial_{t}X_{\mu} - \partial_{\sigma}X^{\mu} \partial_{\sigma}X_{\mu}\right) + i \psi^{\mu} (\partial_{t}+\partial_{\sigma}) \psi_{\mu}  + i \tilde \psi^{\mu} (\partial_{t}-\partial_{\sigma})  \tilde \psi_{\mu} \)
\end{equation}
with $\mu=2,\cdots,9$.
The oscillator expansions for these fields are
\begin{align}
  \label{eq:oscX}
X^{\mu}\left( t, \sigma \right) &= x^\mu + \a'  t\, p^\mu + i\sqrt{\frac{\a'}{2}} \sum_n \frac1n\(\alpha^\mu_n e^{-in(t-\sigma)} +  \tilde\alpha^\mu_n e^{-in(t+\sigma)}\)\,,   \\
  \label{eq:oscpsi}
  \psi^{\mu}\left( t, \sigma \right) &= \sum_r  \psi^{\mu}_r e^{-ir(t-\sigma)}\,, \!\quad\qquad
  \tilde\psi^{\mu}\left( t, \sigma \right) = \sum_r \tilde \psi^{\mu}_r e^{-ir(t+\sigma)}\,, 
\end{align}
where $r\in\ZZ \text{ or } \ZZ+\frac12$ for $\R$ or $\NS$ boundary conditions, respectively. 

We are mainly interested in the fermionic sector, and in particular in its zero-modes. 
There are no zero-modes for $\NS$ boundary conditions, and hence the left- and right-moving ground states, $|0\rangle_{\NS}$ and $|\tilde 0\rangle_{\NS}$, are unique in this sector. 
It is easy to write fermion number operators in terms of the operators that count the number of modes, ${\sf N}= \sum_{r,\mu} r\,  \psi^{\mu}_{-r}\psi^{\mu}_r$ and ${\sf \tilde N}= \sum_{r,\mu} r\, \tilde \psi^{\mu}_{-r}\tilde\psi^{\mu}_r$, as
$  (-1)^{\fl} = (-1)^{\sf N-1}$ and $(-1)^{\fr} =  (-1)^{\sf \tilde N-1}$\,.
The ground states are odd, i.e.
\begin{equation}
  (-1)^{\fl}|0\rangle_{\NS} = -|0\rangle_{\NS}\,,\qquad 
  (-1)^{\fr}|\tilde0\rangle_{\NS} = -|\tilde0\rangle_{\NS}\,.
\end{equation}

On the other hand, fermions with $\R$ boundary conditions do allow zero-modes, which satisfy the anticommutation relations
\begin{equation}
  \{\psi^{\mu}_0,\psi^{\nu}_0\} = \delta^{\mu\nu}\,, \qquad 
  \{\tilde\psi^{\mu}_0,\tilde\psi^{\nu}_0\} = \delta^{\mu\nu}\,, \qquad 
  \{\psi^{\mu}_0,\tilde\psi^{\nu}_0\} = 0\,.
\end{equation}
Left- and right-moving zero-modes then separately furnish representations of the Clifford algebra $\Cl(8)$. 
These representations act on the degenerate ground states, which can be spinors $|0\rangle_{\R}^a, |\tilde 0\rangle_{\R}^a \in \mathbf{8}_s$ or conjugate spinors $|0\rangle_{\R}^{\dot a}, |\tilde0\rangle_{\R}^{\dot a}\in \mathbf{8}_c$ of the little group $SO(8)$. 
The two irreducible representations are distinguished by the eigenvalue of the left- and right-moving fermion numbers
\begin{equation}
  (-1)_0^{\fl} = \prod_{\mu} \sqrt{2}\,\psi^{\mu}_0\,, \qquad (-1)_0^{\fr} = \prod_{\mu} \sqrt{2}\,\tilde\psi^{\mu}_0\,.
\end{equation}
The full fermion number operators are obtained by combining these with the operators counting the number of massive modes,
\begin{equation}
  (-1)^{\fl} = (-1)_0^{\fl} (-1)^{\sf N}\,, \qquad (-1)^{\fr} =  (-1)_0^{\fr} (-1)^{\sf \tilde N}\,.
\end{equation}

Worldsheet parity $\Omega$ acts on the worldsheet fields as in \eqref{eq:omegafull}, from which it follows that the action on the oscillator modes in \eqref{eq:oscX} and~\eqref{eq:oscpsi} is
\bea
\label{Omegaactionws}
\Omega \a_n \Omega = \tilde{\a}_n\,, \hspace{0.5 in}
\Omega \psi_r \Omega =  e^{2\pi r} \tilde{\psi}_r\,, \hspace{0.5 in}
\Omega \tilde{\psi}_r \Omega= -e^{2\pi r}\psi_r\,.
\eea

Next we review the low-lying spectra of the closed superstring. 
We will use the usual notation \cite{Polchinski:1998rr} and denote each of the low-lying states by $(A\pm,B\pm)$, where $A,B$ can be $\R$ or $\NS$ and denote respectively the left- and right-moving sectors, while the signs indicate the fermion parity. 
The $\NSNS$ sector ground state is 
\begin{equation}
  (\NS-, \NS-): \quad |0\rangle_{\NSNS}= | 0\rangle_{\NS}\otimes |\tilde 0\rangle_{\NS}\,,
\end{equation} and is a scalar tachyon. The massless $\NSNS$ spectrum is given by the level one states, 
\begin{equation}
  (\NS+, \NS+): \quad \tilde \psi^{\mu}_{-1/2}\tilde \psi^{\nu}_{-1/2}|0\rangle_{\NSNS} \in  \mathbf{8}_v \otimes \mathbf{8}_v 
  = \mathbf{1} \oplus \mathbf{28}\oplus \mathbf{35}_v
\end{equation}
and consist of a dilaton, $\mathbf{1}$, 2-form, $\mathbf{28}$, and graviton, $\mathbf{35}_v$.
The $\RR$ sector ground states transform in the product of spinor representations of the little group $SO(8)$,\begin{align}
  \begin{split}
   (\R+, \R+): \quad |0 \rangle_{\RR}^{ab}      = |0 \rangle^{a}_{\R}      \otimes |\tilde 0\rangle^{b}_{\R}      
  & \in \mathbf{8}_s \otimes \mathbf{8}_s 
  = \mathbf{1} \oplus \mathbf{28}\oplus \mathbf{35}_-, \\ 
   (\R-, \R-): \quad |0 \rangle_{\RR}^{\dot a \dot b}      =|0 \rangle^{\dot a}_{\R} \otimes |\tilde 0 \rangle^{\dot b}_{\R} 
  & \in \mathbf{8}_c \otimes \mathbf{8}_c
  = \mathbf{1} \oplus \mathbf{28}\oplus \mathbf{35}_+,\\
  (\R+, \R-): \quad|0 \rangle_{\RR}^{a \dot b}       =|0 \rangle^{a}_{\R}      \otimes |\tilde 0 \rangle^{\dot b}_{\R} 
  & \in \mathbf{8}_s \otimes \mathbf{8}_c
  = \mathbf{8}_v \oplus \mathbf{56}, \\
  (\R-, \R+): \quad  |0 \rangle_{\RR}^{\dot a b}       = |0 \rangle^{\dot a}_{\R} \otimes |\tilde 0\rangle^{b}_{\R}      
  & \in \mathbf{8}_c \otimes \mathbf{8}_s
  = \mathbf{8}_v \oplus \mathbf{56}\,.
  \end{split}
\end{align}
All of these states are massless and can be identified as the $\RR$ scalar $\mathbf{1}$,  vector $\mathbf{8}_v$, two-form $\mathbf{28}$, three-form $\mathbf{56}$, and (anti)self-dual four-form $\mathbf{35}_{(-)+}$. 
Finally, the $\NSR$ states are
\begin{align}
  \begin{split}
   (\NS+, \R+): \quad \psi^{\mu}_{-1/2}|0 \rangle_{\NS}      \otimes |\tilde 0\rangle^{a}_{\R}      
  & \in \mathbf{8}_v \otimes \mathbf{8}_s
  =  \mathbf{8}_c \oplus \mathbf{56}_s\,, \\ 
   (\NS+, \R-): \quad \psi^{\mu}_{-1/2}|0 \rangle_{\NS} \otimes |\tilde 0 \rangle^{\dot a}_{\R} 
  & \in \mathbf{8}_v \otimes \mathbf{8}_c
  =  \mathbf{8}_s \oplus \mathbf{56}_c\,,
  \end{split}
\end{align}
which include the dilatinos, $\mathbf{8}_s$ and $\mathbf{8}_c$, and gravitinos, $\mathbf{56}_s$ and $\mathbf{56}_c$. The $\RNS$ states are conjugate to these.

\paragraph{Open strings:}

When considering open strings, one can have Dirichlet ($\rm D$) or Neumann ($\rm N$) boundary conditions on each end, which relate the left- and right-moving oscillators. 
We define the $\rm NN$, $\rm DD$, and $\rm ND$ directions of the open string as in (\ref{NNDDND}). 
Though only the relative sign in those conditions is important, the conventions shown there are such that $\eta_{1,2}=+1$ represents a ${\rm D}p$-brane at $\sigma= 0,\pi$, while $\eta_{1,2}=-1$ represents a ${\rm D}p'$-brane at $\sigma=0,\pi$.\footnote{%
Note that the extra sign in (\ref{NNDDND}) is needed to encode the \textit{same} physical boundary condition at $\sigma=\pi$ as at $\sigma = 0$, as  explained e.g.~in footnote 69 of \cite{Cho:2016xjw}. 
To summarize, because the boundaries at $\sigma=0,\pi$ have opposite orientation, imposing the same boundary condition on the two boundaries involves a reflection of one of them $t\rightarrow - t$, under which $\psi^\m \rightarrow e^{i \pi/2} \psi^\m$ and $\tilde\psi^\m \rightarrow e^{-i \pi/2} \tilde\psi^\m$.  \label{footnote:minussign}} 
That this is so will be discussed in the beginning of Appendix \ref{boundstateapp}. 

Note that the relative choice of $\eta_{1,2}$ is related to the choice of $\rm NS$ or $\rm R$ sectors on the open string. In particular, the string is in the $\rm NS$ sector for $\eta_1 = \eta_2$ and in the $\rm R$ sector for $\eta_1 = - \eta_2$. 
This may be seen as follows. First, we may replace the left- and right-moving fermions on $\sigma\in[0,\pi]$ with a single chiral fermion on $\sigma \in [0,2\pi]$ by defining 
\bea
\psi(t, \sigma) =  \eta_2\, \tilde\psi(t, 2\pi - \sigma)~,\hspace{0.5 in} \pi \leq \sigma \leq 2 \pi~.
\eea
The question of $\rm NS$ vs. $\rm R$ is then a question about the (anti-)periodicity of this extended fermion. 
In particular, say that $\tilde\psi(t, 2 \pi) = \eta_3\, \tilde\psi(t, 0)$ where $\eta_3=+1$ for $\rm R$ and $-1$ for $\rm NS$. 
Then we have 
\bea
\psi(t, 0) =  \eta_2 \,\tilde\psi(t, 2\pi) = \eta_2 \eta_3\, \tilde\psi(t, 0)~.
\eea 
But from (\ref{NNDDND}), we also have $ \psi(t,0) =- \eta_1\, \tilde\psi(t,0)$ and hence we conclude that $\eta_1 \eta_2 = -\eta_3$.
The anti-periodic $\rm NS$ case then corresponds to $\eta_1 = \eta_2$, while the periodic $\rm R$ case corresponds to $\eta_1 = - \eta_2$. 
From (\ref{NNDDND}) we then conclude that $\rm NS$ sector fermions can have zero modes along $\rm ND$ and $\rm DN$ directions, whereas $\rm R$ sector fermions can have zero modes along $\rm NN$ and $\rm DD$ directions. In constructing the open string fermion number operator $\left( -1 \right)^{\f}$ one must take these zero modes into account, as was done above for the $\RR$ sector of the closed string.

As for the open string spectrum, let us just mention that the ground state in the $\NS$ sector, $| 0\rangle_{\NS}$, is an open string tachyon with  $\left( -1 \right)^{\f}| 0\rangle_{\NS} = -| 0\rangle_{\NS}$. We will not review the massless spectrum here.

\section{Boundary state formalism}
\label{boundstateapp}
Here we review the boundary state formalism --- for more details, the reader may consult  \cite{Bergman:1997rf,Blumenhagen:2013fgp,Cho:2016xjw}.  
\subsection{Basics}
From the closed string point of view, D-branes correspond to states in the closed string Hilbert space. 
Beginning with the boundary conditions (\ref{NNDDND}), 
one can transition to Euclidean signature $t \rightarrow -i t_E$, 
and then do a rotation of the worldsheet to interchange the $t_E$ and $\sigma$ directions, 
thereby going from the open string to the closed string picture. 
In particular, rotation by $\frac{\pi}{2}$ takes $(t_E, \sigma) \rightarrow (\sigma, - t_E)$. 
Under this transformation, the fermions transform as 
\bea
\psi^\m(t_E + i \sigma) \rightarrow e^{i \pi/4}\,\psi^\m(t_E + i \sigma)~,\hspace{0.5 in}\tilde\psi^\m(t_E - i \sigma) \rightarrow e^{-i \pi/4}\,\tilde\psi^\m(t_E - i \sigma)~.
\eea 
The boundary conditions at fixed $\sigma$ then become conditions at fixed time --- 
for example, on the slice $t_E = 0$ the initial conditions are
\bea
\label{NDclosedstring2}
{\rm N:}\quad  \psi^\m(0, \sigma) &=& +i \eta_1 \,\tilde\psi^\m(0, \sigma) ~,
\no\\
{\rm D:}\quad \psi^\m(0, \sigma) &=& -i \eta_1 \,\tilde \psi^\m(0, \sigma)~, \hspace{0.5 in} \sigma \in [0,2 \pi)~.
\eea
The conditions at $t_E = \pi$ look the same, but with $\eta_1$ replaced by $\eta_2$ 
--- note that we no longer have the relative minus sign for the same condition on the two boundaries, c.f.~footnote \ref{footnote:minussign}. 
Let us now focus on the slice at $t_E = 0$. 
We will denote $\eta_1 = \eta$ for simplicity. 
We define the boundary states  $|{\rm B}p, \eta \rangle$ to be the operator statement of the boundary conditions (\ref{NDclosedstring2}) on the closed string Hilbert space. 
Writing things in terms of Fourier modes and including bosonic constraints as well, these boundary states are then defined by
  \begin{align}
  \label{gluingconds}
  {\rm N:}\quad (\a_n^\m- \tilde{\a}_{-n}^\m)|{\rm B}p, \eta \rangle &= (\psi_r^\m - i \eta \tilde{\psi}_{-r}^\m)|{\rm B}p, \eta \rangle = 0  ~,
   \no\\
  {\rm D:}\quad (\a_n^\m + \tilde{\a}_{-n}^\m)|{\rm B}p, \eta \rangle &= (\psi_r^\m + i \eta \tilde{\psi}_{-r}^\m)|{\rm B}p, \eta\rangle = 0 ~,
     \end{align}
where $r$ is integer/half-integer for the $\rm R$/$\rm NS$ sector. 
The general solution to these conditions can be written as a coherent state,
   \begin{align}
   \label{Bpdef}
   |{\rm B}p, \eta \rangle &\propto \mathrm{exp}\left\{\sum_{n=1}^\infty \left[ -\frac{1}{n} \sum_{\m=2}^{p+2}\a_{-n}^\m \tilde{\a}_{-n}^\m + \frac{1}{n} \sum_{\m = p+3}^9 \a_{-n}^\m \tilde{\a}_{-n}^\m\right] \right.
   \no\\
   & \hspace{0.9 in}\left. + i \eta \sum_{r>0}^\infty \left[ - \sum_{\m=2}^{p+2} \psi_{-r}^\m \tilde{\psi}_{-r}^\m +  \sum_{\m=p+3}^{9} \psi_{-r}^\m \tilde{\psi}_{-r}^\m \right]  \right\}  |{\rm B}p , \eta \rangle^{(0)}
   \end{align}
up to a normalization factor which we discuss in Appendix \ref{Dbranenorms}. 
Here $|{\rm B}p , \eta \rangle^{(0)}$ is the ground state, 
which depends on the sector. In the $\NSNS$ sector the ground state is just the usual one
\begin{equation}
|{\rm B}p , \eta \rangle^{(0)}_{\NSNS} = |0 \rangle_{\NSNS}~.
\end{equation}
In the $\RR$ sector there is an extra subtlety because we need to solve the gluing conditions for the zero-modes. 
This is easily done by noticing that $|{\rm B}7 , \eta \rangle^{(0)}_{\RR}$ %\footnote{Note that due to working in light-cone gauge, with 0,1 being the light-cone directions, we are forced to consider Dirichlet boundary conditions in these directions. Technically, this means that we are looking at ${\rm D}(p+1)$-instantons with, $-1\leq p \leq 7$, instead of branes. However, these are related to the branes by Wick rotation, and all the conclusions we derive about the spectrum will be true in either case \cite{Green:1996um} and for any $p$.} 
need only satisfy conditions of the kind
\begin{equation}
  (\psi_0^\m- i \eta \tilde{\psi}_{0}^\m)|{\rm B}7, \eta \rangle^{(0)}_{\RR} = 0\,.
\end{equation}
We can then build the rest of the boundary ground states as
\begin{equation}
  \label{eq:bsexplicit}
  |{\rm B}p, \eta \rangle^{(0)}_{\RR}=\prod_{\m=p+3}^9 (\psi_0^\m+ i \eta \tilde{\psi}_{0}^\m)|{\rm B}7, \eta \rangle^{(0)}_{\RR} \,.
\end{equation}
It is not entirely trivial to see the relation between  $|{\rm B}7, \eta \rangle^{(0)}_{\RR}$ and the usual $\RR$ vacuum. 
This is explained, for instance, in Appendix B of \cite{Bergman:1997rf}. 
The key feature is that the relation involves an even number of $\RR$ zero-mode operators.
With this in mind it follows that 
  \begin{align}
  \label{fnprojcond}
  (-1)^{\fl}|{\rm B}p, \eta\rangle_{\NSNS} &= - |{\rm B}p, -\eta\rangle_{\NSNS}~, \hspace{0.5 in} (-1)^{\fl}|{\rm B}p, \eta\rangle_{\RR} = (-1)^{7-p}  |{\rm B}p, -\eta\rangle_{\RR} ~,
  \no\\
  (-1)^{\fr}|{\rm B}p, \eta\rangle_{\NSNS} &= - |{\rm B}p, -\eta\rangle_{\NSNS}~, \hspace{0.5 in} (-1)^{\fr}|{\rm B}p, \eta\rangle_{\RR} =  |{\rm B}p, -\eta\rangle_{\RR} ~,
  \end{align}
  where $(-1)^{\fl}$ and  $(-1)^{\fr}$ are the left- and right-moving {worldsheet} fermion numbers. 
  Similarly using \eqref{Omegaactionws} and \eqref{eq:bsexplicit} one can check that
\bea
\label{1omega}
\Omega|{\rm B}p, \eta\rangle_{\NSNS}= |{\rm B}p, \eta \rangle_{\NSNS}~, \hspace{0.2 in} \Omega|{\rm B}p, \eta\rangle_{\RR} = -(-i \eta)^{7-p}|{\rm B}p, \eta \rangle_{\RR} = i^{-\eta k}|{\rm B}p, \eta \rangle_{\RR}.
\eea
where $k=9-p$.

 \subsection{Theta functions, partition functions and boundary state amplitudes}

In the remainder of this appendix, we will be calculating amplitudes for closed strings propagating between boundary states. In order to do so, some preliminary data will be needed. We now review our conventions for the different theta functions that appear in one-loop string partition functions, and give a few useful formulas for partition functions and boundary state amplitudes.

 First, note that we will use the usual shorthand for theta functions with characteristic,
 \begin{align}
   \vartheta_1(z| \tau) &= \vartheta\left[\begin{matrix} \frac12 \\ \frac12 \end{matrix}\right](z| \tau)\,,\hspace{0.5 in}
   \vartheta_2 (z| \tau)= \vartheta\left[\begin{matrix} \frac12 \\ 0       \end{matrix}\right](z| \tau)~,
   \no\\
   \vartheta_3 (z| \tau)&= \vartheta\left[\begin{matrix} 0       \\ 0             \end{matrix}\right](z| \tau)\,, \hspace{0.5 in} 
   \vartheta_4 (z| \tau)= \vartheta\left[\begin{matrix} 0       \\ \frac12        \end{matrix}\right](z| \tau)~,
 \end{align}
 and the notation $\vartheta_i(\tau) = \vartheta_i(0 |\tau)$, where $\tau$ is the modular parameter of a torus. 
 Recall that $\vartheta_1$ is odd and vanishes at the origin, i.e. $\vartheta_1(\tau) =0$. 
For convenience we define the following combinations,
\begin{align}
  \label{eq:fs}
  f_1(\tau) &= \eta(\tau) = q^{1/12} \prod_{n=1}^\infty (1-q^{2n})\,, 
  &f_2(\tau) = \sqrt{\frac{\vartheta_2(\tau)}{\eta(\tau)}} =  \sqrt{2} q^{1/12} \prod_{n=1}^\infty (1+q^{2n}) \,, \no\\
  f_3(\tau) &= \sqrt{\frac{\vartheta_3(\tau)}{\eta(\tau)}} = q^{-1/24} \prod_{n=1}^\infty (1+q^{2n-1})\,, 
  &f_4(\tau) = \sqrt{\frac{\vartheta_4(\tau)}{\eta(\tau)}} = q^{-1/24} \prod_{n=1}^\infty (1-q^{2n-1})\,,
\end{align}
where $q = e^{i \pi \tau}$ and $\eta(\tau) = \left(\vartheta_1'(0|\tau)/2\pi\right)^{1/3}$ is the Dedekind eta function.
It will be useful to know the modular $S$ transformations, 
\begin{align}
\begin{split}
  \label{eq:modS}
  f_1\left(i/t\right) &= \sqrt{t}f_1(it)\,, \quad 
  f_2\left(i/t\right) = f_4(it)\,, \\\
  f_3\left(i/t\right) &= f_3(it)\,, \qquad \,
  f_4\left(i/t\right) = f_2(it)\,,
\end{split}
\end{align}
and the $T$ transformations,
\begin{align}
\begin{split}
  \label{eq:modT}
  f_1\left(it + 1\right) &= e^{i\frac{\pi}{12}}f_1(it)\,, \qquad 
  f_2\left(it + 1 \right) = e^{i\frac{\pi}{12}}f_2(it)\,, \\\
  f_3\left(it + 1\right) &= e^{-i\frac{\pi}{24}}f_4(it)\,, \quad \,\,
  f_4\left(it + 1\right) =  e^{-i\frac{\pi}{24}}f_3(it)\,,
\end{split}
\end{align}
as well as the more unfamiliar $P=T^{\frac12}ST^2ST^{\frac12}$ transformation,
\begin{align}
\begin{split}
  \label{eq:modP}
  f_1\left(\frac{i}{4t} +\frac12\right) &= \sqrt{2t}f_1\left(it+\frac12\right)\,, \quad 
  f_2\left(\frac{i}{4t} +\frac12\right) = f_2\left(it+\frac12\right)\,, \\
  f_3\left(\frac{i}{4t} +\frac12\right) &= e^{ i\frac{\pi}{8}}f_4\left(it+\frac12\right)\,, \quad \,\,
  f_4\left(\frac{i}{4t} +\frac12\right) = e^{-i\frac{\pi}{8}}f_3\left(it+\frac12\right)\,,
\end{split}
\end{align}
where $t \in \mathbb{R}$.
The Jacobi ``abstruse'' and  ``triple product'' identities
\begin{align}
\label{abstruseidentity}
  f_2(\tau)^8 - f_3(\tau)^8 + f_4(\tau)^8 =0~, \hspace{0.8 in} f_2(\tau)f_3(\tau)f_4(\tau) =\sqrt{2}
\end{align}
will be used to simplify results.

The functions $f_i(\tau)$ introduced above are useful since the open and closed string sector traces are written naturally in terms of them. 
Denote the boundary state amplitudes in the tree-channel in sector $\rm S$ as follows
\bea
\tilde Z^{\rm LR}_{\rm S} = {}_{\rm S, L}\langle {\rm B} | e^{-2\pi l H_\text{cl}} | {\rm B} \rangle{}_{\rm S, R}
\eea
where for fermionic sectors $|{\rm B} \rangle = | {\rm B}, \eta \rangle$ and $\rm L,R$ denote the boundary conditions 
--- either Neumann ($\rm N$) or Dirichlet ($ \rm D$) --- on each boundary. 
The amplitudes with opposite $\eta$ on either side are given by exchanging ${\rm N} \leftrightarrow {\rm D}$ on one boundary state, as can be seen in (\ref{gluingconds}). 
In terms of the $f_i(\tau)$, the bosonic contributions are found to be \cite{Blumenhagen:2013fgp}
\bea
\tilde Z^{\rm NN}_{\rm B} =   \frac{1}{f_1(2il)}\,, \qquad 
\tilde Z^{\rm ND}_{\rm B} = \frac{\sqrt{2}}{f_2(2il)}\,, \qquad
\tilde Z^{\rm DD}_{\rm B} = \frac{1}{\sqrt{4\pi^2\a' l}} \frac{1}{f_1(2il)}\,,
\eea
the fermionic contributions in the $\NSNS$ sector are
\bea
\tilde Z^{\rm NN}_{\NSNS} = f_3(2il)\,, \qquad 
\tilde Z^{\rm ND}_{\NSNS} = f_4(2il)\,, \qquad 
\tilde Z^{\rm DD}_{\NSNS} = f_3(2il)\,,
\eea
and in the $\RR$ sector
\bea
\tilde Z^{\rm NN}_{\RR} = -f_2(2il)\,, \qquad 
\tilde Z^{\rm ND}_{\RR} = 0\,, \qquad 
\tilde Z^{\rm DD}_{\RR} = -f_2(2il)\,.
\eea

Consider parallel ${\rm B}p$ and ${\rm B}q$ boundary states with $q>p$. 
Note that there are $p-1$ $\rm NN$, $9-q$ $\rm DD$, and $q-p$ $\rm ND$ directions. 
Also recall that changing $\eta \rightarrow -\eta$ is equivalent to exchanging the boundary conditions $\rm N \leftrightarrow D$. 
The amplitudes for exchanging closed strings between these states then take the simple form 
\bea
\label{usefulcorrpq}
\begin{array}{r@{}c@{}l@{\,=\,}l}
  {}_{\NSNS}\langle {\rm B}p, \eta|& e^{-2 \pi l H_\text{cl}}&| {\rm B}q,  \eta \rangle_{\NSNS} &  \displaystyle \frac{V_{p+1}}{(4\pi^2\a'l)^{\frac{9-q}{2}}}  \frac{f_3(2il)^8f_4(2il)^{2(q-p)}}{f_1(2il)^{8-q+p}}~,\\
  {}_{\NSNS}\langle {\rm B}p,  \eta| &e^{-2 \pi l H_\text{cl}}&| {\rm B}q, - \eta \rangle_{\NSNS} &\displaystyle   \frac{V_{p+1}}{(4\pi^2\a'l)^{\frac{9-q}{2}}}\,  \frac{f_4(2il)^8 f_3(2i \ell)^{2(q-p)}}{f_1(2il)^{8-q+p}}~,\\
{}_{\RR}  \langle {\rm B}p,  \eta|& e^{-2 \pi l H_\text{cl}}&| {\rm B}q,  \eta \rangle_{\RR}   & \displaystyle- \frac{V_{p+1}}{(4\pi^2\a'l)^{\frac{9-q}{2}}}  \frac{f_2(2il)^8}{f_1(2il)^8}~,\\
{}_{\RR}  \langle {\rm B}p,  \eta|& e^{-2 \pi l H_\text{cl}}&| {\rm B}q,- \eta \rangle_{\RR}   & \displaystyle 0\,,
\end{array}
\eea
where $V_{p+1}$ is the regularized volume of the $p$-brane, which comes from the zero-modes of the scalars parallel to its worldvolume. 
Note that we have used the Jacobi triple product identity (\ref{abstruseidentity}) to remove some factors of $\sqrt{2}/f_2(2il)$ from these results. 
The last amplitude vanishes because $\eta_L= -\eta_R$ corresponds to Ramond boundary conditions in the direction orthogonal to the boundary, for which there is a fermion zero-mode.
%%%%%%%%%%%%%%%%%%%%%%%%%%%%%%%%%%%%%%%%%%%%%%%%%%%%%%%%%%%%%%%%

\subsection{D-brane boundary states}
\label{Dbranenorms}
\subsubsection{Boundary state normalization}
In order to calculate tensions or tadpole contributions, 
we will want to find the proper normalization for the boundary states. 
The way to do this is to impose matching of the tree- and loop-channel cylinder amplitudes.
That is, we want to impose the following identities
\begin{align}
\label{eq:treeloopcorrespondence}
&\frac{1}{\cN_{{\rm B}p}^2}\int_0^{\infty} dl \       {}_{\NSNS} \langle {\rm B}p, \eta| e^{-2 \pi l H_\text{cl}}| {\rm B}p,  \eta \rangle_{\NSNS} &=&\hspace{0.3 in}  \int_0^\infty \frac{d t}{2t} \,  \Tr_{\NS} \left[ e^{- 2 \pi t H_\text{op}} \right]       \,, \no \\
&\frac{1}{\cN_{{\rm B}p}^2}\int_0^{\infty} dl \       {}_{\NSNS} \langle {\rm B}p, \eta| e^{-2 \pi l H_\text{cl}}| {\rm B}p, -\eta \rangle_{\NSNS} &=& \hspace{0.3 in} \int_0^\infty \frac{d t}{2t} \,  \Tr_{\R} \left[ e^{- 2 \pi t H_\text{op}} \right]        \,, \no \\
&\frac{1}{\cN_{{\rm B}p}^2}\int_0^{\infty} dl \ \quad {}_{\RR}   \langle {\rm B}p, \eta| e^{-2 \pi l H_\text{cl}}| {\rm B}p,  \eta \rangle_{\RR}   &=& \hspace{0.3 in} \int_0^\infty \frac{d t}{2t} \,  \Tr_{\NS} \left[ e^{- 2 \pi t H_\text{op}} (-1)^{\f} \right] \,, \no\\
&\frac{1}{\cN_{{\rm B}p}^2}\int_0^{\infty} dl \ \quad {}_{\RR}   \langle {\rm B}p, \eta| e^{-2 \pi l H_\text{cl}}| {\rm B}p, -\eta \rangle_{\RR}   &=& \hspace{0.3 in} \int_0^\infty \frac{d t}{2t} \,  \Tr_{\R} \left[ e^{- 2 \pi t H_\text{op}} (-1)^{\f} \right]=0 \,. \no
\end{align}
We may for instance focus on the first one, which in loop-channel gives the result
\bea
\label{treeresult1}
\int_0^\infty \frac{d t}{2t} \,  \Tr_{\NS} \left[ e^{- 2 \pi t H_\text{op}} \right]  = v_{p+1} \int_0^{\infty} \frac{dt}{(2t)^{\frac{p+3}{2}}} \, \frac{f_3(it)^8}{f_1(it)^{8}}
\eea
where $v_{p+1}=V_{p+1}/(4\pi^2\a')^{\frac{p+1}{2}}$ is the regularized volume of the brane, in units of the string length.
On the other hand, in tree-channel we have
\bea
\label{BpNSBpNS}
\frac{1}{\cN_{{\rm B}p}^2} \int_0^\infty d l \, {}_{\NSNS}\langle {\rm B}p, \eta | e^{- 2 \pi l H_\text{cl}}| {\rm B}p, \eta \rangle_{\NSNS} = \frac{(4\pi^2\a')^{p-4}}{\cN_{{\rm B}p}^2}  \, v_{p+1} \int_0^{\infty} \frac{dl}{l^{\frac{9-p}{2}}} \frac{f_3(2il)^8}{f_1(2il)^{8}}\,.
\eea
We can translate the tree-channel amplitude to loop-channel using the transformation $l=\frac{1}{2t}$ and the modular $S$ transformations in \eqref{eq:modS}, which yields
\bea
\frac{(4\pi^2\a')^{p-4}}{\cN_{{\rm B}p}^2}  \, v_{p+1} \int_0^{\infty} \frac{dt}{2t^2}\frac{1}{(2t)^{\frac{p-9}{2}}} \frac{f_3(it)^8}{(t)^{\frac82}f_1(it)^{8}} 
=   \frac{2^5 (4\pi^2\a')^{p-4}}{\cN_{{\rm B}p}^2} \,  v_{p+1} \int_0^{\infty} \frac{dt}{(2t)^{\frac{p+3}{2}}} \, \frac{f_3(it)^8}{f_1(it)^{8}}\,.
\label{treeresult2}
\eea
Comparing (\ref{treeresult1}) and (\ref{treeresult2}), we find that the proper normalization for the boundary state is
\bea
\label{cNBpnorm}
\cN_{{\rm B}p} = 2^{\frac{5}{2}} (4\pi^2\a')^{\frac{p-4}{2}}~.
\eea
One can check that imposing the other identities gives the same result.

\subsubsection{Type II and I}
\label{TypeIIDbranenorm}
The states $| {\rm B}p, \eta \rangle$ must be assembled into a D-brane state such that they give the right open string amplitudes. Let us begin by finding the D-brane state in Type $\rm II$. Since the open string sector includes both $\NS$ and $\R$ strings we must have
\begin{align}
&\int_0^\infty d l \, \langle {\rm D}p | e^{- 2 \pi l H_\text{cl}}| {\rm D}p \rangle \\
&\hspace{1cm} = \int_0^\infty \frac{d t}{2t} \, \left(  \Tr_{\NS} \left[ e^{- 2 \pi t H_\text{op}} \frac12(1+(-1)^{\f}) \right] -   \Tr_{\R} \left[ e^{- 2 \pi t H_\text{op}} \frac12(1+(-1)^{\f}) \right] \right)\,.   \no
\end{align}
A brief calculation then shows that the proper normalization for the D-brane state is
\bea
\label{Dpdef}
|{\rm D}p\rangle =  \frac{1}{\cN_{{\rm B}p}} \,\frac{1}{2} \left(|{\rm B}p, +\rangle_{\NSNS} - |{\rm B}p, -\rangle_{\NSNS} +  |{\rm B}p,+ \rangle_{\RR} +  |{\rm B}p,- \rangle_{\RR} \right)
\eea
with $\cN_{{\rm B}p}$ as defined in (\ref{cNBpnorm}). 
The choice of relative sign of the $\NSNS$ and $\RR$ contributions differentiates branes and anti-branes.

Recall that for Type \II theories we wish to gauge both $(-1)^{\sf f_{L,R}}$, so we should keep only states invariant under projection by
  \begin{equation}
  P^{\rm II}_{\NSNS} = \frac14 \left( 1+ (-1)^{\fl}\right) \left( 1+ (-1)^{\fr}\right)~, \hspace{0.5 in}P^{\rm II}_{\RR} = \frac14 \left( 1+ (-1)^{\fl}\right) \left( 1\pm (-1)^{\fr}\right) ~,
  \end{equation}
  with the two choices of sign corresponding to Type \IIB $(+)$ and Type \IIA $(-)$. Using \eqref{fnprojcond} it is easy to see that the boundary states in Eq.~\eqref{Dpdef} are invariant when $p$ is odd for Type \IIB and when $p$ is even for Type $\rm IIA$.

In the presence of multiple branes the boundary state acquires an extra overall group theory factor $G$ that accounts for the trace over the Chan-Paton space,
\bea
\label{multibranenorm}
 |{\rm D}p\rangle \rightarrow G|{\rm D}p\rangle\quad \text{where} \quad    G  = \begin{cases} 
   N & \text{for } U(N)\,,  \\
   2N & \text{for } Sp(N),\ SO(2N) 
   \end{cases}
\eea
The latter implies that in Type $\rm I$ the D-brane states for even a single brane are normalized with an extra factor of $2$. 

\subsubsection{Type 0}
\label{Type0Dbranenorm}
We now do the same analysis for D-branes in Type $\rm 0$. 
In Type \o the open strings stretching between two branes of the same (different) type are in the $\NS$ ($\R$) sector, so we must have 
\begin{align}
  \int_0^\infty d l \, \langle {\rm D}p, \eta | e^{- 2 \pi l H_\text{cl}}| {\rm D}p, \eta \rangle &=  \int_0^\infty \frac{d t}{2t} \,  \Tr_{\NS} \left[ e^{- 2 \pi t H_\text{op}} \frac12(1+(-1)^{\f}) \right]\,, \\
\int_0^\infty d l \, \langle {\rm D}p, \eta | e^{- 2 \pi l H_\text{cl}}| {\rm D}p, -\eta \rangle &=   - \int_0^\infty \frac{d t}{2t} \,  \Tr_{\R} \left[ e^{- 2 \pi t H_\text{op}} \frac12(1+(-1)^{\f}) \right]\,.
\end{align}
From this and the relations above it follows that the properly normalized Type \o D-brane state is
\bea
\label{eq:dpzero}
|{\rm D}p,\eta\rangle =  \frac{1}{\cN_{{\rm B}p}} \,\frac{1}{\sqrt{2}} ( \eta |{\rm B}p, \eta\rangle_{\NSNS} +  |{\rm B}p,\eta \rangle_{\RR} )~,\hspace{0.5 in} \eta = \pm 1~.
\eea
The factor of $\eta$ in front of $|{\rm B}p, \eta\rangle_{\NSNS}$ is needed so that the force between the branes is attractive. To see this, consider the $\rm NSNS$ contribution to the amplitude, 
\bea
\label{eq:dpdpprimeexch}
\int_0^\infty d l \, {}_{NSNS}\langle {\rm D}p, \eta | e^{- 2 \pi l H_\text{cl}}| {\rm D}p, -\eta \rangle_{\NSNS} = - \frac{v_{p+1}}{2^6} \int_0^\infty d l\frac{dl}{l^{\frac{9-p}{2}}} \, \frac{f_4(2il)^8}{f_1(2il)^8}\,.
\eea
The contribution from the massless states can be extracted from the constant term in the expansion
\bea
\frac{f_4(2il)^8}{f_1(2il)^8} = \frac{1}{q} - 8 + \cO(q^1)
\eea
where now $ q= e^{- 2\pi l}$. 
The minus sign cancels with the overall sign in \eqref{eq:dpdpprimeexch}, yielding a positive contribution and hence an attractive force.

Recall that for Type \o strings we gauge only a diagonal spin structure $(-1)^{\fl + \fr}$, and hence we keep only states invariant under projection by
\begin{equation}
\label{type0gso}
P^{\rm 0}_{\NSNS} = \frac12 \left( 1+ (-1)^{\fl+\fr}\right)~, \hspace{0.5 in}P^{{\rm 0}}_{\RR} = \frac12 \left( 1\pm (-1)^{\fl+\fr}\right)~,
\end{equation}
with the two choices of sign corresponding to Type \oB $(+)$ and Type \oA $(-)$. 
Using \eqref{fnprojcond}, we see that \eqref{eq:dpzero} are invariant for $p$ odd in Type \oB and $p$ even in Type $\rm 0A$.

In contrast to (\ref{Dpdef}) then, for each such $p$ there are now two boundary states  for the Type \o strings \cite{Bergman:1997rf}, which we will call $\mathrm{D}p$ and $\mathrm{D}p'$ for $|{\rm D}p,+\rangle$ and $|{\rm D}p,-\rangle$ respectively. 
Note that  $\mathrm{D}p'$-branes are \textit{not} anti $\mathrm{D}p$-branes.

Finally, note that the normalizations of Type \o and Type \II branes differ by a factor of $\sqrt{2}$. 
On the other hand, the amplitude for exchanging closed strings in Type \II receives an extra contribution corresponding to $\rm R$ strings in the loop channel. 
This implies that the tensions of the Type \o branes are smaller than those of Type $\rm II$, in particular $T_p^{\rm 0}=T_p^{\rm II}/\sqrt{2}$ \cite{Klebanov:1998yya,Billo:1999nf}. 
Finally, as stated before, when there are multiple branes the boundary state acquires an extra group theory factor (\ref{multibranenorm}).

\subsection{O-plane boundary states}
\label{app:Oplanenorm}
\subsubsection{Crosscap state normalization}
In analogy to the discussion above, we can find the correct normalization of the crosscap states that correspond to O-planes by requiring that the tree-channel amplitude for exchanging a closed string between a D-brane and a crosscap state matches the loop-channel M{\"o}bius strip amplitude.
We know that the crosscap states are related to the usual boundary state by a $\pi/2$ translation in imaginary time, so we normalize them as
\bea
|{\rm C}q,\eta\rangle =  - \frac{n_{{\rm C}q}}{\cN_{{\rm B}q}} \, i^{H_\text{cl}} |{\rm B}q, \eta\rangle
\eea
where $n_{{\rm C}q}$ is the normalization relative to the usual boundary state, 
and the minus sign is required to get negative tension.
 Then, the relations we must impose are
\begin{align}
\label{eq:treeloopcorrespondenceomega}
&\frac{1}{\cN_{{\rm B}p}}\int_0^{\infty} dl \   \left( {}_{\NSNS} \langle {\rm C}q, \eta| e^{-2 \pi l H_\text{cl}}| {\rm B}p,  \eta \rangle_{\NSNS} - {}_{\NSNS} \langle {\rm B}p, \eta| e^{-2 \pi l H_\text{cl}}| {\rm C}q, -\eta \rangle_{\NSNS} \right)  \no \\
& \hspace{9cm} =    \int_0^\infty \frac{d t}{2t} \,  \Tr_{\NS} \left[ e^{- 2 \pi t H_\text{op}} \Omega \right]       \,, \no \\
&\frac{1}{\cN_{{\rm B}p}}\int_0^{\infty} dl \   \left( {}_{\NSNS} \langle {\rm B}p, \eta| e^{-2 \pi l H_\text{cl}}| {\rm C}q,  \eta \rangle_{\NSNS} -  {}_{\NSNS} \langle {\rm C}q, -\eta| e^{-2 \pi l H_\text{cl}}| {\rm B}p,  \eta \rangle_{\NSNS} \right)     \no \\
&\hspace{9cm}=  \int_0^\infty \frac{d t}{2t} \,  \Tr_{\NS} \left[ e^{- 2 \pi t H_\text{op}} (-1)^{\f} \Omega \right]         \,, \no \\
&\frac{1}{\cN_{{\rm B}p}}\int_0^{\infty} dl \ \quad \left( {}_{\RR}   \langle {\rm B}p, \eta| e^{-2 \pi l H_\text{cl}}| {\rm C}q,  \eta \rangle_{\RR} - {}_{\RR}   \langle {\rm C}q, -\eta| e^{-2 \pi l H_\text{cl}}| {\rm B}p,  \eta \rangle_{\RR} \right)   \no \\
&\hspace{9cm}=  \int_0^\infty \frac{d t}{2t} \,  \Tr_{\R} \left[ e^{- 2 \pi t H_\text{op}} \Omega \right]  \,, \no\\
&\frac{1}{\cN_{{\rm B}p}}\int_0^{\infty} dl \ \quad \left( {}_{\RR}   \langle {\rm C}q, \eta| e^{-2 \pi l H_\text{cl}}| {\rm B}p,  \eta \rangle_{\RR} - {}_{\RR}   \langle {\rm B}p, \eta| e^{-2 \pi l H_\text{cl}}| {\rm C}q, -\eta \rangle_{\RR} \right)  \no \\
&\hspace{9cm}=  \int_0^\infty \frac{d t}{2t} \,  \Tr_{\R} \left[ e^{- 2 \pi t H_\text{op}} (-1)^{\f} \Omega \right] \,. \no
\end{align}

As before, we can fix the normalization using any of these relations by first writing the loop-channel M{\"o}bius strip amplitude
\begin{align}
\int_0^\infty \frac{d t}{2t} \,  \Tr_{\NS} \left[ e^{- 2 \pi t H_\text{op}}\Omega \right] = - v_{p+1} \int_0^{\infty} \frac{dt}{(2t)^{\frac{p+3}{2}}} \, e^{i\frac{\pi}{4}(q-p)} \left[\frac{f_3^{8}f_4^{2(q-p)}}{f_1^{8-q+p}} \right]\left(it + \half\right)\,,
\end{align}
and then calculating the corresponding tree-channel amplitude using the boundary states
\begin{align}
&\frac{1}{\cN_{{\rm B}p}}\int_0^{\infty} dl \      \left( {}_{\NSNS} \langle {\rm C}q, \eta| e^{-2 \pi l H_\text{cl}}| {\rm B}p,  \eta \rangle_{\NSNS} - {}_{\NSNS} \langle {\rm B}p, \eta| e^{-2 \pi l H_\text{cl}}| {\rm C}q, -\eta \rangle_{\NSNS} \right) \no \\
&\hspace{1cm}= - \frac{n_{{\rm C}q}}{2^5} v_{p+1} \int_0^\infty \frac{d l}{l^{\frac{9-q}{2}}} \left[ \left(\frac{f_3^{8} f_4^{2(q-p)}}{f_1^{8-q+p}} \right)\left(2il - \half\right) 
-
\left(\frac{f_4^{8} f_3^{2(q-p)}}{f_1^{8-q+p}} \right)\left(2il + \half\right)\right] \no \\
&\hspace{1cm}=  \frac{n_{{\rm C}q}}{2^4} v_{p+1} \int_0^\infty \frac{d l}{l^{\frac{9-q}{2}}} \left[\frac{f_4^{8} f_3^{2(q-p)}}{f_1^{8-q+p}} \right]\left(2il + \half\right) 
\end{align}
where in the second equality we used the modular $T$ transformations in \eqref{eq:modT}.
Next, we translate the tree-channel amplitude to the loop channel using the transformation $l = 1/8t$ and the modular $P$ transformations in \eqref{eq:modP} to get
\begin{align}
&\frac{n_{{\rm O}q}}{2^4} v_{p+1} \int_0^\infty \frac{d t}{8t^2} \frac{1}{(8t)^{\frac{q-9}{2}}}  \left[\frac{-f_3^{8} e^{i\frac{\pi}{4}(q-p)}f_4^{2(q-p)}}{(2t)^{\frac{8-q+p}{2}} f_1^{8-q+p}} \right]\left(it + \half\right) \no \\
&\hspace{0.5 in}= -\frac{n_{{\rm O}q}}{2^{q-4}} v_{p+1} \int_0^\infty \frac{d t}{(2t)^{\frac{p+3}{2}}}  e^{i\frac{\pi}{4}(q-p)} \left[\frac{f_3^{8} f_4^{2(q-p)} }{f_1^{8-q+p}} \right]\left(it + \half\right)\,.
\end{align}
Comparing with the previous result, we find that the normalization of the $|{\rm C}q,\eta\rangle$ crosscap state relative to the boundary state is 
\bea
\label{eq:crosscapnorm}
n_{{\rm C}q}=2^{q-4}~.
\eea

\subsubsection{\texorpdfstring{$\Pin^-$}{Pin-} Type 0}
Finally, we must assemble the crosscap states into physical orientifold plane states.
For the $\Pin^-$ theories, the $\Pin^-$ structure on the worldsheet requires the boundary of the M{\"o}bius strip to have $\NS$ boundary conditions. 
Thus we expect the ${\rm O}q$-plane state to be purely in the $\NSNS$ sector. 
In addition, we know that the open strings on the orientifold are in the $\NS$ sector, 
so the O-plane state must give the following loop channel result
\begin{align}
&\int_0^\infty d l \, (\langle {\rm D}p, \eta | e^{- 2 \pi l H_\text{cl}}|{ \rm O}q \rangle  + \langle {\rm Oq} | e^{- 2 \pi l H_\text{cl}}| {\rm D}p, \eta \rangle ) \no \\ 
& \hspace{5cm} =\int_0^\infty \frac{d t}{2t} \,  \Tr_{\NS} \left[ e^{- 2 \pi t H_\text{op}} \frac12(1+(-1)^{\f})\Omega \right] \,.
\end{align}
Thus the physical orientifold boundary state is
\begin{align}
\label{finalType0crosscap}
|{\rm O}q\rangle &= \frac{1}{\sqrt{2}}  \left( |{\rm C}q, +\rangle_{\NSNS} - |{\rm C}q, -\rangle_{\NSNS}\right)~. 
\end{align}

Importantly, the crosscap state carries crucial information about the presence of $n$ copies of $\ABK$. 
To see this, it is easiest to consider the Klein bottle amplitude. 
Requiring that the tree-channel amplitudes for exchanging closed strings between two crosscaps match the loop-channel Klein bottle amplitudes gives for example
\begin{align}
&\int_0^{\infty} dl \  {}_{\NSNS} \langle {\rm C}p, - \eta| e^{-2 \pi l H_\text{cl}}| {\rm C}p, \eta \rangle_{\NSNS}   =    \int_0^\infty \frac{d t}{2t} \,  \Tr_{\RR} \left[ e^{- 2 \pi t H_\text{cl}} \Omega \right]     \,, \label{eq:pinmin21} \\
&\int_0^{\infty} dl \  {}_{\NSNS} \langle {\rm C}p, \eta| e^{-2 \pi l H_\text{cl}}| {\rm C}p, - \eta \rangle_{\NSNS}   = \int_0^\infty \frac{d t}{2t} \,  \Tr_{\RR} \left[ e^{- 2 \pi t H_\text{cl}} (-1)^{\f} \Omega \right]  \label{eq:pinmin23} \,.  
\end{align}
If we use these to calculate the normalization of the crosscap state as was done for the M{\"o}bius strip, 
then for $n=0\mod 8$ both \eqref{eq:pinmin21} and~\eqref{eq:pinmin23} yield the same result (\ref{eq:crosscapnorm}).

For generic $n$, however, \eqref{eq:pinmin21} and~\eqref{eq:pinmin23} are unequal complex conjugates and the result in \eqref{eq:crosscapnorm} needs to be modified. 
This may be seen seen as follows. 
In the presence of $n$ copies of $\ABK$, 
we know that the action of $\Omega$ on the closed string $\RR$ Hilbert space is modified by a factor of $i^n$ when the $\Pin^-$ structure is $q(a,b) = (2,1)$ or $(2,3)$, see (\ref{1omega1gs}). 
These $\Pin^-$ structures are exactly the ones captured by the right-hand sides of \eqref{eq:pinmin21} and \eqref{eq:pinmin23}, and hence for non-zero $n$ the left-hand side must change by $i^n$.  
In other words, we should redefine $|{\rm C}p, \eta\rangle$ by a phase $e^{i \theta(n,\eta)}$ such that $e^{-i \theta(n,-\eta)}e^{i \theta(n,\eta)}= i^n$, a solution of which is $\theta(n,\eta) = \frac{\pi}{4} \eta n \,\,\,{\rm mod} \,\,2 \pi$. The correct crosscap states for the theory with $n \neq 0\mod 8$ can then be taken to be 
\bea
\label{finalcrosscap}
|{\rm C}q,\eta\rangle =  - \frac{2^{q-4}}{\cN_{{\rm B}q}} e^{ \frac{i \pi n \eta}{4}}\, i^{H_\text{cl}}  |{\rm B}q, \eta\rangle \,.
\eea
This is what we must insert into (\ref{finalType0crosscap}) to obtain the physical orientifold plane state.

\subsubsection{\texorpdfstring{$\Pin^+$}{Pin+} Type 0}

Similarly, for unoriented $\Pin^+$ Type 0 we know that the ${\rm O}q$-plane state must give the following loop channel results
\begin{align}
&\int_0^\infty d l \, (\langle {\rm D}p, \eta | e^{- 2 \pi l H_\text{cl}}|{ \rm O}q \rangle  + \langle {\rm Oq} | e^{- 2 \pi l H_\text{cl}}| {\rm D}p, \eta \rangle ) \no \\ 
& \hspace{5cm} =\int_0^\infty \frac{d t}{2t} \,  \Tr_{\R} \left[ e^{- 2 \pi t H_\text{op}} \frac12(1+(-1)^{\f})\Omega \right] \,.
\end{align}
The physical orientifold state is then found to be
\begin{align}
\label{finalType0pcrosscap}
|{\rm O}q\rangle &= -\frac{1}{\sqrt{2}}  \left( |{\rm C}q, +\rangle_{\RR} + |{\rm C}q, -\rangle_{\RR}\right)\,.
\end{align}
The fact that this contains only $\RR$ sector contributions is the boundary state formulation of the fact that fermions on the boundary of the $\Pin^+$ M{\"o}bius strip are automatically in the $\R$ sector.

\subsubsection{Type I}
For completeness, we finally describe the physical orientifold plane states for Type $\rm I$. 
These are obtained by requiring
\begin{align}
&\int_0^\infty d l \, \left(\langle {\rm D}p | e^{- 2 \pi l H_\text{cl}}|{ \rm O}q \rangle  + \langle {\rm Oq} | e^{- 2 \pi l H_\text{cl}}| {\rm D}p \rangle \right) \\
&\hspace{1cm} = \int_0^\infty \frac{d t}{2t} \, \left(  \Tr_{\NS} \left[ e^{- 2 \pi t H_\text{op}} \frac12(1+(-1)^{\f}) \Omega \right] -   \Tr_{\R} \left[ e^{- 2 \pi t H_\text{op}} \frac12(1+(-1)^{\f}) \Omega\right] \right)\,.   \no
\end{align}
The correct combination is found to be
\bea
\label{finalTypeIcrosscap}
|{\rm O}q\rangle =  \frac{1}{2} \no \left( |{\rm C}q, +1\rangle_{\NSNS} - |{\rm C}q, -1\rangle_{\NSNS} + |{\rm C}q, +1\rangle_{\RR} + |{\rm C}q, -1\rangle_{\RR}\right)~.
\eea
The fact that this contains both $\NSNS$ and $\RR$ contributions means that $\DPin$ structure on the worldsheet must allow the boundary of the M{\"o}bius strip to be in the $\NS$ or $\R$ sectors, and thus must contain both $\Pin^\pm$ as subgroups. 

As an aside, let us note that the normalization of the ${\rm O}9$ state relative to a Type $\rm I$ ${\rm D}9$ state has an extra factor of $32$, as expected by the usual Type $\rm I$ tadpole cancellation.

\section{Tadpole Cancellation}
\label{app:tadcanc}
In this appendix we discuss the issue of tadpole cancellation in the unoriented Type \o theories.

We begin by considering the $\Pin^-$ Type \o theory with $n$ copies of $\ABK$. 
Before beginning any calculations it is important to recall that in this case the orientifold state corresponding to the ${\rm O}9$-plane does not have an $\RR$ contribution; see (\ref{finalType0crosscap}). 
This means that the orientifold does not carry $\RR$ charge, and hence we will only be encountering $\NSNS$ tadpoles. 
Such tadpoles are not fatal since they can be cancelled by the Fischler-Susskind mechanism \cite{Fischler:1986ci,Fischler:1986tb}, 
but this introduces a spacetime dependent coupling. 
We thus ask in which cases these $\NSNS$ tadpoles can be cancelled without resorting to this mechanism.
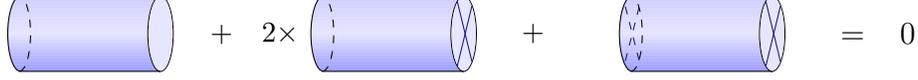
\begin{figure}[!tbp]
\centering
\vspace{10pt}
\begin{minipage}{.15 \textwidth} 
\begin{tikzpicture}[font=\sffamily\small,
   mycylinder/.style={draw, shape=cylinder, aspect=1.2, minimum height=+2.2cm,
    minimum width=+1cm, top color=blue!20, bottom color=blue!40, middle color=blue!10,
    shape border rotate=0, append after command={%
      let \p{cyl@center} = ($(\tikzlastnode.before top)!0.5! (\tikzlastnode.after top)$),
          \p{cyl@x}      = ($(\tikzlastnode.before top)-(\p{cyl@center})$),
          \p{cyl@y}      = ($(\tikzlastnode.top)       -(\p{cyl@center})$)
      in (\p{cyl@center}) edge[draw=none, fill=blue!10, to path={
        ellipse [x radius=veclen(\p{cyl@x})-1\pgflinewidth,
                 y radius=veclen(\p{cyl@y})-1\pgflinewidth,
                 rotate=atanXY(\p{cyl@x})]}] () }}]
\node[mycylinder] {};
  \coordinate (P) at ($(-0.8, 0) + (270:0.15cm and 0.5cm)$);
  \draw[dashed] ($(-0.8, 0) + (-90:0.15cm and 0.5cm)$(P) arc (-90:90:0.15cm and 0.5cm);
\end{tikzpicture}
\end{minipage}
\begin{minipage}{.34\textwidth} 
\begin{tikzpicture}[font=\sffamily\small,
   mycylinder/.style={draw, shape=cylinder, aspect=1.2, minimum height=+2.2cm,
    minimum width=+1cm, top color=blue!20, bottom color=blue!40, middle color=blue!10,
    shape border rotate=0, append after command={%
      let \p{cyl@center} = ($(\tikzlastnode.before top)!0.5! (\tikzlastnode.after top)$),
          \p{cyl@x}      = ($(\tikzlastnode.before top)-(\p{cyl@center})$),
          \p{cyl@y}      = ($(\tikzlastnode.top)       -(\p{cyl@center})$)
      in (\p{cyl@center}) edge[draw=none, fill=blue!10, to path={
        ellipse [x radius=veclen(\p{cyl@x})-1\pgflinewidth,
                 y radius=veclen(\p{cyl@y})-1\pgflinewidth,
                 rotate=atanXY(\p{cyl@x})]}] () }}]
\node[mycylinder] {};
  \coordinate (P) at ($(-0.8, 0) + (270:0.152cm  and 0.525cm)$);
  \coordinate (X) at ($(1.08, 0) + (45:0.152cm   and 0.525cm)$);
  \coordinate (Y) at ($(1.08, 0) + (-125:0.152cm and 0.525cm)$);
  \coordinate (A) at ($(1.08, 0) + (-45:0.152cm  and 0.525cm)$);
  \coordinate (B) at ($(1.08, 0) + (125:0.152cm  and 0.525cm)$);
  \draw[dashed] ($(-0.8, 0) + (-90:0.15cm and 0.5cm)$(P) arc (-90:90:0.15cm and 0.5cm);
  \draw[MidnightBlue] (X)--(Y);
  \draw[MidnightBlue] (A)--(B);
      \draw (2,0)node{$+$};
       \draw (-1.7,0)node{$+\,\,\,\,\,\,2\times$};
\end{tikzpicture}
\end{minipage}%
\begin{minipage}{.15 \textwidth} 
\begin{tikzpicture}[font=\sffamily\small,
   mycylinder/.style={draw, shape=cylinder, aspect=1.2, minimum height=+2.2cm,
    minimum width=+1cm, top color=blue!20, bottom color=blue!40, middle color=blue!10,
    shape border rotate=0, append after command={%
      let \p{cyl@center} = ($(\tikzlastnode.before top)!0.5! (\tikzlastnode.after top)$),
          \p{cyl@x}      = ($(\tikzlastnode.before top)-(\p{cyl@center})$),
          \p{cyl@y}      = ($(\tikzlastnode.top)       -(\p{cyl@center})$)
      in (\p{cyl@center}) edge[draw=none, fill=blue!10, to path={
        ellipse [x radius=veclen(\p{cyl@x})-1\pgflinewidth,
                 y radius=veclen(\p{cyl@y})-1\pgflinewidth,
                 rotate=atanXY(\p{cyl@x})]}] () }}]
\node[mycylinder] {};
  \coordinate (P) at ($(1.08, 0) + (270:0.152cm  and 0.525cm)$);
  \coordinate (X) at ($(1.08, 0) + (45:0.152cm   and 0.525cm)$);
  \coordinate (Y) at ($(1.08, 0) + (-125:0.152cm and 0.525cm)$);
  \coordinate (A) at ($(1.08, 0) + (-45:0.152cm  and 0.525cm)$);
  \coordinate (B) at ($(1.08, 0) + (125:0.152cm  and 0.525cm)$);
  \coordinate (XX) at ($(-.8, 0) + (45:0.152cm and 0.5cm)$);
  \coordinate (YY) at ($(-.8, 0) + (-125:0.152cm and 0.5cm)$);
  \coordinate (AA) at ($(-.8, 0) + (-45:0.152cm and 0.5cm)$);
  \coordinate (BB) at ($(-.8, 0) + (125:0.152cm and 0.5cm)$);
  \draw[dashed] ($(-0.8, 0) + (-90:0.15cm and 0.5cm)$(P) arc (-90:90:0.15cm and 0.5cm);
  \draw[MidnightBlue] (X)--(Y);
  \draw[MidnightBlue] (A)--(B);
  \draw[MidnightBlue,dashed] (XX)--(YY);
  \draw[MidnightBlue,dashed] (AA)--(BB);
\end{tikzpicture}
\end{minipage}%
\begin{minipage}{.1 \textwidth} 
\begin{tikzpicture}
\draw (0,0)node{$\,\,\,\,\,=\,\,\,\,\,0$};
\end{tikzpicture}
\end{minipage}%
\vspace{10pt}
\caption{The tadpole cancellation condition. The M{\"o}bius strip is represented by a cylinder with one crosscap (the ``X'' at the end) and we must include separate contributions from crosscaps on the left and right ends. The Klein bottle is represented by a cylinder with two crosscaps.}
\label{fig:tadcan}
\end{figure}

The goal is to calculate the cylinder, M{\"o}bius strip, and Klein bottle amplitudes and check that the tadpole contributions cancel amongst them (Fig.~\ref{fig:tadcan}). 
Furthermore, since the putative tadpoles are in the closed string sector we must focus on the amplitudes in the tree-channel. For the moment we will focus on the cases with $n$ even, which are orientifolds of Type $\rm 0B$.

First we will consider the cylinder amplitude.
We recall that in Type \oB we have two different kinds of nine-branes, with corresponding boundary states
\bea
|{\rm D}9,\eta\rangle =  \frac{G}{\cN_{{\rm B}9}} \frac{1}{\sqrt{2}} (\eta \, |{\rm B}9, \eta\rangle_{\NSNS} +  |{\rm B}9,\eta \rangle_{\RR} )~,
\eea
where $G$ is a group theory factor which equals $G=N$  for unitary gauge group and $G=2N$ for orthogonal or symplectic gauge group; see (\ref{multibranenorm}).
The corresponding antibranes are
\bea
|\overline{{\rm D}9},\eta\rangle = \frac{G}{\cN_{{\rm B}9}} \frac{1}{\sqrt{2}} (\eta\,|{\rm B}9, \eta\rangle_{\NSNS} -  |{\rm B9},\eta \rangle_{\RR} )~.
\eea
In order to avoid introducing $\RR$ tadpoles, we must only introduce brane-antibrane pairs, with boundary state 
\bea
\label{D9barD9}
|{\rm D}\overline{\rm D}9,\eta\rangle =  |{\rm D}9,\eta\rangle + |\overline{{\rm D}9},\eta\rangle =  \frac{G}{\cN_{{\rm B}9}} \, \sqrt{2}\, \eta \,|{\rm B}9, \eta\rangle_{\NSNS}~.
\eea
In terms of such boundary states the cylinder amplitude is given by
\bea
\cA_{C_2} = \int_0^\infty d l \, \langle {\rm D}\overline{\rm D}9, \eta | e^{- 2 \pi l H_\text{cl}}| {\rm D}\overline{\rm D}9, \eta \rangle~,
\eea
which can be evaluated using the results collected in Appendix \ref{boundstateapp} to give
\bea
\cA_{C_2} 
=  \frac{G^2}{16} \, v_{10} \int_0^\infty dl\, \frac{f_3^8(2 i l)}{f_1^8(2 i l)}~.
\eea
Worldsheet parity does not affect this amplitude so the result does not depend on $n$.
We can easily extract the massless $\NSNS$ tadpole contribution using the $q$-expansions in \eqref{eq:fs}, giving
\bea
\cA_{C_2} \big|_{\rm tadpole}
=  \frac{G^2}{2} \, v_{10} \int_0^\infty dl \,.
\eea

Next we calculate the M{\"o}bius strip amplitude. 
The ${\rm O}9$-plane state was given in (\ref{finalType0crosscap}); crucially, it was argued to be $n$-dependent. 
Using the result obtained there, the M{\"o}bius strip amplitude can be evaluated to give
\begin{align}
\label{M{\"o}biusamploop}
\cA_{M_2} &= \int_0^\infty d l \, \left( \langle {\rm D}\overline{\rm D}9, \eta | e^{- 2 \pi l H_{\text{cl}}}| {\rm O}9 \rangle  + \langle   {\rm O}9 | e^{- 2 \pi l H_{\text{cl}}}| {\rm D}\overline{\rm D}9, \eta\rangle \right) \no \\
&= -2\, G\, v_{10}\, \int_0^\infty dl  \left[\frac{ e^{i \eta \frac{\pi}{4} n} f_3^{8} - e^{-i \eta \frac{\pi}{4} n} f_4^8}{f_1^{8}} \right]\left(2il + \half\right)\,,
\end{align}
with the tadpole being
\bea
\cA_{M_2} \big|_{\rm tadpole}
= - 2^5 \,G \, \cos\left[\frac{\pi \eta n}{4}\right]\, v_{10} \int_0^\infty dl \,.
\eea
Finally, the Klein bottle amplitude amplitude is
\begin{align}
\label{Kleinamploop}
\cA_{K_2} &= \int_0^\infty d l \, \langle {\rm O}9 | e^{- 2 \pi l H_{\text{cl}}}| {\rm O}9 \rangle  \no \\
&= 16 \, v_{10}\, \int_0^\infty dl  \left[\frac{ 2 f_3^{8} - \left(e^{-i \eta \frac{\pi}{2} n} + e^{i \eta \frac{\pi}{2} n} \right) f_4^8}{f_1^{8}} \right](2il)\,,
\end{align}
with tadpole
\begin{align}
\cA_{K_2} \big|_{\rm tadpole}
&=  16^2   \left(1 + \cos\left[\frac{\pi \eta n}{2}\right]\right)\, v_{10} \int_0^\infty dl \no \\
&=  2^9  \cos\left[\frac{\pi \eta n}{4}\right]^2 \, v_{10} \int_0^\infty dl
\end{align}
Putting all the contributions together we find that the total tadpole is
\begin{align}
  (\cA_{C_2} + \cA_{M_2}  + \cA_{K_2} ) \big|_{\rm tadpole}
  &= \frac12 \left(G- 32 \cos\left[\frac{\pi \eta n}{4}\right]\right)^2 \, v_{10} \int_0^\infty\, d l~.
\end{align}

We may now read off the tadpole cancellation conditions. For $n=0$ one can cancel the $\NSNS$ tadpole by adding sixteen $9$-$\overline 9$ pairs. If we choose these to consist of $m$  ${\rm D}9$-$\overline{{\rm D} 9}$ pairs and $16-m$ ${\rm D}9'$-$\overline{{\rm D} 9}'$ pairs, the resulting gauge group is $[SO(2m) \times SO(32-2m)]^2$. The cases with $m=0,16$
are purely bosonic and have gauge group $SO(32) \times SO(32)$. For $n=4$, we have the symplectic version of $n=0$ and the tadpole cannot be cancelled. For $n=2,6$ we have zero tadpole contribution, and would seemingly not require addition of any nine-branes.

Next we discuss cases with $n$ odd, which are orientifolds of Type $\rm 0A$. In Type $\rm 0A$ there do not exist any stable $9$-branes, but there are unstable ones. These unstable branes do not couple to $\rm RR$ fields, and are purely in the $\NSNS$ sector. Hence the corresponding states may be written as
\bea
|\widetilde{{\rm D}9}, \eta \rangle = \frac{G}{\cN_{{\rm B}9}} \eta | {\rm B}9, \eta \rangle_{\NSNS}
\eea
with $\cN_{{\rm B}9}$ the usual normalization factor accompanying  $| {\rm B}9, \eta \rangle$ and $G$ the corresponding group theory factor. This result differs from (\ref{D9barD9}) only by a factor of $\sqrt{2}$. Then by a similar calculation as above we conclude that tadpole cancellation requires $G = 32 \sqrt{2}\cos\left[\frac{\pi \eta n}{4}\right]$. The case of $n=1$ allows the tadpole to be cancelled by the addition of sixteen $9$-branes.  If we choose these to consist of $m$  ${\rm D}9$-branes and $16-m$ ${\rm D}9'$-branes, the resulting gauge group is $SO(2m) \times SO(32-2m)$. Similar statements hold for $n=7$. The $n=5,3$ cases are the corresponding symplectic cases, for which the $\NSNS$ tadpole cannot be cancelled by the addition of branes.

Finally, we discuss the issue of tadpole cancellation for $\Pin^+$ strings.  In contrast to the $\Pin^-$ theories studied above, for these theories the orientifold only has contributions from the $\RR$ sector; see (\ref{finalType0pcrosscap}). 
Hence one has an $\RR$ tadpole which must be cancelled.  
A calculation analogous to the one above shows that the tadpole can be cancelled by adding $32$ ${\rm D}9$ and $32$ ${\rm D}9'$-branes, giving total gauge group $U(32)$. 
Though this introduces $\NSNS$ tadpoles \cite{Blumenhagen:1999uy,Bergman:1999km}, these do not render the theory inconsistent and can be removed via the Fischler-Susskind mechanism \cite{Fischler:1986ci,Fischler:1986tb}. 

Note that it makes sense to talk about tadpoles in the $\Pin^+$ theories despite $\RP^2$ not admitting a $\Pin^+$ structure. 
The reason for this is that the tadpole is given by a one-point function on $\RP^2$, which corresponds to a punctured $\RP^2$. 
The latter manifold is conformally equivalent to the M{\"o}bius strip, which does in fact admit a $\Pin^+$ structure.

As a final note, whenever tadpole cancellation requires the addition of D9-branes, the question of stability of ${\rm D}p$-branes must be revisited to account for the possibility of tachyonic modes of the strings stretched between the ${\rm D}p$- and ${\rm D}9$-branes. In this case, the K-theory classification outlined in Sections \ref{sec:Dbraneopen} and \ref{sec:Dbraneclosed} will be modified, and branes which were previously stable may become unstable. 
\section{\texorpdfstring{$\Arf$}{Arf} and \texorpdfstring{$\ABK$}{ABK} from index theory}
\label{App:indextheory}
In this appendix we rephrase many of the results on the $\Arf$ and $\ABK$ invariants given in Section \ref{sec:topsupercond} in terms of index theory. The majority of this appendix is due to E. Witten \cite{Wittenemails}. 
The authors thank him for very kindly allowing them to reproduce the content here.
Four-dimensional analogs of many of these results can be found in Appendix C of \cite{Witten:2015aba}. %Other related results are given in \cite{Atiyah,Klaus}. 
\subsection{\texorpdfstring{$\eta$}{eta}-invariants: generalities}
\if0
We begin by recalling a useful invariant that arises in the calculation of fermion path integrals. According to the Dai-Freed theorem \cite{Dai:1994kq}, the partition function of a Majorana fermion can be defined as 
\bea
\label{eta-def}
Z_{\rm ferm}(\Sigma, \sigma) = |Z_{\rm ferm}(\Sigma, \sigma)| e^{-i \frac\pi4 \eta(\Sigma, \sigma)},
\eea
with the phase being determined in terms of the so-called $\eta$-invariant. An explicit expression for this invariant may be obtained by regularizing the fermion path integral via Pauli-Villars regularization, which breaks time-reversal symmetry but allows us to extract the phase,
\bea
\label{paulivill}
\mathrm{Arg}\left({\rm Pf}(D)_{\rm reg} \right) = \mathrm{Arg}\left( \prod_E{}' \frac{E }{E + i m} \Big|_{m \rightarrow \infty} \right)= - \frac{i}2 \log \prod_E e^{-i {\frac{\pi}{2}}\mathrm{sgn}(E) } 
\eea
of the partition function. Here $E$ are the eigenvalues of the Dirac operator on $\Sigma$ with spin structure $\sigma$, and the symbol $\prod_E{}'$ in the intermediate step means a product over only  one element of each Kramers doublet --- see \cite{Witten:2015aba} for a detailed explanation. Comparing the right-hand side above to (\ref{eta-def}), 
\fi
Our normalization of the eta invariant is 
\bea
\label{eq:etadef}
\eta(\Sigma, \sigma) = \sum_E \mathrm{sgn}(E) 
\eea
where $E$ are the eigenvalues of the Dirac operator on $\Sigma$ with spin structure $\sigma$, and the sum is to be appropriately regularized. 
We work in the convention that ${\rm sgn}(0) = 1$, so that $\eta(\Sigma, \sigma)$ also counts zero-modes. 
Often we will omit $\Sigma$ from the argument of $\eta(\Sigma, \sigma)$. 

Because the $\Arf$ and $\ABK$ invariants can be expressed as ratios of massive fermion path integrals as in (\ref{eq:ArffromZferm}) and (\ref{Sunoriented}), they are examples of $\eta$-invariants. For example, for the $\Arf$ invariant we have 
\bea
(-1)^{\Arf(\Sigma,\, \sigma)} =\frac{ Z_{\rm ferm}(m \gg 0)}{Z_{\rm ferm}(m \ll 0)} = \prod_E{}\frac{i E + m}{i E - m} % =  \prod_E{}'\frac{E}{E + i m}  \prod_E{}'\frac{E-im }{E}
=e^{i \frac\pi2 \eta(\sigma)}.
\eea
An analogous result holds for the $\ABK$ invariant. 

The $\eta$-invariant is not necessarily  a  bordism invariant, but in the case of two-dimensional theories the $\eta$-invariant modulo some integer is.
This can be seen by appealing to the APS index theorem, which states that the index of the Dirac operator on a manifold $Y_{d+1}$ with boundary $\p Y_{d+1} = X_d$ is given in terms of the $\eta$-invariant as\footnote{In the original notation of APS \cite{Atiyah:1976jg}, what we are calling $\eta$ is instead called $2 \xi$.}
\bea
\label{eq:APSind}
\mathrm{ind} \, i {D}_{Y_{d+1}} = - \half \eta(X_d, \sigma)+ \int_{Y_{d+1}} \hat A(R) \, \mathrm{ch}(F)~.
\eea
Note that when $d$ is even, the local term on the right-hand side vanishes, and as a result the $\eta$-invariant can be a bordism invariant.
Combined with the fact that the left-hand side is an integer,
we see that  the $\eta$-invariant modulo 2 is a bordism invariant.
This can be refined further.

Assume that the fermion system whose Dirac operator is used in the definition of the $\eta$-invariant admits a mass term. 
This provides an invariant anti-symmetric bilinear form on the eigenfunctions,
and therefore introduce a quaternionic structure.
Therefore the index is in fact an even number, and $\eta$ modulo 4 is a bordism invariant.

Let us now consider a spin 2-manifold.
 Then there exists a globally well-defined chirality matrix $\Gamma$ satisfying $\Gamma^2 = 1$ and $\{ iD, \Gamma\} = 0$, and hence for any state of non-zero eigenvalue $E$ there is also a state with  eigenvalue $- E$.  
Then the contributions to the $\eta$-invariant from nonzero eigenvalues simply cancel out.
Denoting the number of positive chirality zero-modes with spin structure $\sigma$ by $\zeta(\sigma)$, we have 
\bea
\eta(\sigma) = 2 \zeta(\sigma) \mod 4
\eea
where the factor of 2 arises because $\eta(\sigma)$ counts both chiralities. This means that 
\bea
\label{eq:Arfindtheorydef}
(-1)^{\Arf(\Sigma,\,\sigma)} =e^{i \frac\pi2 \eta(\sigma)} = (-1)^{\zeta(\sigma)} 
\eea
generates at most a $\ZZ_2$, as expected by our previous definitions of the $\Arf$ invariant.

We next consider the $\Pin^-$ case. 
As argued above, the $\eta$-invariant is a mod 4 bordism invariant. 
Let us now show that $\eta$ takes  half-integer values, and thus provides us with a mod 8 invariant. 
The half-integrality is proven as follows. Given a $\Pin^-$ structure $\sigma \in H^2(\Sigma, \ZZ_2)$, there exists a ``complementary'' $\Pin^-$ structure $\sigma':=\sigma + w_1$ which is obtained by twisting by the orientation bundle. 
Then note that\footnote{%
This equality is true because the left-hand side is the $\eta$-invariant of % a Dirac operator coupled to 
the spin structure on the oriented double cover $\hat \Sigma$. %By our previous arguments, Dirac eigenvalues on $\hat \Sigma$ are paired, with the total number of zero-modes given by $\eta(\hat \sigma) = 2 \zeta(\hat \sigma)\mod 4$. Furthermore, $\zeta(\hat \sigma)$ is even, since when we reduce to $\Sigma$ we divide by a symmetry that anticommutes with chirality, so there are $\zeta(\hat \sigma)$ zero modes on $\Sigma$ with $\Pin^-$ structure $\sigma$ and $\zeta(\hat \sigma)$ zero modes with $\Pin^-$ structure $\sigma'$.    
%Because the eigenvalues on $\Sigma$ with any given $\Pin^-$ structure were argued to have even multiplicity, we conclude that $\zeta(\hat \sigma)$ is even and hence that $\eta(\hat \sigma) = 0 \mod 4$.
Note that $\hat \Sigma$ is the boundary of the total space $X$ of the unit disk bundle of the orientation line bundle of $\Sigma$. 
That $\Sigma$ is $\Pin^-$ is equivalent to $X$ being spin.
These facts together imply that $\hat\Sigma$ is null-bordant, and so the right-hand side is 0 modulo 4.
}
\bea
\label{eq:etarel}
\eta(\sigma) + \eta(\sigma') = 0 \mod 4~.
\eea
We also have generally that
\bea
\label{eq:zetarel}
4 \eta(\sigma + a) -  4\eta(\sigma) = 0 \mod 4
\eea
for any $a \in H^1(\Sigma, \ZZ_2)$.\footnote{%
To see this,
note that $4 \eta(\sigma)$ is the $\eta$-invariant of the Dirac operator with $\Pin^-$ structure $\sigma$ acting on a rank 4 trivial real vector bundle $V$, 
whereas $4 \eta(\sigma + a)$ is the $\eta$-invariant of the Dirac operator with $\Pin^-$ structure $\sigma$ acting on a rank 4 real vector bundle $V'=A^{\oplus4}$,
where $A$ has the property that $w_1(A) = a$. 
Because the Stiefel-Whitney classes of $V'$ all vanish, $V'$ is trivial and has the same mod 4 $\eta$-invariant as $V$, thereby giving (\ref{eq:zetarel}).}
In the case that $a = w_1$ we have $\sigma + a = \sigma'$, and thus combining (\ref{eq:etarel}) and (\ref{eq:zetarel}) we conclude that $\eta(\sigma)$ is generically half-integral. As a result, we have that $e^{i \pi \ABK(\Sigma,\,\sigma)} =e^{i \frac\pi2 \eta(\Sigma,\,\sigma)}$ generates at most $\ZZ_8$, as expected.

\subsection{\texorpdfstring{$\eta$}{eta}-invariants: examples}

We now offer some explicit calculations of the $\Arf$ and $\ABK$ invariants in terms of their definitions in this appendix. 

\paragraph{$T^2$:} A trivial example is that of the $\Arf$ invariant on $T^2$ with spin structure $\sigma$. In that case we know that for the $\NSNS$, $\RNS$, and $\NSR$ spin structures we have $\zeta(\sigma) = 0$, whereas for $\RR$ we have $\zeta(\sigma) = 1$. This together with (\ref{eq:Arfindtheorydef}) then reproduces the results of (\ref{ArfonTorus}). 

\paragraph{$\mathbb{RP}^2$:} A less trivial result is to reproduce the values of $\ABK$ on $\mathbb{RP}^2$. 
We compute it in two ways.
The first is to consider an orbifold of the three-torus $T^3/\ZZ_2$ where $\ZZ_2$ acts as $x_i \rightarrow - x_i$ for $i=1,2,3$. 
The resulting space has eight fixed points at $x_i \in \half \ZZ$. 
We may remove a small ball around each of these points to obtain a smooth manifold, with the boundary of this manifold being eight copies of $\mathbb{RP}^2$. 
Then by the APS index theorem (\ref{eq:APSind}) for $d=2$ we conclude that $\eta(\mathbb{RP}^2) =- \frac14 {\rm ind}iD$, with $iD$ the Dirac operator on the $T^3/\ZZ_2$ with points removed. 
Using conformal invariance, it is possible to argue that the index of the Dirac operator on this manifold is the same as that on the original $T^3/\ZZ_2$, so we need only compute this quantity. 
Let us define $\cH_\pm$ to be the spaces of spinors on $T^3$ which satisfy $\psi(-x) = \pm \psi(-x)$.
 The Dirac operator maps $\cH_\pm \rightarrow \cH_\mp$, and the index of the Dirac operator on $T^3/\ZZ_2$ is then just defined to be the number of zero-modes in $\cH_+$ minus those in $\cH_-$. 
 These numbers are easily obtained: depending on the $\Pin^-$ structure, the zero-modes are the 2-dimensional space of constant spinors in either $\cH_+$ or $\cH_-$, with no zero modes in the remaining space. 
 Hence we have ${\rm ind}i D = \pm 2$, and consequently $\eta(\mathbb{RP}^2) = \pm \half$. 
 We may finally calculate the $\ABK$ invariant to be $e^{i \pi \ABK(\mathbb{RP}^2)} = e^{\pm i \frac\pi4}$, matching the previous results in (\ref{eq:ABKRP2}). 

The second derivation of this result is  a direct computation from the spectrum of the Dirac operator. 
Instead of directly studying the Dirac equation on unoriented manifolds $\Sigma$ we will consider their orientable double covers $\hat \Sigma$. 
These are equipped with an orientation-reversing involution $\tau$ such that $\Sigma = \hat \Sigma/\tau$. 
We will make use of the following morphism
\begin{align}
  \begin{split}
   \Pin^- \text{ structures on } \Sigma = \hat \Sigma/\tau \quad &\longrightarrow \quad \tau\text{-invariant spin structures on } \hat \Sigma\,. 
  \end{split}
\end{align} 
induced by the projection.
This map is not injective, but rather two-to-one since given a $\Pin^-$ structure $\sigma$, both $\sigma$ and its twist by the orientation bundle $\sigma'$ lift to the same spin structure on the orientable double cover. 
It is not surjective either, since there are spin structures on $\hat \Sigma$ which are not the lift of any $\Pin^-$ structure. 
The $\tau$-invariance of the spin structures on $\hat \Sigma$ implies that  $[\tau,iD] = 0$. Hence there is a basis of eigenspinors with a well-defined eigenvalue of $\tau$. 
The different eigenvalues of $\tau$ correspond to different $\Pin^-$ structures $\sigma$ and $\sigma'$. 
In summary, we can extract the spectrum of the Dirac operator $iD$ on $\Sigma$ from that on the orientable double cover $\hat \Sigma$ by considering eigenspinors of $iD$ on the latter with a fixed eigenvalue of $\tau$.

Let us apply this strategy to  $\mathbb{RP}^2$.
Its orientable double cover is a two-sphere $S^2$, which has a single spin structure. 
The spectrum of the Dirac operator on the two-sphere is well known and is given by\footnote{%
The eigenspace decomposition is simply the spinor spherical harmonics.
One way to quickly derive the eigenvalues is to use the operator-state correspondence of a free massless Dirac fermion in dimension $d+1$.
There, the (absolute value of the) eigenvalues of the Dirac operator on $S^d$ are
the dilatation eigenvalues of the single-particle operators of the form $\partial\cdots\partial \psi$, which are therefore given by $n+d/2$.
}
\begin{equation}
  E = \pm (n+1) \quad \text{with multiplicity} \quad 2(n+1) \quad \text{and} \quad \tau = \mp (-1)^n~.
\end{equation}
with $n \geq 0$. For convenience we will regularize the sum over eigenvalues \eqref{eq:etadef} as follows
\bea
\label{eq:regeta}
\eta =  \lim_{\eps\rightarrow 0^+} \sum_E \mathrm{sgn}(E) \, e^{- \eps |E|}~.
\eea
From this information we can readily calculate the $\eta$-invariant of $\mathbb{RP}^2$ for either $\Pin^-$ structure,
\bea
\tau=\pm1: \quad\eta =  \lim_{\eps\rightarrow 0^+} \left(\mp \sum_{n \in 2 \NN} 2(n+1) \, e^{- \eps (n+1)} \pm \sum_{n\in 2 \NN+1} 2(n+1) \, e^{- \eps (n+1)} \right) = \mp \frac12
\eea
reproducing our previous result.

\paragraph{The Klein bottle:}
We now obtain the values for the $\eta$-invariant on the Klein bottle $K_2$. A trivial way to do so is to note that the $\eta$-invariant factorizes under connected sums, $\eta(\Sigma_1 \# \Sigma_2) = \eta(\Sigma_1) + \eta(\Sigma_2)$. Then recalling that $K_2=\mathbb{RP}^2\#\mathbb{RP}^2$, our previous results imply that $\eta(K_2) =0,0,\pm1$ depending on the choice of $\Pin^-$ structure. This reproduces the results of (\ref{KleinbottleABK}) for the $\ABK$ invariant. 

A more fulfilling derivation of this result is to again consider the explicit Dirac spectrum. The orientable double cover in this case is the torus $T^2$, which we take to be rectangular with side lengths $1$ and $2$. That is, $T^2 = \mathbb{R}^2/\Gamma$ for the lattice $\Gamma = \ZZ\oplus2\ZZ$. Taking $x^i=(x,y)$ to be the coordinates on the torus, we have $(x,y)= (x+1,y) = (x+1,y +2)$. The orientation-reversing involution is $\tau(x,y)=(-x,y+1)$.
As was discussed in Section \ref{sec:Pin+SPT}, of the four torus spin structures only those which are periodic in the $y$-direction descend in the quotient. 

We first consider the spin structure periodic in $x$.
We begin by finding the eigenspinors of the square of the Dirac operator, which is just the Laplacian, $(iD)^2=-\Delta$. These can be easily constructed as
\begin{equation}
  u_p(x^i) = f_p(x^i) \, \Psi, \quad f_p(x^i) = e^{2 \pi i\, x^i p_i} \quad \text{with} \quad p_i \in \Gamma^* =  \ZZ\oplus\frac12 \ZZ 
\end{equation}
where $f_p(x^i)$ are the eigenfunctions of the Laplacian, with momenta taking values in the dual lattice $\Gamma^*$, and $\Psi$ a covariantly constant spinor. We can also construct eigenfunctions for the spin structure antiperiodic in $x$ by letting the momenta take values in $\tilde\Gamma^* =  (\ZZ+\frac12)\oplus\frac12 \ZZ$. In both cases it is easy to check that
\begin{equation}
  (iD)^2 u(x^i)  = 4 \pi^2 p^2 \, u(x^i)~.
\end{equation}
In terms of the $u(x^i)$ we can construct the eigenspinors of the Dirac operator as
\begin{equation}
  v^{\pm}(x^i)  = \pm 2 \pi |p| \, u(x^i) + iD \,u(x^i)~, \quad \text{with} \quad iD \,  v^{\pm}(x^i)  =  \pm 2 \pi |p| \, v^{\pm}(x^i)~.
\end{equation}
This spectrum is clearly symmetric, and hence if there are no zero-modes the $\eta$-invariant vanishes. The only case in which there are zero-modes is the case of periodic spin structure in both directions, and then the multiplicity of the zero-mode is two so that $\eta(T^2,\sigma_{\RR}) = 2$, as we know. 

To get the corresponding results for the Klein bottle, we now keep the portion of the spectrum with fixed $\tau$ eigenvalue. To do so, it is useful to choose an explicit representation for the gamma-matrices, say as $\g_1 = \sigma_3$ and $\g_2 = \sigma_1$. Then in addition to acting on $(x,y)$ in the manner shown above, the involution $\tau$ acts as $\sigma_1$ on spinors. With this, it is easy to show that the eigenspinors with fixed eigenvalue under $\tau$ have $p_1=0$, and hence require periodic spin structure in the $x$-direction. Defining $n: = 2 p_2$, the remaining spectrum is \begin{equation}
  E = \pm \pi |n| \quad \text{with multiplicity} \quad 1\quad \text{and} \quad \tau = \mp (-1)^n~,
\end{equation}
with the zero eigenvalue having multiplicity 2. 

Two of the $\Pin^-$ structures of $K_2$ lift to antiperiodic spin structure in the $x$-direction, and consequently have vanishing $\eta$-invariant. The remaining two $\Pin^-$ structures lift to periodic spin structure in the $x$-direction, and correspond to the two different eigenvalues for $\tau$. The resulting $\eta$-invariants are
\bea
\tau=\pm1: \qquad \eta(K_2) = \lim_{\eps\rightarrow 0^+} 2\left(\mp\sum_{n\in 2 \NN} e^{- \eps \pi |n|} \pm\sum_{n\in 2 \NN+1} e^{- \eps \pi |n|} \right) = \mp 1~,
\eea
reproducing earlier results. 

\subsection{Quadratic forms and enhancements}
Let us now make contact between index theory and the combinatoric definitions of $\Arf$ and $\ABK$ given in (\ref{Arfdef}) and (\ref{ABKdef}). In order to do so, we first rewrite the quadratic form $\tilde{q}(a)$ and enhancement $q(a)$ in terms of indices. 

We start with the oriented case. 
We consider
\bea
\label{eq:indquadform}
\tilde{q}(a) := \zeta(\sigma + a) - \zeta(\sigma)\mod 2
\eea
for a given spin structure $\sigma$, where $\zeta(\sigma)$ is the number of zero modes of the positive-chirality Dirac operator.
We now verify that this is the quadratic refinement of the intersection form,
i.e.~ the relation (\ref{spinstrucrel}) is satisfied. 
To do so, we must check that 
\bea
\label{new216v}
\zeta(\sigma+a+b) + \zeta(\sigma+a) + \zeta(\sigma+b) + \zeta(\sigma) = \int a \cup b \mod 2
\eea
holds. 
We note that the left-hand side is $\zeta(V)$, 
the mod 2 index with spin structure $\sigma$ for the Dirac operator acting on a positive chirality spinor valued in a rank 4 real vector bundle  $V = \eps + A + B + AB$, where $\eps$ is a trivial real line bundle and we have $w_1(A) = a$, $w_1(B)=b$, and $w_1(AB) = a+b$.
From this definition, it also follows that $w_1(V)=0$ and $w_2(V)=a\cup b$. 
Therefore,
 $V$ is topologically equivalent to $H\oplus L$, the direct sum of a rank 2 real trivial bundle $H$ and a complex line bundle $L$ with $c_1(L) = w_2(V) \mod 2$. 
This is because  real vector bundles on a Riemann surface are classified topologically by their rank and Stiefel-Whitney classes.
Clearly $\zeta(H)=0$ modulo 0, so we have $\zeta(V) = \zeta(L)$. 
Under the $U(1)$ rotating $L$, the zero-modes of $L$ have charge $\pm1$, with respective numbers $n_{\pm}$. We then have $\zeta(L) = n_+ + n_- \mod 2$. 
By complex conjugation, we can replace a charge $-1$ mode of positive chirality with a charge $+1$ mode of negative chirality. 
Let $m_\pm$ denote the number of positive/negative chirality modes of charge $+1$. Then we have $n_\pm = m_\pm$, and hence $\zeta(L) = m_+ - m_- \mod 2$. 
The right-hand side is now  the index of the Dirac operator acting on $L$, which by the index theorem is $\int c_1(L) = \int w_2(V) = \int a \cup b \mod 2$. 
We then conclude that $\zeta(V)=\int a \cup b \mod 2$, thereby confirming (\ref{new216v}).

With the definition (\ref{eq:indquadform}), it is now simply to check that our combinatorial definition (\ref{Arfdef}) is consistent with the definition (\ref{eq:Arfindtheorydef}). We have 
\bea
(-1)^{\Arf(\Sigma, \sigma)} &=& \frac{1}{\sqrt{|H^1(\Sigma, \ZZ_2)|}} \sum_{a \in H^1(\Sigma, \ZZ_2)}(-1)^{\zeta(\sigma + a) - \zeta(\sigma)}
\no\\
&=& (-1)^{\zeta(\sigma)}\left( \frac{1}{\sqrt{|H^1(\Sigma, \ZZ_2)|}} \sum_{a \in H^1(\Sigma, \ZZ_2)}(-1)^{\zeta(a) }\right).
\eea
The term in parenthesis can be shown to square to 1 using steps analogous to those used for the combinatorial definition, and one can then fix the result to $+1$ by checking an explicit example. 

We now move on to the $\Pin^-$ case. In that case we define the quadratic enhancement $q(a)$ as 
\bea
{q}(a) = \zeta(\sigma + a) - \zeta(\sigma)\mod 4
\eea
where $\sigma$ is now a $\Pin^-$ structure. 
We must check that (\ref{quadenrule}) is satisfied by this definition. 
To do so, let us first prove this in the special case of $\Sigma = \mathbb{RP}^2$. There is then only one non-trivial possibility for $a$ and $b$, namely $w_1$.
 The identity (\ref{quadenrule}) is trivially satisfied unless $a = b = w_1$, so we focus on that case. Then noting that $q(0)= 0$, the identity we wish to prove is 
\bea
q(w_1) = \int w_1^2 = 1 \mod 2~.
\eea
We now use the fact that $q(w_1) = \eta(\sigma') - \eta(\sigma) \mod 2$. As we showed in the previous subsection, for $\mathbb{RP}^2$ one of the two $\eta$-invariants is $+\half$, while the other is $- \half$. Either way, we conclude that $q(w_1) = 1 \mod 2$, thereby confirming the identity. 

\begin{figure}[!tbp]
\centering
\begin{tikzpicture}[rotate=-90,semithick]
  \pic[
  tqft,
  incoming boundary components=2,
  outgoing boundary components=2,
  draw,
  at={(0,0)},
  name=a,
  every outgoing upper boundary component/.style={draw,dashed},
  every incoming upper boundary component/.style={draw,dashed},
  every outgoing lower boundary component/.style={draw,dashed},
  every incoming lower boundary component/.style={draw},
  offset=0,
  cobordism height=4cm,
  view from=incoming,
  circle x radius=0.5cm,
  genus=0,
  rotate=-90,
  %cobordism/.style={fill=gray!10,fill opacity=0.2},
];
\draw  (0, -4.1cm) node[left] {$\mathbb{RP}^2$};
\draw  (2cm, -4.2cm) node[left] {$\Sigma$};
 \draw (2cm,0.2) node[right]  {$\Sigma$};  
  \draw (0,0.15) node[right]  {$\mathbb{RP}^2$};  
  \draw[blue] (0, -4.175cm) edge[out=90,in=225,looseness=0.8] (1.05cm, -2cm);
  \draw[blue] (1.05cm, -2cm)edge[out=45,in=-90,looseness=0.8] (2.05cm, -0.175cm);
 \end{tikzpicture}
 \caption{A bordism between $\mathbb{RP}^2 \times \Sigma$ and itself, with a real vector bundle $V'$ on it. The blue line represents the Poincar{\'e} dual of $w_2(V')$.}
 \label{fig:bordism}
\end{figure}
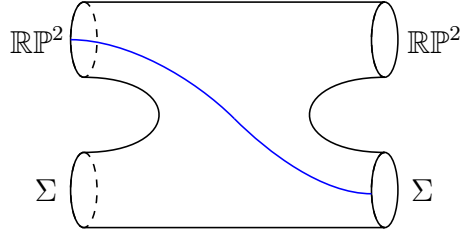

To prove the identity in generality, we now make use of bordism invariance. What we would like to prove is 
\bea
\zeta(\sigma + a + b) -\zeta(\sigma+a) - \zeta(\sigma+b) + \zeta(\sigma)  = 2 \int a \cup b \mod4~.
\eea
Equivalently, this is 
\bea
\label{finaleqbord}
\eta(V') -\eta(V)  = 2 \int w_2(V') \mod4
\eea
where $V$ is a rank 8 trivial bundle and $V'$ is a rank 8 bundle with $w_1(V') = 0$ and $w_2(V')=a \cup b$.  Recall that in two dimensions $w_2(V')$ is Poincar{\'e} dual to a point, while in three dimensions it is dual to a curve. Then consider a bordism from $\mathbb{RP}^2 \times \Sigma$ to itself by means of a connected sum, as shown in  Fig. \ref{fig:bordism}. In the figure, we have drawn a curve that starts at $\mathbb{RP}^2$ on the top left and goes down to $\Sigma$ on the bottom right. Consider a real vector bundle $V'$ such that $w_2(V')$ is Poincar{\'e} dual to this curve. This gives a bordism between $\mathbb{RP}^2 \times \Sigma$ with $w_2(V')=1$ on $\mathbb{RP}^2$ and $0$ on $\Sigma$, and $\mathbb{RP}^2 \times \Sigma$ with $w_2(V')=0$ on $\mathbb{RP}^2$ and $1$ on $\Sigma$. Because (\ref{finaleqbord}) is  unchanged by this change in $V'$, it must hold for any $\Sigma$, thus proving the claim.

\section{\texorpdfstring{$\mho_{\Dpin}^d(\pt)$}{?dDpin(pt)} via the Atiyah-Hirzebruch spectral sequence}
\label{App:AHSS}

In this appendix we analyze the Atiyah-Hirzebruch spectral sequence (AHSS) for 
 $\mho_{X}^{d=2,3}(pt)$ for $X=\Spin\times \bZ_2$, $\Pin^\pm$, and $\DPin$.
 Except for the last case $X=\DPin$ the outcome is well-known;
 we include the computations here just to illustrate the method.

To write down the $E_2$ page, we will need the groups $H^p( X, \underline{\mho^q_\spin(pt)})$. 
More concretely, we need $H^*(B\bZ_2\times B\bZ_2,\bZ_2)$
and $H^*(B\bZ_2\times B\bZ_2,\underline{U(1)})$,
where the underline signifies that 
the first $\bZ_2$ acts on $U(1)$ by complex conjugation and the second acts trivially.
The first is standard: we have \begin{equation}
 H^*(B\bZ_2\times B\bZ_2,\bZ_2)=\bZ_2[w,a]
\end{equation} 
where $w$ and $a$ are the generators of $H^1(B\bZ_2\times B\bZ_2,\bZ_2)=H^1(B\bZ_2,\bZ_2)\oplus H^1(B\bZ_2,\bZ_2)$.
As for $H^*(B\bZ_2\times B\bZ_2,\underline{U(1)})$,
they are determined as an abstract group in e.g.~Appendix~J.6~ of \cite{Chen:2011pg};
in particular all elements are annihilated by 2.
For our purposes we will need more detailed data.
We note that the short exact sequence \begin{equation}
0\longrightarrow \bZ_2
\stackrel{\iota}{\longrightarrow} \underline{U(1)} 
\stackrel{2\cdot}{\longrightarrow} \underline{U(1)}
 \to 0
\end{equation} leads to the long exact sequence \begin{multline}
\cdots \stackrel{2\cdot}{\longrightarrow}  H^{d-1}(B\bZ_2\times B\bZ_2 ,\underline{U(1)}) 
\stackrel{\beta}{\longrightarrow}   H^d(B\bZ_2\times B\bZ_2 ,\bZ_2) \\
\stackrel{\iota}{\longrightarrow}  H^d(B\bZ_2\times B\bZ_2 ,\underline{U(1)}) 
\stackrel{2\cdot}{\longrightarrow}  H^d(B\bZ_2\times B\bZ_2 ,\underline{U(1)}) 
\stackrel{\beta}{\longrightarrow}  \cdots
\end{multline}
Since $2\cdot$ annihilates everything, we see that $H^d(B\bZ_2\times B\bZ_2 ,\underline{U(1)})$ is a quotient of $H^d(B\bZ_2\times B\bZ_2 ,\bZ_2)$ by
the image of the twisted Bockstein $\beta = \Sq^1+w$.
Therefore we find \begin{align}
H^0(B\bZ_2\times B\bZ_2,\underline{U(1)})&=U(1)~,\\
H^1(B\bZ_2\times B\bZ_2,\underline{U(1)})&=\bZ_2=\frac{\vev{w,a} }{\vev{{w}}}~,\\
H^2(B\bZ_2\times B\bZ_2,\underline{U(1)})&=\bZ_2^2=\frac{\vev{w^2,wa,a^2} }{\vev{a(a+w)}}~,\\
H^3(B\bZ_2\times B\bZ_2,\underline{U(1)})&=\bZ_2^2=\frac{\vev{w^3,w^2a,wa^2,a^3} }{\vev{w^3,wa^2}}~,\\
H^4(B\bZ_2\times B\bZ_2,\underline{U(1)})&=\bZ_2^3=\frac{\vev{w^4,w^3a,w^2a^2,wa^3,a^4} }{ \vev{{w^2}a(a+w), a^3(a+w) }}~.
\end{align}

This data can be checked e.g.~by noticing that in this low degree range $H^d(B\bZ_2\times BG,\underline{U(1)})$ with $T:\bZ_2\times G\to \bZ_2$ given by $T=w$ equals $\mho_\text{unoriented}^d(BG)$. 
The generators of $\Omega^\text{unoriented}_d(B\bZ_2)$ can be taken to be e.g. %~ \begin{itemize}
	%\item 
	$S^1$ with nontrivial $\bZ_2$ bundle for $d=1$,
	%\item
	$\RP^2$ with and without nontrivial $\bZ_2$ bundle for $d=2$,
	%\item
	 ($\RP^2$ with and without nontrivial $\bZ_2$ bundle) $\times$ ($S^1$ with nontrivial $\bZ_2$ bundle) for $d=3$,
	%\item
	 and $\RP^4$ with and without nontrivial $\bZ_2$ bundle, and $\RP^2\times \RP^2$ with nontrivial $\bZ_2$ on the first factor for $d=4$.
	%\end{itemize}
	We can then evaluate all elements of $\bZ_2[w,a]$ on the generators with the identification that $w$ is $w_1$ of the manifold and $a$ is $w_1$ of the $\bZ_2$ bundle.

%\subsection{Calculation of some cobordism groups}

With this information, we can now proceed to the calculation of the relevant  groups. 
Before computing $\mho_\DPin^d(pt)$,
we illustrate the technique in the known examples of $\mho^d_\Spin(B\bZ_2)$ and $\mho^d_{\Pin^\pm}(pt)$.
Below, the image of $\iota:\bZ_2\hookrightarrow \underline{U(1)}$ 
is denoted by prefixing by $\half$, since $\{0,\half\} \subset U(1)$.

\subsubsection*{\underline{$\mho_\Spin(B\bZ_2)$}}
The $E_2$ page needed for obtaining $\mho_\Spin(B\bZ_2)$ is
\begin{equation}
    \renewcommand{\arraystretch}{1.2}
\begin{array}{c||c|c|c|c|c|c}
q&&&&&\\ \hline
3&&&&&
\\ \hline
2&\bZ_2  & \bZ_2&\bZ_2&\bZ_2&\bZ_2 
\\ \hline
1&\bZ_2  & \bZ_2&\bZ_2&\bZ_2&\bZ_2 
\\ \hline
0&U(1)  & {\half} \bZ_2&& {\half} \bZ_2 &
\\ \hline
\hline
&0&1&2&3&4&p
\end{array}\label{E2}
\end{equation}
This can be found from the data given above by forgetting the pieces involving $w$.
The differential $d_2$ starting from $E^{p,q}_2$ with $p+q\le 4$ turns out to be zero.
The $E_3$ page is then
\begin{equation}
    \renewcommand{\arraystretch}{1.2}
\begin{array}{c||c|c|c|c|c|c}
q&&&&&\\ \hline
3&&&&&
\\ \hline
2&\bZ_2  & \bZ_2&?&?&?
\\ \hline
1&\bZ_2  & \bZ_2&\bZ_2 &? &?
\\ \hline
0&U(1)  &{\half}  \bZ_2&& {\half} \bZ_2 &
\\ \hline
\hline
&0&1&2&3&4&p
\end{array}
\end{equation}
The only possibly nontrivial $d_3$ is $d_3:E^{0,2}_3 \to E^{3,0}_3$
but a special property of untwisted bordism says that every $d_n$ starting from $E^{0,q}$ is zero.
(This fact is explained below Theorem 9.10 of \cite{DK}.)
Then this is also the $E_4$ page, and $E^{p,q}$ with $p+q\le 3$  cannot change any further.

From this we read off that $\mho_\Spin^d(pt)$ for $d=1,2,3$ contains $4,4,$ and $8$ elements, respectively.
This agrees with known results.

\subsubsection*{\underline{$\mho_{\Pin^-}(pt)$}}
The $E_2$ page in this case is
\begin{equation}
    \renewcommand{\arraystretch}{1.2}
\begin{array}{c||c|c|c|c|c|c}
q&&&&&\\ \hline
3&&&&&
\\ \hline
2&\bZ_2  & w & w^2 & w^3 &w^4
\\ \hline
1&\bZ_2  & w & w^2& w^3& w^4
\\ \hline
0&U(1)  && {\half} w^2&& {\half}w^4
\\ \hline
\hline
&0&1&2&3&4&p
\end{array}
\end{equation}
This can be found from the data given above by forgetting the part involving $a$.
For $d_2$ starting from $q=2$, one has $d_2^{2}=\Sq^2 + w_1(V)\Sq^1+ w_2(V) = \Sq^2+ w\Sq^1$.
Then since $\Sq^2(w^2)=(\Sq^1 w)(\Sq^1 w) = w^4$ and $\Sq^1(w)=w^2$, we find
\bea
d_2^{2}(1) =d_2^{2}(w^3)  =0~,\hspace{0.5 in}d_2^{2}(w)  =w^3~, \hspace{0.5 in}d_2^{2}(w^2)  =w^4~.
\eea
On the other hand we have $d_2^{1} = \half \Sq^2$, and hence
\bea
d_2^{1}(1) = d_2^{1}(w) =  d_2^{1}(w^3) = 0 ~,\hspace{0.5 in} d_2^{1}(w^2) = \half w^4~.
\eea
Then the $E_3$ page is 
\begin{equation}
    \renewcommand{\arraystretch}{1.2}
\begin{array}{c||c|c|c|c|c|c}
q&&&&&\\ \hline
3&&&&&
\\ \hline
2&\bZ_2  &  & & ? &?
\\ \hline
1&\bZ_2  & w & & & ?
\\ \hline
0&U(1)  && {\half}w^2&& 
\\ \hline
\hline
&0&1&2&3&4&p
\end{array}
\end{equation}
This predicts $|\mho^d_{\Pin^-}(pt)|=2,8,0$ for $d=1,2,3$, in agreement with known results.

\subsubsection*{\underline{$\mho_{\Pin^+}(pt)$}}
The $E_2$ page in this case is
\begin{equation}
    \renewcommand{\arraystretch}{1.2}
\begin{array}{c||c|c|c|c|c|c}
q&&&&&\\ \hline
3&&&&&
\\ \hline
2&\bZ_2  & w & w^2 & w^3 &w^4
\\ \hline
1&\bZ_2  & w & w^2& w^3& w^4
\\ \hline
0&U(1)  && {\half} w^2&& {\half}w^4
\\ \hline
\hline
&0&1&2&3&4&p
\end{array}
\end{equation}
{This is obtained from the previous data by setting $w=a$. We then have $d_2^{2}= \Sq^2 + w_1\Sq^1 + w_2 = \Sq^2 + w\Sq^1 + w^2$, and so
\bea
d_2^{2}(1) = w^2~, \hspace{0.5 in} d_2^{2}(w) =d_2^{2}(w^2) =0~, \hspace{0.5 in} d_2^{2}(w^3) = w^5~.
\eea
On the other hand $d_2^{1} = \half \Sq^2 + \half w^2$ and hence 
\bea
d_2^{1}(1) = \half w^2~, \hspace{0.5 in} d_2^{1}(w) = \half w^3 ~,\hspace{0.5 in} d_2^{1}(w^2)= 0~, \hspace{0.5 in} d_2^{1}(w^3) = \half w^5~.
\eea
Then the $E_3$ page is 
\begin{equation}
    \renewcommand{\arraystretch}{1.2}
\begin{array}{c||c|c|c|c|c|c}
q&&&&&\\ \hline
3&&&&&
\\ \hline
2&  & w & w^2 & ? &?
\\ \hline
1&  & w &  &w^3 & ?
\\ \hline
0&U(1)  &&  &&{\half} w^4
\\ \hline
\hline
&0&1&2&3&4&p
\end{array}
\end{equation}
This predicts $|\mho^d_{\Pin^+}(pt)|=0,2,2$ for $d=1,2,3$, in agreement with known results.

\subsubsection*{\underline{$\mho_{\DPin}(pt)$}}
We finally arrive at the case of interest. The $E_2$ page is
\begin{equation}
    \renewcommand{\arraystretch}{1.25}
\begin{array}{c||c|c|c|c|c|c}
q&&&&&\\ \hline
3&&&&&
\\ \hline
2&  \bZ_2 & w,a & w^2,wa, a^2 &w^3, w^2a,wa^2,a^3& w^4,\ldots&
\\ \hline
1&  \bZ_2 & w,a & w^2,wa, a^2 & w^3, w^2a,wa^2,a^3 & w^4,\ldots&
\\ \hline
0&U(1)  & {\half}a & {\half}w^2, {\half}wa= {\half}a^2 &  {\half}w^2a,  {\half}a^3 &  {\half}w^4,  {\half}w^3a= {\half}w^2a^2,  {\half}wa^3= {\half}a^4&
\\ \hline
\hline
&0&1&2&3&4 & q
\end{array}
\end{equation}
We have $d_2^2 =\Sq^2 + w\Sq^1 + wa$ and $d_2^1 = \half \Sq^2 + \half wa$. 
Then the $E_3$ page is 
\begin{equation}
    \renewcommand{\arraystretch}{1.25}
\begin{array}{c||c|c|c|c|c|c}
q&&&&&\\ \hline
3&&&&&
\\ \hline
2&   & a & ? & ? & ?&
\\ \hline
1&   & a &  a^2 & ? & ? &
\\ \hline
0&U(1)  &  {\half}a &  {\half}w^2&   {\half}a^3 &  {\half}w^4= {\half}w^3a= {\half}w^2a^2,  {\half}wa^3= {\half}a^4&
\\ \hline
\hline
&0&1&2&3&4 & q
\end{array}
\end{equation}
At this point we see that there can be at most four elements in $\mho_{\DPin}^{2}(pt)$
and eight elements in $\mho_{\DPin}^{3}(pt)$.
We already know a subgroup $\bZ_2\times \bZ_2$ of $\mho_{\DPin}^{2}(pt)$,
generated by $(-1)^{\int w_1^2}$ and $(-1)^{\Arf(\hat \Sigma)}$, and thus we conclude that $\mho_{\DPin}^{2}(pt)=(\bZ_2)^2$.
We also know that the anomaly of Majorana fermion on unoriented surfaces form $\bZ_8$,
so we conclude that $\mho_{\DPin}^{3}(pt)=\bZ_8$.

\section{\texorpdfstring{$\mho_{\Dpin}^d(\pt)$}{?dDpin(pt)} via the Adams spectral sequence}
\label{App:Adams}

\renewcommand{\F}{\mathbb F}
\renewcommand{\R}{\mathbb R}

In this appendix\footnote{%
This appendix was contributed by Arun Debray.
}, we compute $\mho_{\Dpin}^d(\pt)$ for $d\le 6$ a different way, using the Adams spectral
sequence. Though computations with the Adams spectral sequence are often difficult, the problem simplifies greatly
when computing twisted spin bordism groups $\Omega_d^X$, thanks to a technique that first appears in Davis'
thesis~\cite{Dav74} and builds on work of Stong~\cite{Sto63} and Anderson-Brown-Peterson~\cite{ABP67}.

We highly recommend Beaudry and Campbell's paper~\cite{BC18} for a detailed introduction to this method of
computation and the ingredients that go into it, as well as several worked examples. We assume familiarity with
the definitions and notation they give.
%The purpose of this document is to use the Adams spectral sequence to compute some low-dimensional dpin bordism
%groups. (We will explain what a dpin structure is below.) In dimensions $2$ and $3$, these bordism groups are
%computed a different way and used by Kaidi, Parra-Martinez, and Tachikawa~\cite{KPT19a, KPT19b} to classify certain
%invertible field theories which can appear on the worldsheet in type I superstring theory.
\begin{thm}
\label{mainthm}
The low-degree dpin bordism groups are: $\Omega_0^\Dpin\cong\Z/2$, $\Omega_1^\Dpin\cong\Z/2$,
$\Omega_2^\Dpin\cong\Z/2\oplus\Z/2$, $\Omega_3^\Dpin\cong\Z/8$, $\Omega_4^\Dpin\cong\Z/2\oplus\Z/2$,
$\Omega_5^\Dpin\cong 0$, and $\Omega_6^\Dpin\cong\Z/2\oplus\Z/2$.
\end{thm}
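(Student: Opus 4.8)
The plan is to compute $\Omega_*^{\Dpin}$ in low degrees via the Adams spectral sequence, following the Anderson--Brown--Peterson / Davis technique for twisted spin bordism reviewed in \cite{BC18}. Recall from Section~\ref{sec:algebraic-topology} that $\Dpin$ bordism is the twisted spin bordism $\Omega_d^{\Spin}(B\bZ_2\times B\bZ_2; V)$ with $V = (L_1\otimes L_2)\oplus L_2^{\oplus 3}$, so by the Thom isomorphism $\Omega_d^{\Dpin}\cong \widetilde\Omega_{d+4}^{\Spin}(\mathrm{Thom}(B\bZ_2\times B\bZ_2; V))$, i.e.\ the $\MTSpin$-homology of a Thom spectrum $M$. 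First I would identify $M$ concretely: since $w_1(V)=w$ and $w_2(V)=wa$, the relevant Thom spectrum is the twist of the stable vector bundle classified by these characteristic classes. The key simplification is that over the Eilenberg--MacLane spectrum $H\F_2$, the $\MTSpin$-homology is governed by the $\mathcal A(1)$-module structure on $H^*(M;\F_2)$, where $\mathcal A(1) = \langle \Sq^1,\Sq^2\rangle$: one computes $\Ext_{\mathcal A(1)}^{s,t}(H^*(M;\F_2), \F_2)$ as the $E_2$-page (after splitting off any free $\mathcal A(1)$-summands and higher pieces that only contribute above our range).

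The main computational step is the $\mathcal A(1)$-module structure on $H^*(M;\F_2)$ in degrees $\le 6$. Using $H^*(B\bZ_2\times B\bZ_2;\F_2)=\F_2[w,a]$, the Thom class $U$ lives in degree $4$, and the module structure is determined by $\Sq^1 U = w_1(V)U = wU$ and $\Sq^2 U = w_2(V)U = waU$ (plus the Cartan formula and the action on $\F_2[w,a]$). I would write out a basis for $H^{\le 10}(M;\F_2) = U\cdot\F_2[w,a]_{\le 6}$ and compute all $\Sq^1$ and $\Sq^2$ actions, then decompose the resulting $\mathcal A(1)$-module (through this range) into a sum of the standard small $\mathcal A(1)$-modules --- the Joker, the question-mark complex, $\mathcal A(1)$ itself, $\mathcal A(1)/\!/\mathcal A(0)$, free modules, etc.\ --- whose $\Ext$ groups are tabulated. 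This is the step I expect to be the main obstacle: getting the module decomposition exactly right, and in particular correctly tracking the hidden extension that assembles the various $\bZ/2$'s in total degree $3$ into the $\bZ/8$ claimed in Theorem~\ref{mainthm}. That $\bZ/8$ is forced by a nontrivial $h_0$-tower in $\Ext$, which I would cross-check against the AHSS answer $\mho^3_{\DPin}(pt)=\ZZ_8$ from Appendix~\ref{App:AHSS}, as well as against the physical statement that the anomaly of Majorana fermions on unoriented surfaces is $\bZ_8$.

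Once the $E_2$-page is in hand, I would argue that there is no room for Adams differentials in total degree $\le 6$: the $d_2$ and higher differentials shift $(s,t)$ in a way that, given the sparsity of the chart in this range, must vanish --- any potential target either is zero or is ruled out by comparison with the known subgroups (e.g.\ the $(-1)^{\int w_1^2}$ and $(-1)^{\Arf(\hat\Sigma)}$ classes of Section~\ref{sec:TypeISPT} give a $(\bZ_2)^2$ in degree $2$, matching, and similar generators account for the degree-$4$ and degree-$6$ groups). Finally I would resolve the remaining extension problems on the $E_\infty$-page: degrees $0,1$ give $\bZ/2$ with no possible extension; degree $2$ gives $\bZ/2\oplus\bZ/2$; degree $3$ gives $\bZ/8$ by the $h_0$-tower argument above; degrees $4$ and $6$ give $\bZ/2\oplus\bZ/2$ (no nontrivial extensions since the $h_0$-action is trivial there); and degree $5$ is zero because the chart is empty in that total degree. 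Assembling these yields the stated isomorphisms, and the agreement with Appendix~\ref{App:AHSS} in degrees $2$ and $3$ provides an independent consistency check on the whole computation.
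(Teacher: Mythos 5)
Your overall strategy matches the paper's: compute $\widetilde H^*(X;\F_2)$ as an $\cA(1)$-module, decompose into standard small pieces, run the Adams spectral sequence, and resolve differentials and extensions. However, there are two substantive gaps in the way you propose to finish.

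First, the claim that ``there is no room for Adams differentials in total degree $\le 6$'' by sparsity is incorrect. Once you write down the $E_2$-page, you will find a potential $d_2$ from bidegree $(t-s,s)=(4,0)$ into $(3,2)$, and $h_0$- and $h_1$-linearity give no information about it. Whether this $d_2$ vanishes is exactly what decides whether $\Omega_3^\Dpin$ is $\Z/8$ (as claimed) or $\Z/4$. Your fallback --- cross-checking against the AHSS answer or the physics of the Majorana anomaly --- is not a self-contained resolution: the AHSS computation in Appendix~\ref{App:AHSS} itself only bounds $|\mho_\DPin^3(pt)|\le 8$ and then \emph{imports} the $\Z_8$ from the physical statement about eight Majorana fermions, so you would be reasoning in a circle. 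The paper instead resolves this $d_2$ with a genuinely independent argument: a Smith homomorphism $D_{L_2}\colon\widetilde\Omega_3^\Spin(X)\to\Omega_2^{\Pin^-}\cong\Z/8$ (take the Poincar\'e dual of $w_1(L_2)$), and an explicit check that $\RP^3$ with $L_1$ trivial and $L_2$ tautological is dpin and maps to a generator of $\Omega_2^{\Pin^-}$. This surjection onto $\Z/8$ forces $\widetilde\Omega_3^\Spin(X)\cong\Z/8$, hence $d_2=0$. You would need to supply this (or an equivalent) argument.

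Second, your resolution of the degree-$2$ extension is too quick. Triviality of the $h_0$-action on the $E_\infty$-page does \emph{not} rule out a hidden extension: $\widetilde\Omega_2^\Spin(X)$ could a priori still be $\Z/4$. The paper rules this out by a different argument using $h_1$ and the relation $2\eta=0$ in $\pi_*\Sph$: if the extension were nonsplit with generator $x$, then $2x$ would be detected by the class in bidegree $(2,1)$ on which $h_1$ acts nontrivially, forcing $\eta\cdot 2x\ne0$, contradicting $2\eta=0$. You should replace the ``$h_0$ is trivial'' assertion with this (or a comparable) argument. Finally, to justify that the $2$-local Adams computation captures all of $\Omega_*^\Dpin$, you also need to note that there is no odd torsion; the paper handles this with a short Maschke-theorem argument that your proposal omits.
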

For any finite abelian group $A$, there is a (noncanonical) isomorphism $A\cong\Hom(A, \U(1))$, so
this also computes $\mho_\Dpin^d(\pt)$ for $0\le d\le 6$, and agrees with the calculations made in Appendix~\ref{App:AHSS}.
%A dpin structure on a smooth manifold $M$ is the data of a spin structure on the orientation double cover of $M$.
%\begin{lem}[{\cite[\S 6.2]{KPT19b}}]
%\label{recast_dpin}
Recall from Sec.~\ref{sec:AHSSintro} that a dpin structure is equivalent to a choice of two real line bundles $L_1, L_2\to M$ and a spin structure on
\begin{equation}
	TM\oplus (L_1\otimes L_2)\oplus (L_2)^{\oplus 3}.
\end{equation}
%``Equivalence'' means there is a space of dpin structures on $M$, and a space of the above data, and a zigzag of
%homotopy equivalences, natural in $M$, relating them.
%
One consequence is that if $\mathit{MTDPin}$ denotes the Thom spectrum for dpin structures, so that
$\pi_k(\mathit{MTDPin})\cong\Omega_k^{\mathrm{DPin}}$, then
\begin{equation}
\label{dpin_as_spin}
	\mathit{MTDPin}\simeq \MTSpin\wedge (B\Z/2\times B\Z/2)^{L_1L_2 + 3L_2 - 4}.
\end{equation}
The second summand, $(B\Z/2\times B\Z/2)^{L_1L_2 + 3L_2 - 4}$, which we denote $X$ to tame the
notation, is the Thom spectrum of the virtual vector bundle
\begin{equation}
\label{thevirt}
	V\coloneqq (L_1\otimes L_2)\oplus (L_2)^{\oplus 3} - \R^4\longrightarrow B\Z/2\times B\Z/2.
\end{equation}
By~\eqref{dpin_as_spin}, $\Omega_k^{\mathrm{DPin}}\cong\widetilde\Omega_k^\Spin(X)$. 

We will compute
$\widetilde\Omega_k^\Spin(X)$ for $0\le k\le 6$ for our $X$ using the Adams spectral sequence, employing a standard trick to
work over $\cA(1) \coloneqq \ang{\Sq^1, \Sq^2}$ rather than the entire Steenrod algebra. For details on how this
works and many worked examples, see Beaudry-Campbell~\cite{BC18}, who carefully explain and summarize how to use
the Adams spectral sequence for these kinds of computations. The idea is that we must determine $\widetilde
H^*(X;\F_2)$ as an $\cA(1)$-module. Then, the $E_2$-page of this Adams spectral sequence is
\begin{equation}
\label{adamsE2}
	E_2^{s,t} = \Ext_{\cA(1)}^{s,t}(\widetilde H^*(X;\F_2), \F_2).
\end{equation}
(Definitions and notation are as in~\cite{BC18}.) 
The spectral sequence converges to $\widetilde{ko}_{t-s}(X)\otimes \hat\Z_{2}$,
where $ko$ denotes connective real $K$-theory and $\hat\Z_{2}$ denotes the $2$-adic integers.
Furthermore, when $t-s\le 7$, $\widetilde{ko}_{t-s}(X)$ is isomorphic to $\widetilde\Omega_{t-s}^\Spin(X)$~\cite{ABP67}.
We will show, for our particular choice of $X$, $\widetilde\Omega_*^\Spin(X)$ lacks torsion for odd primes.
Therefore tensoring it with $\hat\Z_{2}$ does not lose any information. 
(In general, information can be lost when tensoring with $\hat\Z_{2}$, but that information can be computed by other means.)
This allows us to use the spectral sequence above to compute $\widetilde\Omega_{t-s}^\Spin(X)$ in the degrees of our interest.

\begin{proof}[Proof of Theorem~\ref{mainthm}]
First we argue $\widetilde\Omega_*^\Spin(X)$ has no $p$-torsion for odd primes $p$. %, justifying
%Footnote~\ref{odd_caveat}. 
In fact, we will show that if $p$ is an odd prime,
$\widetilde\Omega_*^\Spin(X)\otimes\F_p = 0$. For any finitely generated abelian group $A$, the $p$-torsion
subgroup of $A$ includes into the $p$-torsion subgroup of $A\otimes\F_p$, so this suffices.

By definition,
$\widetilde\Omega_k^\Spin(X)\cong \widetilde H_k(\MTSpin\wedge X)$. Tensoring with $\F_p$, the map
\begin{equation}
	\widetilde H_k(\MTSpin\wedge X)\otimes\F_p\longrightarrow \widetilde H_k(\MTSpin\wedge X;\F_p)
\end{equation}
is injective, by the universal coefficient theorem. The Künneth theorem computes $\widetilde H_*(\MTSpin\wedge
X;\F_p)$ as a sum of tensor products of the form $\widetilde H_i(\MTSpin;\F_p)\otimes\widetilde H_j(X;\F_p)$, so it
suffices to show $\widetilde H_j(X;\F_p)$ vanishes for all $j$. The twisted-coefficients Thom isomorphism tells us
there is a (in this case nontrivial) $\Z[\Z/2\times\Z/2]$-module structure $\widetilde\F_p$ on $\F_p$ such that
\begin{equation}
	\widetilde H_j(X; \F_p)\cong H_j(\Z/2\times\Z/2; \widetilde\F_p).
\end{equation}
Maschke's theorem implies that since $\#(\Z/2\times \Z/2)$ and $p$ are coprime, and since $\widetilde\F_p$ is
$p$-torsion, $H_j(\Z/2\times\Z/2;\widetilde\F_p)$ vanishes in degrees $j > 0$. Using that $0^{\mathrm{th}}$ group
homology is the abelian group of coinvariants, one can check directly that $H_0(\Z/2\times\Z/2; \widetilde\F_p) =
0$ as well. Thus $\widetilde\Omega_*^\Spin(X)$ has no $p$-torsion.
% reduce to 2-primary part!

% reference: Beaudry-Campbell
%	- other places to find worked examples (Campbell, ...?)

% what is dpin, (briefly) why, dpin as a twisted spin structure, so we're interested in the spin bordism of this
% Thom spectrum.

% details about the Adams SS that we use (start with A(1))

% cohomology -- what theorems we use to obtain it, the low-degree cohomology as an A(1)-module, what the picture
% means, writing a program for it
On to the Adams spectral sequence. First we determine $\widetilde H^*(X;\F_2)$. As a graded abelian group, this is
characterized by the Thom isomorphism: if $U\in \widetilde H^0(X;\F_2)$ denotes the Thom class, cup product with
$U$ is an isomorphism
\begin{equation}
	(U\cdot)\colon H^k(B\Z/2\times B\Z/2;\F_2)\overset\cong\longrightarrow \widetilde H^k(X;\F_2).
\end{equation}
There is no degree shift because the virtual vector bundle $V\to B\Z/2\times B\Z/2$ (from~\eqref{thevirt}) has rank
zero. Let $w\coloneqq w_1(L_1)$ and $a\coloneqq w_1(L_2)$ in $H^1(B\Z/2\times B\Z/2;\F_2)$; then
\begin{equation}
	H^*(B\Z/2\times B\Z/2;\F_2)\cong\F_2[w,a].
\end{equation}
The $\cA(1)$-module structure on $\widetilde H^*(X;\F_2)$ is determined by the following rules.
\begin{enumerate}
	\item $\Sq^i(U) = Uw_i(V)$, where $w_i$ denotes the $i^{\mathrm{th}}$ Stiefel-Whitney class. In this case,
	$w_1(V) = w$ and $w_2(V) = wa$.
	\item The Cartan formula determines the Steenrod squares of a product. We only need $\Sq^1$ and $\Sq^2$, for
	which the Cartan formula specializes to
	\begin{subequations}
	\begin{align}
		\Sq^1(xy) &= \Sq^1(x)y + x\Sq^1(y)\\
		\Sq^2(xy) &= \Sq^2(x)y + \Sq^1(x)\Sq^1(y) + x\Sq^2(y).
	\end{align}
	\end{subequations}
	\item From the axiomatic properties of Steenrod squares, $\Sq^1(w) = w^2$, $\Sq^1(a) = a^2$, and $\Sq^2(w) =
	\Sq^2(a) = 0$.
\end{enumerate}
Using these three rules one can determine the action of $\Sq^1$ and $\Sq^2$ on any cohomology class of $X$, as it
is a sum of products of $U$, $w$, and $a$. This is routine, and indeed we used a computer program to make these
calculations. The answer is displayed in Figure~\ref{cohomology_a1}.
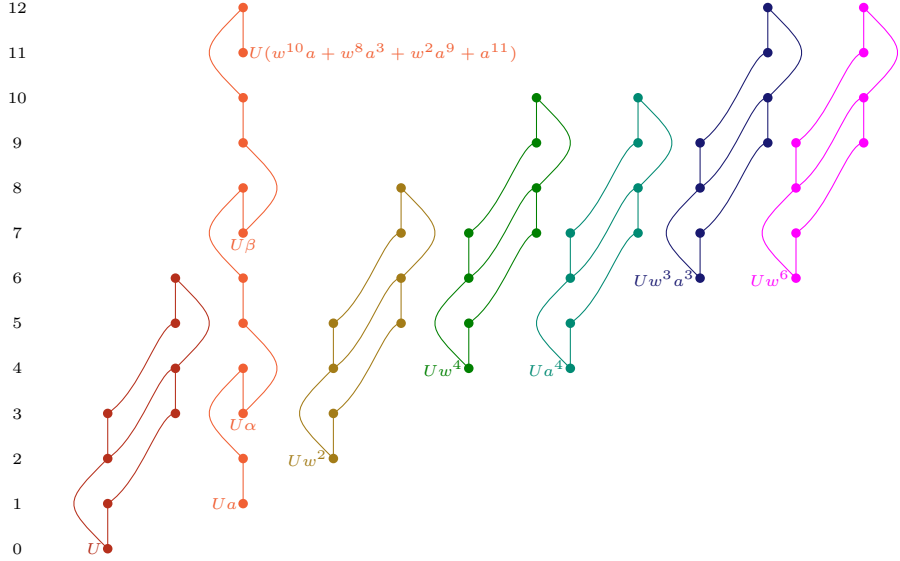
\begin{figure}[h!]
\centering
{\tiny
\begin{tikzpicture}[scale=0.6]
	\foreach \y in {0, ..., 12} {
		\node at (-2, \y) {$\y$};
	}

	 % A(1)
	\begin{scope}[BrickRed]
		\Aone{0}{0}{$U$}
	\end{scope}

	% M
	\begin{scope}[RedOrange]
	\tikzpt{3}{1}{$Ua$}{};
	\foreach \y in {2, ..., 12} {
		\tikzpt{3}{\y}{}{};
	}
	\node[below] at (3, 3) {$U\alpha$};
	\node[below] at (3, 7) {$U\beta$};
	\node[right] at (3, 11) {$U(w^{10}a + w^8a^3 + w^2a^9 + a^{11})$};
	\sqone(3, 1, );
	\sqtwoL(3, 2, );
	\sqone(3, 3, );
	\sqtwoR(3, 3, );
	\sqone(3, 5, );
	\sqtwoL(3, 6, );
	\sqone(3, 7, );
	\sqtwoR(3, 7, );
	\sqone(3, 9, );
	\sqtwoL(3, 10, );
	\sqone(3, 11, );
	\end{scope}

	% A(1)[-2]
	\begin{scope}[Goldenrod!75!black]
	\Aone{5}{2}{$U w^2$}
	\end{scope}

	% A(1)[-4]
	\begin{scope}[Green]
	\Aone{8}{4}{$U w^4$}
	\end{scope}

	% A(1)[-4]
	\begin{scope}[PineGreen]
		\Aone{10.25}{4}{$U a^4$}
	\end{scope}

	% A(1)[-6]
	\begin{scope}[MidnightBlue]
		\Aone{13.125}{6}{$Uw^3a^3$}
	\end{scope}

	% A(1)[-6]
	\begin{scope}[Fuchsia]
		\Aone{15.25}{6}{$Uw^6$}
	\end{scope}
\end{tikzpicture}
}
\caption{This $\cA(1)$-submodule of $\widetilde H^*(X;\F_2)$ contains all elements of degree at most $7$. Each
dot represents an $\F_2$ summand, with its cohomological degree given by its height. The connecting lines, resp.\
curves, indicate an action by $\Sq^1$, resp.\ $\Sq^2$, carrying the lower dot to the upper dot. This
$\cA(1)$-module factors as several different summands; we give each summand a different color. In the generators of
the orange summand, $\alpha\coloneqq w^2a+a^3$ and $\beta\coloneqq w^6a + w^4a^3 + w^2a^5 + a^7$.}
\label{cohomology_a1}
\end{figure}

From this figure, we see that, as an $\cA(1)$-module, $\widetilde H^*(X;\F_2)$ splits into several summands. All summands pictured except the orange summand are isomorphic to shifts of $\cA(1)$. The orange summand, i.e.\ the one that
contains $Ua$, continues above what we draw in Figure~\ref{cohomology_a1} and is isomorphic to the mod $2$
cohomology of the spectrum $\MO(1)$, the Thom spectrum of the tautological line bundle $\sigma\to B\mathrm O(1)$
(see~\cite[Figure 4]{BC18}); therefore we denote that summand by $\textcolor{RedOrange}{\widetilde
H^*(\MO(1))}$.\footnote{Strictly speaking, we have only calculated this summand up to degree $12$, and it could
differ from $\widetilde H^*(\MO(1))$ in larger degrees. This would only affect the $E_2$-page in degrees larger than
we use and display in~\eqref{theSS}, so the calculation is the same in either case.} Specifically,
%\begin{rem}
%We have not fully determined the structure of these summands above degree $8$, so it is possible that, for example,
%the green summands aren't actually isomorphic to $\Sigma^4\cA(1)$. Any such discrepancy would only appear above
%degree $8$, so the minimal resolutions for computing the $E_2$-page would only differ in degrees $t-s \ge 8$, so
%such a discrepancy would not change the $E_2$-page we draw in \eqref{theSS}.
%\end{rem}
\begin{equation}
	\widetilde H^*(X;\F_2)\cong \textcolor{BrickRed}{\cA(1)}\oplus
		\textcolor{RedOrange}{\widetilde H^*(\MO(1))}\oplus
		\textcolor{Goldenrod!75!black}{\Sigma^2\cA(1)}\oplus
		\textcolor{Green}{\Sigma^4\cA(1)}\oplus
		\textcolor{PineGreen}{\Sigma^4\cA(1)}\oplus
		\textcolor{MidnightBlue}{\Sigma^6\cA(1)}\oplus
		\textcolor{Fuchsia}{\Sigma^6\cA(1)} \oplus
		P,
\end{equation}
where $P$ has no elements of degree less than $8$. Hence, below degree $t-s = 8$, the $E_2$-page of the 
Adams spectral sequence~\eqref{adamsE2} is the direct sum of the $E_2$-pages of the summands other than $P$, and
these have all been calculated. For $\Sigma^k\cA(1)$, there is a single $\F_2$ summand in bidegree $s = 0$, $t =
k$; for $\textcolor{RedOrange}{\widetilde H^*(\MO(1))}$, see~\cite[Example 6.3]{Cam17}. Putting these together, the
$E_2$-page for this spectral sequence is
\begin{sseqdata}[name=dpinSS, classes=fill, scale=0.7, xrange={0}{6}, yrange={0}{3}]
\class[BrickRed](0, 0)
\begin{scope}[RedOrange]
	\class(1, 0)
	\class(2, 1)
	\foreach \y in {0, ..., 2} {
		\class(3, \y)
	}
	\structline(1, 0)(2, 1)
	\structline(2, 1)(3, 2)
	\structline(3, 0)(3, 1)
	\structline(3, 1)(3, 2)
\end{scope}
\class[Goldenrod!75!black](2, 0)
\class[Green](4, 0)
\class[PineGreen](4, 0)
\class[MidnightBlue](6, 0)
\class[Fuchsia](6, 0)
\end{sseqdata}
\begin{equation}
\label{theSS}
\begin{gathered}
\printpage[name=dpinSS, page=2]
\end{gathered}
\end{equation}
In this diagram, the $x$-axis is $t-s$ and the $y$-axis is $s$. Therefore a differential $d_r$ moves one degree to
the left and $r$ degrees upwards. Each dot represents an $\F_2$ summand of the $E_2$-page; the different colors
indicate which summands of $\widetilde H^*(X;\F_2)$ are responsible for which data on the $E_2$-page. The
$E_2$-page carries an action by $\Ext_{\cA(1)}^{*,*}(\F_2, \F_2)$. The vertical lines indicate action by an element
$h_0\in\Ext_{\cA(1)}^{1,1}(\F_2,\F_2)$, and the diagonal lines indicate action by
$h_1\in\Ext_{\cA(1)}^{2,1}(\F_2,\F_2)$; see~\cite[Example 4.1.1]{BC18} for more on $h_0$ and $h_1$. All
differentials are $h_0$- and $h_1$-linear, i.e.\ $d_r(h_ix) = h_id_r(x)$ ($i = 0,1$).
%In particular, if $h_ix = 0$
%and $h_iy \ne 0$, then $d_r(x)\ne y$. This forces all differentials in the range shown to vanish.
In this example, the only differential within the range displayed in~\eqref{theSS} that could be nonzero is the
$d_2$ from bidegree $(4, 0)$ to bidegree $(3, 2)$. Often, $h_0$- and $h_1$-linearity allow one to deduce that
differentials vanish, but this does not provide any information about this $d_2$, so we have to do something
different.

There will also be a question of extension problems: the line $t -s = k$ is the associated graded of a filtration,
possibly nontrivial, on $\widetilde\Omega_k^\Spin(X)$. This in particular introduces an ambiguity in
$\widetilde\Omega_2^\Spin(X)$: it could either be $\Z/2\oplus\Z/2$ or $\Z/4$. Fortunately, when we show this $d_2$
vanishes in Corollary~\ref{resolve_d2}, we will also be able to resolve this ambiguity.

Recall that a dpin structure on $M$ is data of two line bundles $L_1, L_2\to M$ and a spin structure on $TM\oplus
(L_1\otimes L_2)\oplus (L_2)^{\oplus 3}$. Computing $w_1$ and $w_2$ of this bundle with the Whitney sum formula
shows that if $(M, L_1, L_2)$ has a dpin structure, $w_1(M) = w_1(L_1)$ and $w_2(M) = w_1(M)(w_1(M) + w_1(L_2))$.
\begin{lem}
The assignment from $(M, L_1, L_2)$ to a smooth representative of the Poincaré dual of $w_1(L_2)\in H^1(M;\Z/2)$
induces a map $D_{L_2}\colon \widetilde\Omega_d^\Spin(X)\to \Omega_{d-1}^{\Pin^-}$.
\end{lem}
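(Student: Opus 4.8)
The plan is to realise $D_{L_2}$ geometrically, as a version of the classical Smith homomorphism, and then to check that it descends to bordism classes. First I would fix the dictionary between the two sides. By the description of twisted spin structures recalled in Section~\ref{sec:AHSSintro} and used above, together with the identification $\mathit{MTDPin}\simeq\MTSpin\wedge X$ from~\eqref{dpin_as_spin}, a class in $\widetilde\Omega_d^\Spin(X)$ is represented by a closed $d$-manifold $M$ with two real line bundles $L_1,L_2\to M$ (classified by $w,a\in H^1(M;\mathbb F_2)$) and a spin structure $\mathfrak s$ on $TM\oplus(L_1\otimes L_2)\oplus L_2^{\oplus 3}$; taking $w_1$ of this bundle forces $w_1(TM)=w_1(L_1)$, i.e.\ $\det TM\cong L_1$. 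On the other side, a class in $\Omega_{d-1}^{\Pin^-}$ is a closed $(d-1)$-manifold $N$ equipped with a spin structure on $TN\oplus\det TN$.

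Given a representative $(M,L_1,L_2,\mathfrak s)$, I would pick a smooth section of $L_2$ transverse to the zero section and let $N\subset M$ be its zero locus, a closed codimension-one submanifold Poincaré dual to $w_1(L_2)$ with normal bundle $\nu_{N/M}\cong L_2|_N$, so that $TM|_N\cong TN\oplus L_2|_N$. Restricting $\mathfrak s$ to $N$ and using this splitting, together with the fact that $\det TM\cong L_1$ forces $\det TN\cong L_1|_N\otimes L_2|_N=(L_1\otimes L_2)|_N$, gives
\begin{equation}
  \bigl(TM\oplus(L_1\otimes L_2)\oplus L_2^{\oplus 3}\bigr)\big|_N \;\cong\; TN\oplus\det TN\oplus (L_2|_N)^{\oplus 4}.
\end{equation}
The rank-four bundle $(L_2|_N)^{\oplus 4}$ is the realification of $(L_2|_N\otimes_{\mathbb R}\mathbb C)^{\oplus 2}$, hence carries a canonical spin structure coming from the tautological square root $L_2|_N\otimes_{\mathbb R}\mathbb C$ of its complex determinant. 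Subtracting this (spin structures on a stable sum, with one summand's spin structure fixed, form a torsor over $H^1(N;\mathbb F_2)$, so there is a well-defined difference) leaves a spin structure on $TN\oplus\det TN$, that is, a $\Pin^-$ structure on $N$. I then set $D_{L_2}([M,L_1,L_2,\mathfrak s])=[N]\in\Omega_{d-1}^{\Pin^-}$.

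Well-definedness I would obtain by running the same construction one dimension higher. Two transverse sections of $L_2$ over $M$ are joined by a section of the pullback of $L_2$ to $M\times[0,1]$ transverse to zero and to the boundary; its zero locus is, by the displayed splitting applied over a manifold with boundary, a $\Pin^-$ bordism between the two resulting copies of $N$. If $(M,L_1,L_2,\mathfrak s)=\partial(M',L_1',L_2',\mathfrak s')$ as dpin manifolds, I would extend the chosen section to a transverse section of $L_2'$ over $M'$; its zero locus $N'$ is a compact $\Pin^-$ manifold with $\partial N'=N$, so $[N]=0$. Additivity under disjoint union is immediate. All of this is packaged conceptually by smashing the Euler-class cofiber sequence for $L_2$ over $B\bZ_2\times B\bZ_2$ with $\MTSpin$, which exhibits $D_{L_2}$ as the map on homotopy groups induced by a map of Thom spectra (the Smith map for $L_2$), and in particular shows it is a homomorphism.

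The step I expect to be the main obstacle is not any single computation but making the auxiliary identifications natural enough to extend over the bordisms used above: one must know that $\nu_{N/M}\cong L_2|_N$, that $(L_1\otimes L_2)|_N\cong\det TN$, and that the canonical spin structure on $(L_2|_N)^{\oplus 4}$ all vary continuously in families, rather than merely holding up to non-canonical isomorphism. This is exactly what the Thom-spectrum reformulation handles, and it is also where one must fix conventions carefully, in particular the Kirby--Taylor dictionary identifying a $\Pin^-$ structure on $N$ with a spin structure on $TN\oplus\det TN$.
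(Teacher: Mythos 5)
Your proof is correct and reaches the same conclusion as the paper's, but it takes a more hands-on, bundle-theoretic route. The paper's proof also realizes $N$ as a smooth Poincaré dual of $w_1(L_2)$ with normal bundle $L_2|_N$, but then computes $w_1(N)$ and $w_2(N)$ via the Whitney sum formula, checks the $\Pin^-$ obstruction $w_2+w_1^2$ vanishes, and finally packages the DPin structure as nullhomotopies of the maps $w_1(M)+w_1(L_1)$ and $w_2(M)+w_1(M)(w_1(M)+w_1(L_2))$ to Eilenberg--MacLane spaces, arguing these restrict to the desired nullhomotopy of $w_2+w_1^2$ on $N$. You instead restrict the spin bundle $TM\oplus(L_1\otimes L_2)\oplus L_2^{\oplus 3}$ directly, rewrite it over $N$ as $TN\oplus\det TN\oplus(L_2|_N)^{\oplus 4}$, and subtract a chosen spin structure on $(L_2|_N)^{\oplus 4}$ (using its complex-bundle presentation); this is more explicit and avoids a separate obstruction check, at the cost of having to verify a canonical spin structure on $L^{\oplus 4}$ and that "subtraction" of spin structures is well-defined (both of which you correctly flag and handle). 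The paper treats bordism invariance with the words "as usual," whereas you spell out the one-dimension-higher version of the construction. Both proofs invoke the Smith-homomorphism/Thom-spectrum reformulation as the conceptual underpinning; the paper cites it, you sketch how the Euler cofiber sequence for $L_2$ encodes the map. The two routes are essentially equivalent, and yours is arguably the more self-contained.
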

\begin{proof}
These kinds of arguments are standard in bordism theory (e.g.\ \cite{KirbyTaylor, Hason:2019akw,Cordova:2019wpi}), so we will be succinct. Let
$i\colon N\inj M$ be a smooth representative for the Poincaré dual of $w_1(L_2)$ and $\nu\to N$ be the normal
bundle; then $w(\nu) = 1 + w_1(L_2)$. Using the short exact sequence $0\to TN\to\nu\to TM|_N\to 0$ and the Whitney
sum formula, we get
\begin{subequations}
\label{w1w2ids}
\begin{align}
\label{w1w2id1}
	w_1(N) &= i^*(w_1(M) + w_1(L_2))\\
	w_2(N) &= i^*(w_2(M)) + w_1(N)w_1(\nu)\\
			&= i^*(w_1(M)^2 + w_1(M)w_1(L_2) + (w_1(M) + w_1(L_2))w_1(L_2))\\
			&= i^*(w_1(M)^2 + w_1(L_2)^2)\\
			&= w_1(N)^2,
\end{align}
\end{subequations}
so $N$ admits a \pinm structure; a choice of \pinm structure amounts to the additional data of a nullhomotopy of
the map $w_2 + w_1^2\colon N\to K(\Z/2, 2)$. A choice of dpin structure on $(M, L_1, L_2)$ includes (up to a
contractible choice) data of nullhomotopies of the maps $w_1(M) + w_1(L_1)\colon M\to K(\Z/2, 1)$ and $w_2(M) +
w_1(M)(w_1(M) + w_1(L_2)) \colon M\to K(\Z/2, 2)$; via~\eqref{w1w2id1}, this induces a nullhomotopy of the map
$w_1(N) + i^*(w_1(M) + w_1(L_2))\colon N\to K(\Z/2, 1)$, which then induces a nullhomotopy of $w_2 + w_1^2\colon
N\to K(\Z/2, 2)$ via the rest of~\eqref{w1w2ids}. The proof of bordism invariance of this construction is as usual.
\end{proof}
$D_{L_2}$ is an example of a \term{Smith homomorphism}. For a general discussion of Smith homomorphisms, see e.g.~\cite[\S 4]{Hason:2019akw}.
\begin{lem}
\label{hits_a_generator}
The image of $D_{L_2}\colon \widetilde\Omega_3^\Spin(X)\to\Omega_2^{\Pin^-}\cong\Z/8$ contains a generator of
$\Omega_2^{\Pin^-}$.
\end{lem}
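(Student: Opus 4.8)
The plan is to prove the lemma by exhibiting an explicit closed dpin $3$-manifold whose image under the Smith homomorphism $D_{L_2}$ is $\RP^2$ equipped with one of its two $\Pin^-$ structures, and then invoking the already-established fact that either $\Pin^-$ structure on $\RP^2$ generates $\Omega_2^{\Pin^-}\cong\Z/8$. Since the lemma only asks that the image \emph{contain} a generator, a single well-chosen representative suffices.

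Concretely, I would take $M=\RP^3$, let $L_1$ be the trivial real line bundle, and let $L_2$ be the tautological (nontrivial) real line bundle, so that $w_1(L_2)=x$ generates $H^1(\RP^3;\Z/2)$. To see that $(M,L_1,L_2)$ admits a dpin structure one checks that $TM\oplus(L_1\otimes L_2)\oplus L_2^{\oplus 3}=T\RP^3\oplus L_2^{\oplus 4}$ is spin: $\RP^3$ is orientable and spin, so $w_1(T\RP^3)=w_2(T\RP^3)=0$, while $w_1(L_2^{\oplus 4})=4x=0$ and $w_2(L_2^{\oplus 4})=\binom{4}{2}x^2=6x^2=0$ mod $2$, so the Whitney sum formula gives $w_1=w_2=0$ and a spin structure exists; fixing one produces a dpin structure and hence a class $[\RP^3]\in\widetilde\Omega_3^\Spin(X)$. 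Next I would compute $D_{L_2}[\RP^3]$: a smooth representative of the Poincar\'e dual of $w_1(L_2)=x$ is the standard inclusion $\RP^2\inj\RP^3$, whose normal bundle has total Stiefel--Whitney class $(1+x)^4/(1+x)^3=1+x|_{\RP^2}$ and is therefore the restriction of $L_2$. By the preceding lemma this $\RP^2$ carries an induced $\Pin^-$ structure, and the identities used there indeed give $w_1(\RP^2)=i^*(w_1(M)+w_1(L_2))=i^*x$ and $w_2(\RP^2)=w_1(\RP^2)^2$, consistent with $\RP^2$ being \pinm. Thus $D_{L_2}[\RP^3]=[\RP^2,\sigma]$ for some $\Pin^-$ structure $\sigma$.

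Finally I would conclude using the computations of Section~\ref{sec:Pin-SPT}: the ABK invariant gives an isomorphism $\Omega_2^{\Pin^-}\xrightarrow{\ \sim\ }\Z/8$, and the two $\Pin^-$ structures on $\RP^2$ have $\ABK=\pm1$, cf.~\eqref{eq:ABKRP2}; since both $1$ and $-1$ are units modulo $8$, $[\RP^2,\sigma]$ is a generator of $\Omega_2^{\Pin^-}$, so the image of $D_{L_2}$ contains a generator. I do not expect any serious obstacle here: the one point needing care is the normal-bundle bookkeeping identifying the induced $\Pin^-$ structure on the dual $\RP^2$, and that is exactly the content already packaged in the preceding lemma, so in practice it reduces to the Whitney-sum computations above. (One could replace $\RP^3$ by any closed orientable spin $3$-manifold with a class in $H^1(\,\cdot\,;\Z/2)$ Poincar\'e dual to an $\RP^2$, but $\RP^3$ is the most economical choice.)
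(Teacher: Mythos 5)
Your proposal is correct and follows essentially the same route as the paper: both take $M=\RP^3$ with $L_1$ trivial and $L_2$ tautological, verify the dpin condition by a Whitney-sum computation, and identify the Poincaré dual of $w_1(L_2)$ with the standard $\RP^2\hookrightarrow\RP^3$. The only cosmetic difference is that you close by invoking the $\ABK$ computation of Section~\ref{sec:Pin-SPT} to see that $\RP^2$ generates $\Omega_2^{\Pin^-}$, whereas the paper cites Kirby--Taylor for that fact.
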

\begin{proof}
Let $a\in H^1(\RP^3;\F_2)$ be the generator. Let $L_1\to\RP^3$ be trivial and $L_2\to\RP^3$ be the tautological
bundle, so $w_1(L_1) = 0$ and $w_1(L_2) = a$. Since $w(\RP^3) = (1+a)^4 = 1+a^4 = 0$, $w_1(\RP^3) = 0 = w_1(L_1)$
and $w_2(\RP^3) = 0 = w_1(\RP^3)(w_1(\RP^3)w_1(L_2))$. Hence $(\RP^3, L_1, L_2)$ admits a dpin structure; choose
one.

The standard embedding $\RP^2\inj\RP^3$ represents the homology class Poincaré dual to $a$, so $D_{L_2}(\RP^3, L_1,
L_2)$ is the \pinm bordism class of $\RP^2$ with one of its two \pinm structures. Kirby-Taylor~\cite[\S 3]{KirbyTaylor}
describe how to show that $\RP^2$ with either choice of \pinm structure generates $\Omega_2^{\Pin^-}$.
\end{proof}
\begin{cor}\hfill
\label{resolve_d2}
\begin{enumerate}
	\item\label{3is8} $\widetilde\Omega_3^\Spin(X)\cong\Z/8$, so the $d_2$ noted above vanishes.
	\item\label{split_degree_2} The extension
	\begin{equation}
	\label{possible_hidden_extension}
		\xymatrix{
			0\ar[r] & \textcolor{RedOrange}{\Z/2}\ar[r] & \widetilde\Omega_2^\Spin(X)\ar[r] &
			\textcolor{Goldenrod!75!black}{\Z/2}\ar[r] & 0,
		}
	\end{equation}
	which comes from the Adams filtration on $\widetilde\Omega_2^\Spin(X)$, splits.
\end{enumerate}
\end{cor}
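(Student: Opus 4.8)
The plan is to settle both items by combining the structure of the $E_2$-page in \eqref{theSS} with the Smith homomorphism $D_{L_2}$ of Lemma~\ref{hits_a_generator} and with the two $\F_2$-valued dpin bordism invariants $[\Sigma]\mapsto\int_\Sigma w_1(T\Sigma)^2$ and $[\Sigma]\mapsto\Arf(\hat\Sigma)$.

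For item~(\ref{3is8}): I would read off from \eqref{theSS} that in total degree $t-s=3$ the only $\F_2$-summands are the three classes coming from the $\widetilde H^*(\MO(1))$-summand, arranged as an $h_0$-tower of height three, and that (as the text records) the only potentially-nonzero differential in the displayed range is the $d_2$ from bidegree $(4,0)$ to bidegree $(3,2)$. If that $d_2$ were nonzero it would kill the top class of this tower, so the $E_\infty$-page in total degree $3$ would consist of only two $\F_2$'s and $\widetilde\Omega_3^\Spin(X)$ would have order at most $4$; but Lemma~\ref{hits_a_generator} shows that $D_{L_2}\colon\widetilde\Omega_3^\Spin(X)\to\Omega_2^{\Pin^-}\cong\Z/8$ has image containing a generator of a cyclic group, hence is onto, so $|\widetilde\Omega_3^\Spin(X)|\ge 8$. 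This contradiction forces $d_2=0$. Then the associated graded of $\widetilde\Omega_3^\Spin(X)$ is the full height-three $h_0$-tower, so $\widetilde\Omega_3^\Spin(X)$ is a group of order $8$ which surjects onto $\Z/8$, hence $\widetilde\Omega_3^\Spin(X)\cong\Z/8$.

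For item~(\ref{split_degree_2}): I would first note that, since on the $E_2$-page of \eqref{theSS} every potential differential into or out of total degree $2$ has trivial source or target, one gets $|\widetilde\Omega_2^\Spin(X)|=|\Omega_2^\Dpin|=4$, so the only question is whether the extension \eqref{possible_hidden_extension} is $\Z/2\oplus\Z/2$ or $\Z/4$. Both $\int_\Sigma w_1^2$ and $\Arf(\hat\Sigma)$ are additive $\F_2$-valued dpin bordism invariants — the orientation double cover $\hat\Sigma$ of a dpin surface carries a canonical spin structure by the definition of dpin structure, and $\Arf$ of that spin surface is a spin bordism invariant — so together they define a homomorphism $\Phi\colon\Omega_2^\Dpin\to\Z/2\oplus\Z/2$. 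I would then exhibit preimages of the two generators of the target. Taking $\Sigma=\RP^2$ with $L_1$ the tautological bundle and $L_2$ trivial, the dpin datum reduces to a spin structure on $T\RP^2\oplus\det(T\RP^2)$, i.e.\ a \pinm structure, of which $\RP^2$ has two; here $\int w_1^2=1$ while $\hat\Sigma=S^2$ has $\Arf=0$, so $\Phi(\RP^2)=(1,0)$. Taking $\Sigma=K_2$ with $L_1=L_2=L$ a line bundle with $w_1(L)=w_1(K_2)$, then $L\otimes L$ is trivial and the dpin datum reduces to a $\pin^+$ structure, of which $K_2$ has four; here $\int_{K_2}w_1^2=\chi(K_2)\equiv 0$, and by the computation in Section~\ref{sec:Pin+SPT} (where the four $\pin^+$ structures on $K_2$ are assigned $(-1)^{\Arf(\hat\Sigma)}=1,1,-1,-1$) we may choose the structure with $\Arf(\hat\Sigma)=1$, so $\Phi(K_2)=(0,1)$. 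Hence $\Phi$ is onto, and as both groups have order $4$ it is an isomorphism; thus $\Omega_2^\Dpin\cong\Z/2\oplus\Z/2$, and in particular \eqref{possible_hidden_extension} splits.

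I expect the main work to be bookkeeping rather than conceptual: one must check that each chosen triple $(M,L_1,L_2)$ genuinely satisfies the dpin conditions $w_1(M)=w_1(L_1)$ and $w_2(M)=w_1(M)\bigl(w_1(M)+w_1(L_2)\bigr)$, and verify additivity and bordism invariance of the two invariants on $\Omega_*^\Dpin$; the one non-formal input is the existence of a dpin structure on $K_2$ whose double-cover spin structure is odd, which is exactly the content of the list $1,1,-1,-1$ recalled from Section~\ref{sec:Pin+SPT}. An alternative, less self-contained route to item~(\ref{split_degree_2}) is to invoke Appendix~\ref{App:AHSS}, where $\mho^2_\Dpin(\pt)\cong(\Z/2)^2$; since $\Omega_2^\Dpin$ is finite, $\mho^2_\Dpin(\pt)\cong\Hom(\Omega_2^\Dpin,\U(1))\cong\Omega_2^\Dpin$, giving $\Omega_2^\Dpin\cong(\Z/2)^2$ directly.
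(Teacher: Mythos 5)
Your proof of item~(\ref{3is8}) is essentially the paper's argument: both identify the height-three $h_0$-tower in total degree $3$ as the only data, observe that a nonzero $d_2\colon(4,0)\to(3,2)$ would cap $|\widetilde\Omega_3^\Spin(X)|$ at $4$, and use Lemma~\ref{hits_a_generator} to rule this out, concluding $\Z/8$.

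Your proof of item~(\ref{split_degree_2}) is correct but takes a genuinely different route. The paper stays inside the Adams machinery: it argues that if the extension did not split then a generator $x$ of $\widetilde\Omega_2^\Spin(X)\cong\Z/4$ would have $\eta\cdot 2x\ne 0$ (detected by the visible $h_1$ joining bidegrees $(2,1)$ and $(3,2)$), contradicting $2\eta=0$ in $\pi_*\Sph$. You instead construct a surjection $\Phi\colon\Omega_2^\Dpin\to(\Z/2)^2$ from the two explicit $\F_2$-valued invariants $\int w_1^2$ and $\Arf(\hat\Sigma)$, hit the two generators by $\RP^2$ (with its \pinm{}-reduced dpin structure) and $K_2$ (with a $\Pin^+$-reduced dpin structure whose double-cover spin structure is odd), and conclude by counting orders. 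Both arguments are valid; yours has the advantage of producing explicit geometric generators of $\Omega_2^\Dpin$, which is genuinely useful elsewhere in the paper, while the paper's $h_1$-linearity argument is more self-contained within the Adams framework and requires no outside bordism-invariance input.

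One small imprecision in your write-up, worth tightening: the orientation double cover $\hat\Sigma$ of a dpin surface does not literally carry ``a canonical spin structure.'' The pullback of the dpin structure to $\hat\Sigma$ is a $\Spin\times\Z_2$ structure $(\sigma,a)$, and the deck transformation of $\hat\Sigma\to\Sigma$ exchanges the two spin structures $\sigma_L=\sigma$ and $\sigma_R=\sigma+a$, which therefore have equal $\Arf$ invariants; it is this common value that defines $\Arf(\hat\Sigma)$. Similarly, additivity and bordism-invariance of $\Arf(\hat\Sigma)$ on $\Omega_*^\Dpin$ follow by lifting a dpin nullbordism to a spin nullbordism of $(\hat\Sigma,\sigma_L)$, but you should state this rather than appeal to a canonical spin structure that does not exist. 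With that clarified, the remaining verifications you flag --- that $(\RP^2,\,L_1=\text{taut},\,L_2=\text{triv})$ and $(K_2,\,L_1=L_2=\det TK_2)$ satisfy $w_1(M)=w_1(L_1)$ and $w_2(M)=w_1(M)(w_1(M)+w_1(L_2))$, and the value $\Arf(\hat K_2)=1$ for a suitable $\Pin^+$ structure from Section~\ref{sec:Pin+SPT} --- all check out.
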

\begin{proof}
For~\eqref{3is8}, let $\overline x,\overline y$ be elements on the $E_\infty$-page, i.e.\ elements of the
associated graded of the Adams filtration. It is a general fact about the Adams spectral sequence that if
$h_0\overline x = \overline y$, then there are preimages $x,y\in\widetilde\Omega_*^\Spin(X)$ of $\overline x$,
resp.\ $\overline y$, such that $2x = y$. For example, supposing the $d_2$ of interest were nonzero, the line $t-s
= 3$ on the $E_\infty$-page (i.e.\ the associated graded of $\widetilde\Omega_2^\Spin(X)$) would contain exactly
two $\Z/2$ summands linked by an $h_0$; hence there would be nonzero $x_1,x_2\in\widetilde\Omega_3^\Spin(X)$ with
$x_1 = 2x_2$, so $\widetilde\Omega_3^\Spin(X)\cong\Z/4$. On the other hand, if $d_2 = 0$, there would be three
$\Z/2$ summands linked by $h_0$s, so there would be nonzero $x_1,x_2\in\widetilde\Omega_3^\Spin(X)$ with $x_1 =
4x_2$, and hence $\widetilde\Omega_3^\Spin(X)$ would be $\Z/8$. That is, $\widetilde\Omega_3^\Spin(X)$ is
isomorphic to either $\Z/8$, if the $d_2$ in question vanishes, or $\Z/4$, if that $d_2$ does not vanish.
Lemma~\ref{hits_a_generator} says $\widetilde\Omega_3^\Spin(X)$ admits a surjective map to $\Z/8$, so $\Z/4$ does
not work.

On to~\eqref{split_degree_2}. Like in the above case with $h_0$, it is a general fact about the Adams spectral
sequence that if $h_1\overline x = \overline y$, then one can choose preimages $x$ and $y$ in
$\widetilde\Omega_*^\Spin(X)$ such that $\eta\cdot x = y$, where $\eta$ is the generator of $\pi_1\Sph\cong\Z/2$.
(Concretely, if $x$ is the dpin bordism class of some manifold $M$, then $\eta\cdot x$ is the bordism class of
$S^1\times M$, where $S^1$ has the dpin structure induced from the nonbounding framing.)

If the extension in~\eqref{possible_hidden_extension} did not split, then $\widetilde\Omega_2^\Spin(X)$ would be
$\Z/4$ rather than $\Z/2\oplus\Z/2$. However, we can rule this out: suppose it were $\Z/4$, and let $x$ be a
generator. Then the image of $x$ in the associated graded of $\widetilde\Omega_2^\Spin(X)$ (i.e.\ the $t-s=2$ line
of the Adams $E_\infty$-page) is the nontrivial element of the yellow $\textcolor{Goldenrod!75!black}{\Z/2}$
summand in bidegree $(2,0)$, and the image of $2x$ is the nonzero element of the orange
$\textcolor{RedOrange}{\Z/2}$ summand in bidegree $(2,1)$. The $h_1$-action carries this to the nonzero element of
the orange $\textcolor{RedOrange}{\Z/2}$ summand in bidegree $(3,2)$, so $\eta\cdot 2x \ne 0$. Since $2\eta = 0$,
however, this is a contradiction, forcing $\widetilde\Omega_2^\Spin(X)\cong\Z/2\oplus\Z/2$.
\end{proof}
There can be no more nontrivial differentials or hidden extensions in the range shown in~\eqref{theSS}, so we are
done.
\end{proof}

\bibliographystyle{ytphys}
\baselineskip=.95\baselineskip
\bibliography{ref}

\end{document}